\setlist{nolistsep}
\title{Optimal Assembly for High Throughput Shotgun Sequencing\thanks{
The authors thank Yun Song, Lior Pachter, Sharon Aviran, and Serafim Batzoglou for stimulating discussions. 
This work is supported by the Center for Science of Information (CSoI), an NSF Science and Technology Center, under grant agreement CCF-0939370. M. Bresler is also supported by NSF grant DBI-0846015.}}
\author{Guy Bresler} \author{Ma'ayan Bresler} \author{David Tse\thanks{Author names are in alphabetical order.}}
\affil{Dept. of EECS, UC Berkeley. Email: \textsf{\{gbresler,mbresler,dtse\}@eecs.berkeley.edu}}
\begin{document}
\date{}
\onecolumn
\maketitle
\begin{abstract}
We present a framework for the design of optimal assembly algorithms for shotgun sequencing under the criterion of complete reconstruction. We derive a lower bound on the read length and the coverage depth required for reconstruction in terms of the repeat statistics of the genome. Building on earlier works, we design a de Brujin graph based assembly algorithm which can achieve very close to the lower bound for repeat statistics of a wide range of sequenced genomes, including the GAGE datasets. The results are based on a set of necessary and sufficient conditions on the DNA sequence and the reads for reconstruction. The conditions can be viewed as the shotgun sequencing analogue  of Ukkonen-Pevzner's necessary and sufficient conditions for Sequencing by Hybridization.
\end{abstract}
\thispagestyle{empty}

\clearpage

\twocolumn
\section{Introduction}

\setcounter{page}{1}
\label{sec:introduction}
\vspace{-1mm}

\subsection{Problem statement}

DNA sequencing is the basic workhorse of modern day biology and medicine. Since the sequencing of the Human Reference Genome ten years ago, there has been an explosive advance in sequencing technology, resulting in several orders of magnitude increase in throughput and decrease in cost. Multiple ``next-generation" sequencing platforms  have emerged.  All of them are based on the whole-genome shotgun sequencing method, which entails two steps. First, many short reads are extracted from random locations on the DNA sequence, with the length, number, and error rates of the reads depending on the particular sequencing platform. Second, the reads are assembled to reconstruct the original DNA sequence. 

%The main computational task in shotgun sequencing is the assembly of the reads. Since the number of reads can be up to several billion, this computational task is highly nontrivial. Over the years, many assembly algorithms have been proposed to solve this problem \cite{wiki:assembly}, and computational complexity is the main metric driving their design. Since the general DNA assembly problem is known to be NP-hard \cite{Myers???}, the design of many  of these algorithms are based on heuristic considerations,  and some of them are tailored to specific technologies or specific applications. 

Assembly of the reads is a major algorithmic challenge, and over the years dozens of assembly algorithms have been proposed to solve this problem \cite{wiki:assembly}. 
Nevertheless, the assembly problem is far from solved, and it is not clear how to compare algorithms nor where improvement might be possible. 
%%Miller et al. \cite{MKS10}, who note
%Despite the fact that each new published assembly algorithm demonstrates superior performance relative to previous algorithms on a specific dataset. 
The difficulty of comparing algorithms is evidenced by the recent assembly evaluations Assemblathon 1 \cite{earl2011assemblathon} and GAGE \cite{Sal11}, where which assembler is ``best" depends on the particular dataset as well as the performance metric used.
In part this is a consequence of metrics for partial assemblies: there is an inherent tradeoff between larger contiguous fragments (contigs) and fewer mistakes in merging contigs (misjoins).
% The performance metrics often reward the output of large contiguous fragments (contigs), but large contigs are often the result of a mistake in assembly. 
% Can site M. Pop "forensics" paper, and Parra Bradnam Ning Nucleic Acids Research, and Alkan paper for 20% of assembly missing.
But more fundamentally,
independent of the metric, performance depends critically on the dataset, i.e. length, number, and quality of the reads, as well as the complexity of the genome sequence. 
With an eye towards the near future, we seek to understand the interplay between these factors by using the intuitive and unambiguous metric of \emph{complete} reconstruction\footnote{The notion of complete  reconstruction can be thought of as a mathematical idealization of the notion of ``finishing" a sequencing project as defined by the National Human Genome Research Institute \cite{finishing}, where finishing a chromosome requires at least 95\% of the chromosome to be represented by a contiguous sequence.}. Note that this objective of reconstructing the original DNA sequence from the reads
contrasts with the many \emph{optimization-based} formulations of assembly, such as shortest common superstring (SCS) \cite{KM93}, maximum-likelihood \cite{Mye95}, \cite{medvedev2009maximum}, and various graph-based formulations \cite{PTW01}, \cite{Mye05}. When solving one of these alternative formulations, there is no guarantee that the optimal solution is indeed the original sequence.

Given the goal of complete reconstruction,  the most basic questions are 1) {\bf feasibility}: given a set of reads, is it \emph{possible} to reconstruct the original sequence?  2) {\bf optimality}: which \emph{algorithms} can successfully reconstruct whenever it is feasible to reconstruct?  
The feasibility question is a measure of the intrinsic {\em information} each read provides about the DNA sequence, and for given sequence statistics depends on characteristics of the sequencing technology such as read length and noise statistics. As such, it can provide an algorithm-independent basis for evaluating the efficiency of a sequencing technology. Equally important, algorithms can be evaluated on their relative read length and data requirements, and compared against the fundamental limit.

In studying these questions, we consider the most basic shotgun sequencing model where $N$ noiseless reads\footnote{Reads are thus exact subsequences of the DNA.} of a fixed length $L$ base pairs are uniformly and independently drawn from a DNA sequence of length $G$. 
%%By assumption the exact sequence to be assembled is unknown \emph{a priori}, and in the Bayesian tradition we capture the uncertainty in the sequence %%through a probability distribution over possible sequences. 
%%We make the mild symmetry assumption that sequences related by certain simple transformations have similar prior probabilities. %(see \cite{BBT12} for more discussion). 
%%The optimal reconstruction is therefore given by the maximum a posteriori rule;  maximum-likelihood is a special case resulting from a particular choice of %%prior distribution.   
In this statistical model, feasibility is rephrased as the question of whether, for given sequence statistics, the correct sequence can be reconstructed with probability $1-\eps$ when $N$ reads of length $L$ are sampled from the genome.
We note that answering the feasibility question of whether each $N,L$ pair is sufficient to reconstruct is equivalent to finding the minimum required $N$ (or the \emph{coverage depth} $c=NL/G$) as a function of $L$.

%\texttt{\color{red} FIXME: same repeat statistics is not correct, needs to be invariant to Ukkonen transformations and these slightly change the statistics}
%The answer to whether assembly is feasible or not depends on both the read data and on the complexity of the sequence.

A lower bound on the minimum coverage depth needed was obtained by Lander and Waterman \cite{LW88}. Their lower bound $\clw = \clw(L,\eps)$ is the minimum number of randomly located reads needed to cover the entire DNA sequence with a given target success probability $1-\eps$. 
While this is clearly a necessary condition, it is in general not tight: only requiring the reads to cover the entire genome sequence does not guarantee that consecutive reads can actually be stitched back together to recover the original sequence. 
Characterizing when the reads can be reliably stitched together, i.e. determining feasibility, is an open problem.  In fact, the ability to reconstruct depends crucially on the {\em repeat statistics} of the DNA sequence. 

An earlier work \cite{MBT12} has answered the feasibility and optimality questions under an i.i.d. model for the DNA sequence. However, real DNA, especially those of eukaryotes, have much longer and complex repeat structures. Here, we are interested in determining feasibility and optimality given {\em arbitrary} repeat statistics. This allows us to evaluate algorithms on statistics from already sequenced genomes, and gives confidence in predicting whether the algorithms will be useful for an \emph{unseen} genome with similar statistics.
 
%Characterizing the% minimum coverage depth required for
 %reliable reconstruction is an open question in general, even in the simpler case when the reads are noiseless. This question is the focus of the present paper.

\subsection{Results}

Our approach results in a pipeline, which takes as input a genome sequence and desired success probability $1-\eps$, computes a few simple repeat statistics, and from these statistics  computes a feasibility plot that indicates for which $L,N$ reconstruction is possible. Fig.~\ref{fig:repeatstatsINTRO} displays the simplest of the statistics, the number of repeats as a function of the repeat length $\ell$. Fig.~\ref{fig:plotINTRO} shows the resulting feasibility plot produced for the statistics of human chromosome 19 (henceforth \hc{19}) with success probability $99\%$. The horizontal axis signifies read length $L$ and the vertical axis signifies the normalized coverage depth $\bar c:=c/\clw$, the  
coverage depth $c$ normalized by $\clw$, the coverage depth required as per Lander-Waterman \cite{LW88} in order to cover the sequence.  
%%The normalized coverage depth $\cb = c/\clw = N/\nc$ is also equal to the number of reads $N$ normalized by the number of reads $\nc$ required to %%cover the sequence.  

\begin{figure}[htb]
\centering{\includegraphics[height=1.6in]{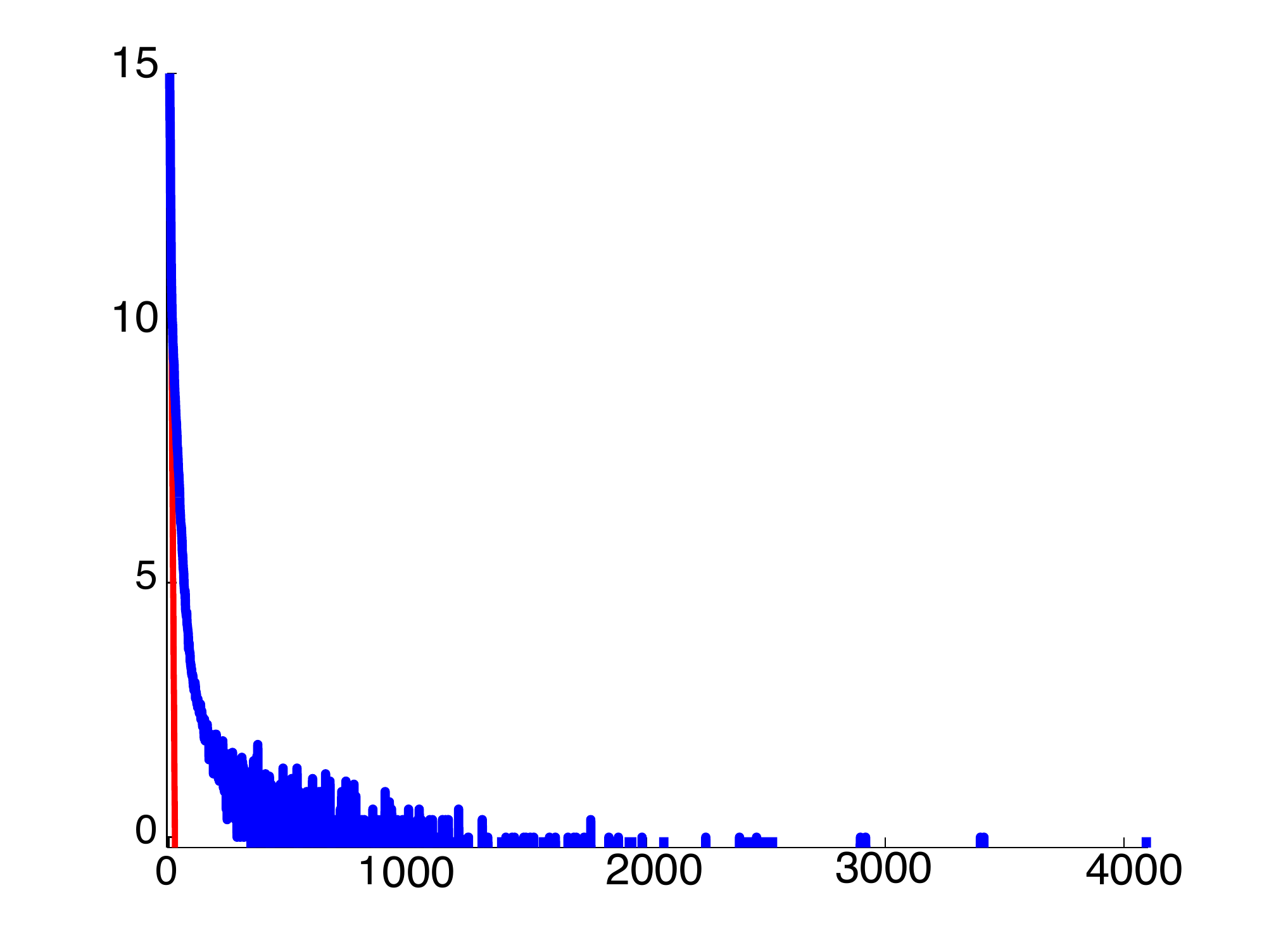}}
\caption{For \hc{19}, a log plot of number of repeats as a function of the repeat length $\ell$. Red line is what would have been predicted by an i.i.d. fit.}
\label{fig:repeatstatsINTRO}
\end{figure}

\begin{figure}
\centering{
\includegraphics[height=2.6in]{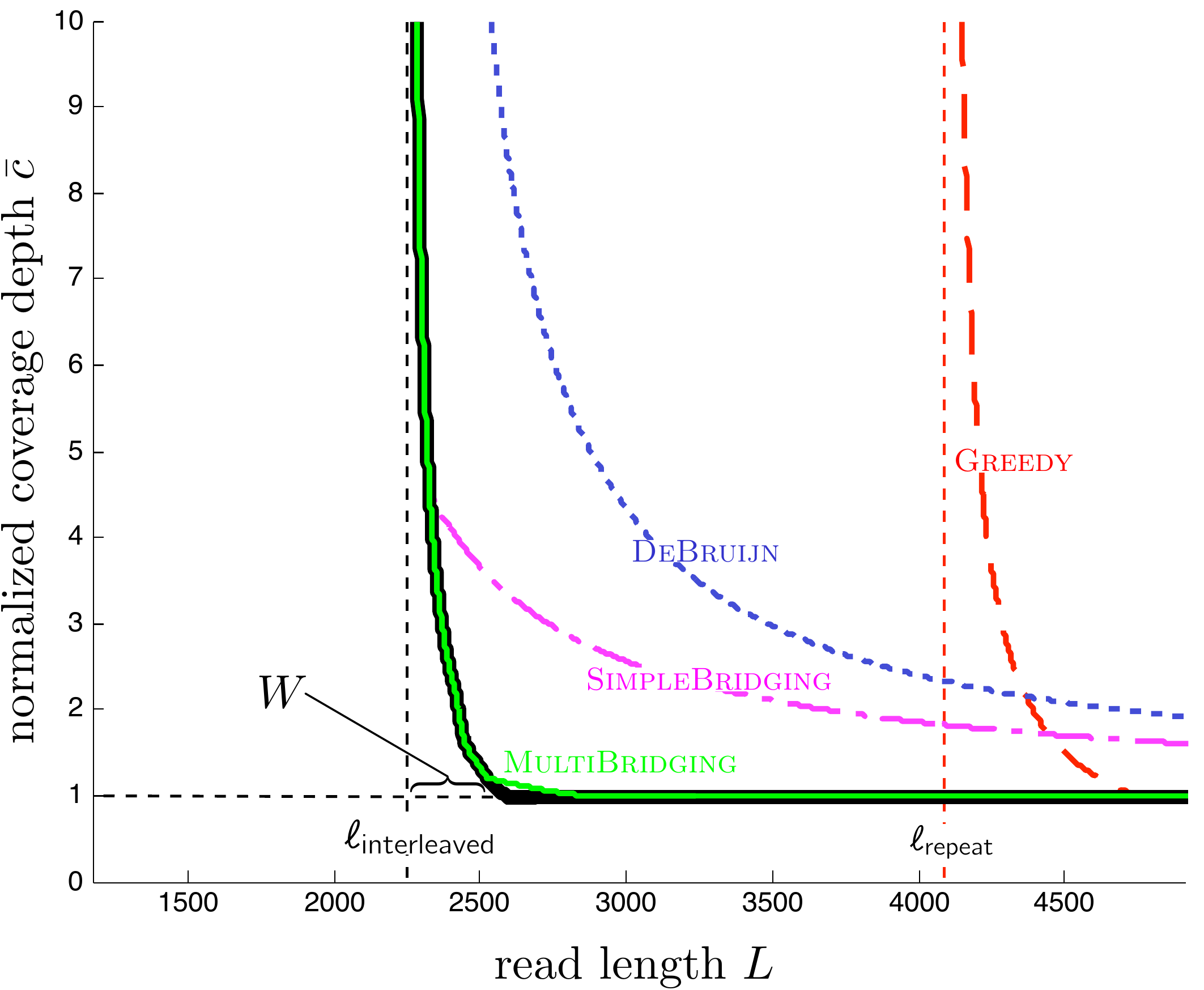}
\caption{Thick black lines are lower bounds on feasibility which holds for all algorithms, and colored curves are performance achieved by specific algorithms. Four such curves are shown: the greedy algorithm and three de Brujin graph based algorithms. }
\label{fig:plotINTRO}}
\vspace{-3mm}
\end{figure}

Since the coverage depth must satisfy $c \geq \clw$, the normalized coverage depth satisfies $\cb \geq 1$, and we plot the horizontal line $\cb = 1$. This lower bound holds for {\em any}  assembly algorithm. In addition,  there is another lower bound, shown as the thick black nearly vertical line in Fig.~\ref{fig:plotINTRO}. In contrast to the coverage lower bound, this lower bound is a function of the repeat statistics.  It has a vertical asymptote at $\lcrit := \max\{\Lint,\Ltri\}+1$, where $\Lint$ is the length of the longest interleaved repeats in the DNA sequence  and $\Ltri$  is the length of the longest triple repeat (see Section~\ref{sec:Lower} for precise definitions).   Our lower bound can be viewed as a generalization
%where all length $L$ subsequences of the DNA sequence are given as reads.  In that setting, Ukkonen showed that if there are interleaved or triple repeats of length $L-1$, then more than one DNA sequence agrees with the reads and hence reconstruction is impossible. Since SBH can be viewed as the limit of shotgun sequencing as the coverage depth grows, this result yields a lower bound on the read length for shotgun sequencing. 
%We 
of a result of Ukkonen \cite{Ukk92} for Sequencing by Hybridization to the shotgun sequencing setting. 

Each colored curve in the feasibility plot is the lower boundary of the set of feasible $N,L$ pairs for a specific algorithm. The rightmost curve  is the one achieved by the greedy algorithm, which merges reads with largest overlaps first  (used for example in TIGR \cite{Sut95}, CAP3~\cite{HM99}, and more recently SSAKE~\cite{War07}). As seen in Fig.~\ref{fig:plotINTRO}, its performance curve asymptotes at  $L=\Lrep$, the length of the longest repeat.  De Brujin graph based algorithms (e.g. \cite{IW95} and \cite{PTW01}) take a more global view via the construction of a de Brujin graph out of all the K-mers of the reads. The performance curves of all K-mer graph based algorithms asymptote at read length $L = \lcrit$, but different algorithms  use read information in a variety of ways to resolve repeats in the K-mer graph and thus have different coverage depth requirement beyond read length $\lcrit$.  By combining the ideas from several existing algorithms (including \cite{PTW01}, \cite{peng2010idba}) we designed {\bdbII}, which is very close to the lower bound for this dataset. Thus Fig.~\ref{fig:plotINTRO} answers, up to a very small gap, the feasibility of assembly for the repeat statistics of \hc{19}, where successful reconstruction is desired with probability $99\%$. 

We produce similar plots for a dozen or so datasets (see supplementary material). For datasets where $\Lint$ is significantly larger than $\Ltri$ (the majority of the datasets we looked at,  including those used in the recent GAGE assembly algorithm evaluation \cite{Sal11}), {\bdbII} is near optimal, thus  allowing us to characterize the fundamental limits for these repeat statistics (Fig. \ref{fig:sim}). On the other hand, if $\Ltri$ is close to or larger than $\Lint$, there is a gap between the performance of {\bdbII} and the lower bound (see for example Fig.~\ref{fig:bridging2triplesINTRO}). The reason for the gap is explained in Section~\ref{sec:gap}.

 \begin{figure}[htb]
\begin{centering}
\vspace{-2mm}
\includegraphics[height=1.6in]{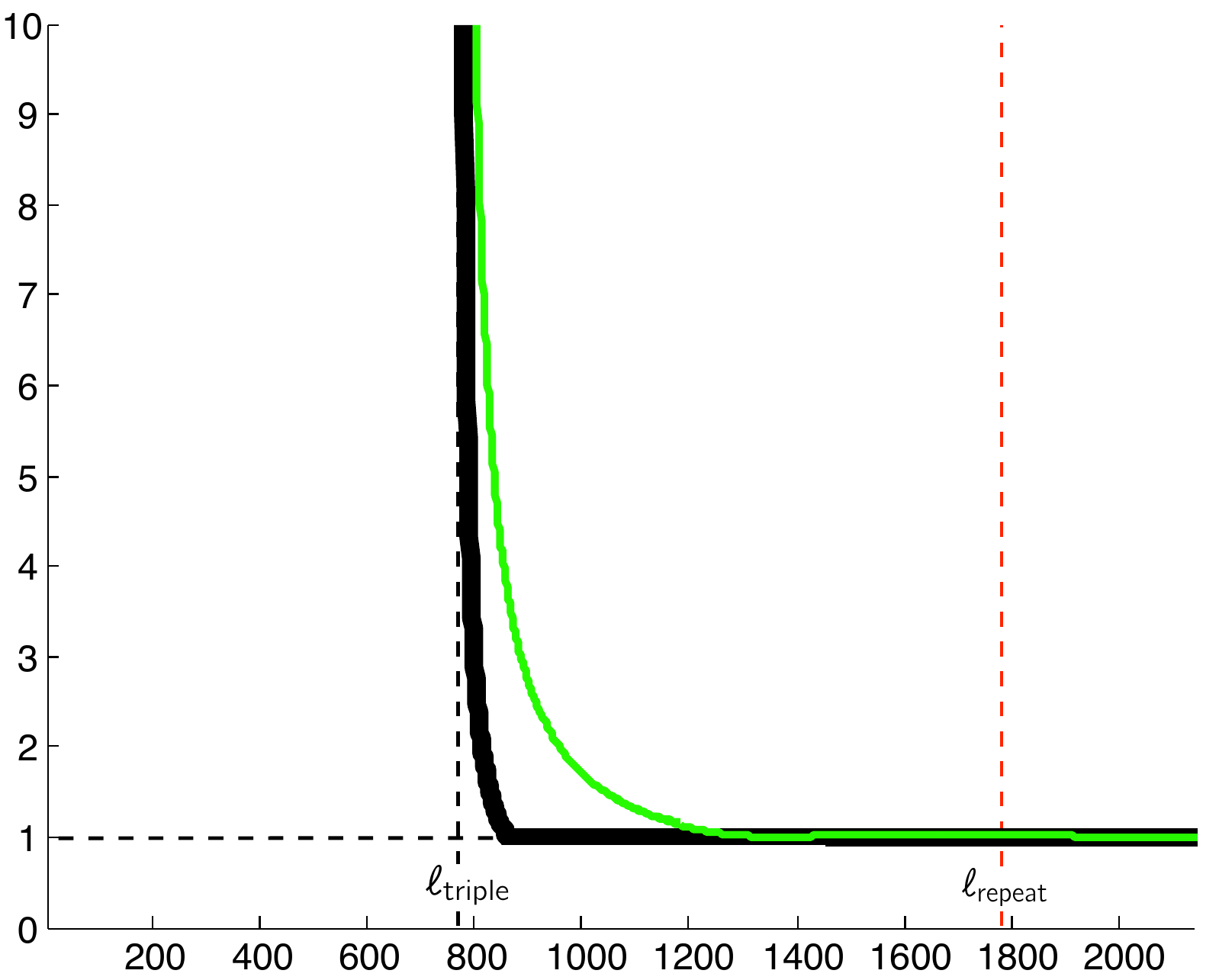}
\vspace{-2mm}
\caption{Performance of {\bdbII } on {\em{P Marinus}}, where $\Ltri > \Lint$.}
\label{fig:bridging2triplesINTRO}
\end{centering}
\vspace{-2mm}
\end{figure}

An interesting feature of the feasibility plots is that for typical repeat statistics exhibited by DNA data, the minimum coverage depth is characterized by a  {\em critical phenomenon}:
%%\footnote{An earlier work \cite{MBT12} has established the critical phenomenon for DNA with i.i.d. statistics asymptotically, and showed that the greedy %%algorithm is asymptotically optimal. Our results show that even for  finite  DNA sequences with more complex repeat statistics, an approximate  critical %%behavior persists, but a more sophisticated algorithm is needed.}:  
If the read length $L$ is below $\lcrit =\Lint$, reliable reconstruction of the DNA sequence is impossible no matter what the coverage depth is, but if the read length $L$ is slightly above $\lcrit$, then covering the sequence suffices, i.e. $\cb =c/\clw= 1$. 
The sharpness of the critical phenomenon is described by the size of the \emph{critical window}, which refers to the range of $L$ over which the transition from one regime to the other occurs. For the case when {\bdbII} is near optimal, the width $\wind$ of the window size can be well approximated as:
\vspace{-1mm}
\begin{equation}\label{e:simpleWindowINTRO}
\wind  \approx \frac{\lcrit}{2r+1}, \quad \mbox{where } r := \frac{\log \frac G\lcrit}{\log \eps\inv}\,.
\vspace{-1mm}
\end{equation}
For the  \hc{19} dataset, the critical window size evaluates to about $19 \%$ of $\lcrit$.
%%For the case when {\bdbII} is not optimal, we can also calculate the window size and moreover quantify the gap between the performance of {\bdbII} and %%the lower bound inside this critical window. More details are discussed in Section~\ref{sec:gap}.

In Sections \ref{sec:Lower} and \ref{sec:opt_algo}, we discuss the underlying analysis and algorithm design supporting the plots.  The curves are all computed from formulas, which are validated by simulations in Section~\ref{sec:simulations}. We return in Section~\ref{sec:conclusions} to put our contributions in  a broader perspective and discuss extensions to the basic framework.  All proofs can be found in the appendix. %of \cite{arxiv}. 

\section{Lower bounds} 
%cov analysis and ukkonnen's cond, currently 1.75 pages
\label{sec:asymptotic_analysis}
In this section we discuss lower bounds, due to coverage analysis and certain repeat patterns, on the required coverage depth and read length. The style of analysis here is continued in Section~\ref{sec:opt_algo}, in which we search for an assembly algorithm that performs close to the lower bounds. 

\label{sec:Lower}

\subsection{Coverage bound}

 Lander and Waterman's coverage analysis \cite{LW88} gives the well known condition for the number of reads $\nc$ required to cover the entire DNA sequence with probability at least  $1-\eps$. In the regime when $L \ll G$,  one may make the standard assumption that the starting locations of the $N$ reads follow a Poisson process with rate $\lam = N/G$, and  the number $\nc$
 %%and thus each spacing has %%an exponential($\lam$) distribution. A gap between two successive reads is equivalent to a spacing larger than $L$, an event with probability $e^{-\lam L}%%$. The probability of coverage is equal to the probability of no gap between all $N-1$ pairs of successive reads, i.e. 
%%$$ (1-e^{-\lam L})^{N-1}\,.$$
 is to a very good approximation given by the solution to the equation 
\begin{equation}
\label{e:cov}
\nc =  \frac GL \log \frac \nc\eps\,.
\end{equation}

The corresponding coverage depth is 
$\clw = \nc L/ G$.  This is our baseline coverage depth against which to compare the coverage depth of various algorithms. For each algorithm, we will plot 
$$\cb := \frac c{\clw} = \frac{N}{\nc}\,,$$
the coverage depth required by that algorithm normalized by $\cc$.  Note that $\cb$ is also the ratio of the number of reads $N$ required by an algorithm to $\nc$. The requirement $\cb \geq 1$ is due to the lower bound on the number of reads obtained by the Lander-Waterman coverage condition.

\subsection{Ukkonen's condition} \label{sec:Ukkonen}
A second constraint on reads arises from repeats. A lower bound on the read length $L$ follows from Ukkonen's condition \cite{Ukk92}:  
 if there are \emph{interleaved repeats} or \emph{triple repeats} in the sequence of length at least $L-1$, then the likelihood of observing the reads is the same for more than one possible DNA sequence and hence correct reconstruction is not possible. Fig.~\ref{fig:Ukkonen} shows an example with interleaved repeats. (Note that we assume $1-\eps>1/2$, so random guessing between equally likely sequences is not viable.) % (In order to rule out guessing between two options, we make the assumption that the desired probability of successful reconstruction $1-\eps$ is greater than half.)
 
 \begin{figure}[htb]
\begin{centering}
\vspace{-4mm}
\includegraphics[width=2.3in]{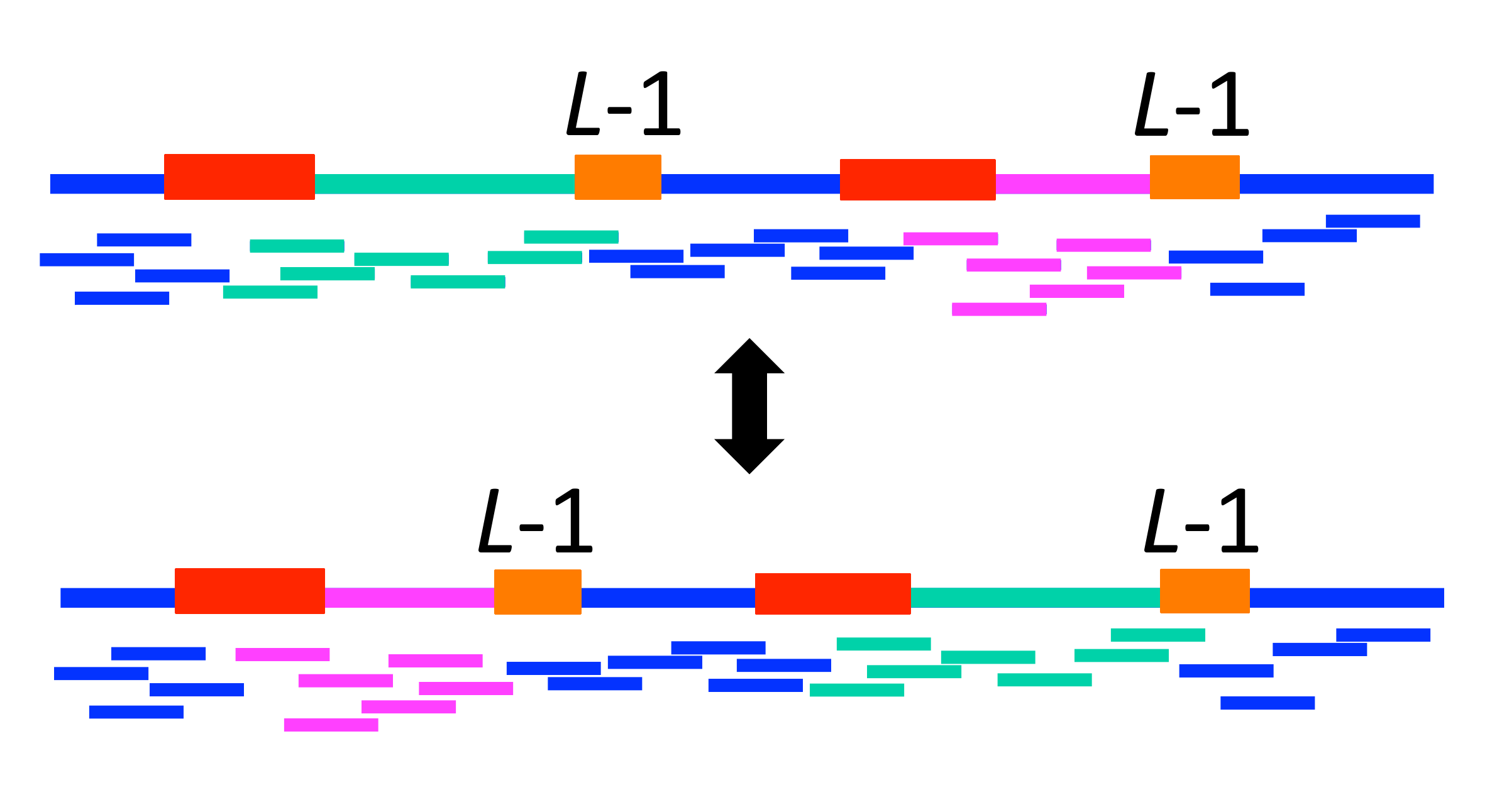}
\vspace{-5mm}
\caption{The likelihood of observing the reads under two possible sequences (the green and magenta segments swapped) is the same. Here, the two red subsequences form a repeat and the two orange subsequences form another repeat. %\texttt{\color{red} FIXME: add base-pair demarcations, since it's a bit weird looking continuous when we're interested in the discrete features.}
} \label{fig:Ukkonen}
\end{centering}
%\vspace{-3mm}
\end{figure}

 We take a moment to carefully define the various types of repeats. Let $\s_t^\ell$ denote the length-$\ell$ subsequence of the DNA sequence ${\bf s}$ starting at position $t$. A \emph{repeat} of length $\ell$ is a subsequence appearing twice, at some positions $t_1,t_2$ (so $\s_{t_1}^{\ell} = \s_{t_2}^{\ell}$) that is maximal (i.e. $s(t_1-1)\neq s(t_2-1)$ and $s(t_1+\ell)\neq s(t_2+\ell)$). %\footnote{This condition means that repeats are \emph{maximal}, i.e. the subsequences are not contained within bigger repeats.}. 
 Similarly, a \emph{triple repeat} of length $\ell$ is a subsequence appearing three times, at positions $t_1,t_2,t_3$, such that $\s_{t_1}^{\ell} = \s_{t_2}^{\ell}=\s_{t_3}^{\ell}$, and such that neither of $s(t_1-1)= s(t_2-1) = s(t_3-1)$ nor $s(t_1+\ell)= s(t_2+\ell)=s(t_3+\ell)$ holds\footnote{Note that a subsequence that is repeated $f$ times gives rise to $f\choose 2$ repeats and $f\choose 3$ triple repeats.}. 
A \emph{copy} is a single one of the instances of the subsequence's appearances. A \emph{pair} of repeats refers to two repeats, each having two copies. 
A pair of repeats, one at positions $t_1,t_3$ with $t_1<t_3$ and the second at positions $t_2,t_4$ with $t_2<t_4$, is \emph{interleaved} if $t_1<t_2<t_3<t_4$ or $t_2<t_1<t_4<t_3$ (Fig.~\ref{fig:Ukkonen}). The length of a pair of interleaved repeats is defined to be the length of the shorter of the two repeats.

Ukkonen's condition implies a lower bound on the read length, 
 $$
 \vspace{-1mm}
 L> \lcrit: = \max\{\Lint,\Ltri\} + 1\,.
 \vspace{-1mm}
 $$ 
 Here $\Lint$ is the length of the longest pair of interleaved repeats on the DNA sequence %(if $L>\Lint$, then the shorter of the pair can be bridged), 
and $\Ltri$ is the length of the longest triple repeat.

\begin{figure}[htb]
\begin{centering}
\includegraphics[width=2.7in]{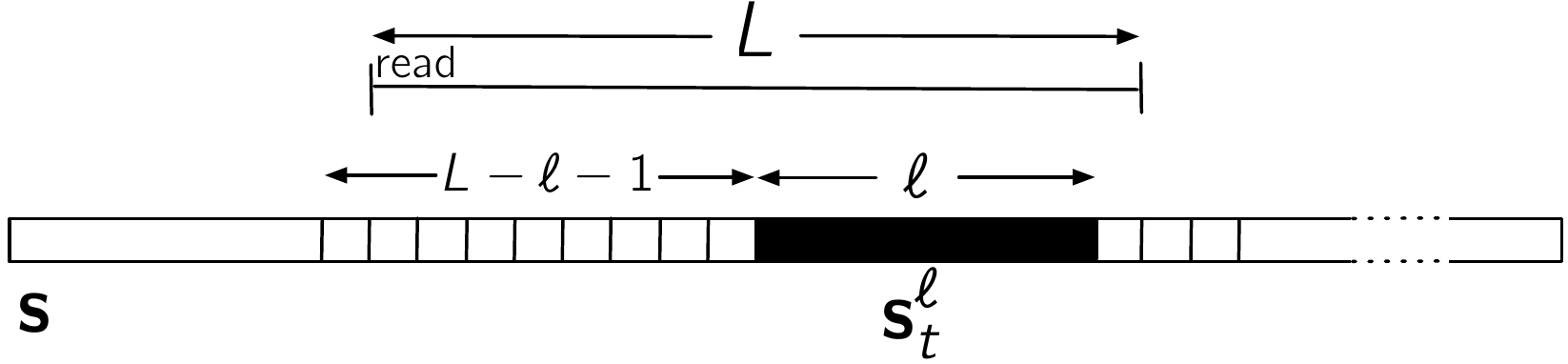}
\vspace{-2mm}
\caption{A subsequence $\s_t^\ell$ is bridged if and only if there exists at least one read which covers at least one base on both sides of the subsequence, i.e. the read arrives in the preceding length $L-\ell-1$ interval.} \label{fig:bridge}
\end{centering}
\vspace{-2mm}
\end{figure}

%\FIXME{Need to edit / shorten section, perhaps include formulas.}
Ukkonen's condition says that for read lengths less than $\lcrit$, reconstruction is impossible no matter what the coverage depth is. But it can be generalized to provide a lower bound on the coverage depth for read lengths greater than $\lcrit$, through the important concept of {\em bridging} as shown in Figure~\ref{fig:bridge}.  We observe that in Ukkonen's interleaved or triple repeats, the actual length of the repeated subsequences is irrelevant; rather, to cause confusion it is enough that all the copies of the pertinent repeats are unbridged. This leads to the following theorem.
%%The \emph{generalized Ukkonen's condition}, then, is the absence of any unbridged interleaved repeats or unbridged triple repeats. This condition is %%necessary for reconstruction, due to the following theorem. 

%\vspace{-2mm}
\begin{theorem}\label{t:Ukkonen}
Given a DNA sequence ${\bf s}$ and a set of reads, if there is a pair of  interleaved repeats or a triple repeat whose copies are all unbridged, then there is another sequence ${\bf s'}$ of the same length under which the likelihood of observing the reads is the same.
\end{theorem}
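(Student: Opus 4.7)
The approach is to construct the alternative sequence ${\bf s}'$ explicitly by a block rearrangement at the repeat copies, and then show, via a content-preserving bijection between the two sequences, that the observed reads have the same likelihood under both. For the interleaved case, let $R_1$ be the repeat with copies at $t_1 < t_3$ and $R_2$ the repeat with copies at $t_2 < t_4$, arranged as $t_1 < t_2 < t_3 < t_4$. Factor ${\bf s} = A \, R_1 \, B \, R_2 \, C \, R_1 \, D \, R_2 \, E$, where $A, B, C, D, E$ are the surrounding stretches, and define
$$
{\bf s}' \;:=\; A \, R_1 \, D \, R_2 \, C \, R_1 \, B \, R_2 \, E
$$
by swapping the interior blocks $B$ and $D$. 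This preserves the total length $G$, and the maximality of $R_1$ as a repeat (the characters immediately past the two copies of $R_1$ in ${\bf s}$ differ) forces $B$ and $D$ to begin with different characters, so ${\bf s}' \neq {\bf s}$. For the triple-repeat case, one analogously factors ${\bf s} = A \, R \, B \, R \, C \, R \, D$ and forms ${\bf s}' = A \, R \, C \, R \, B \, R \, D$ by swapping $B$ and $C$; the triple-repeat maximality condition rules out both flanking triples being constant, so at least one of the possible interior permutations yields a sequence distinct from ${\bf s}$.

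The next step is to define a bijection $\phi : \{1,\ldots,G\} \to \{1,\ldots,G\}$ between positions of ${\bf s}$ and positions of ${\bf s}'$, induced by the block decomposition (identity on the unswapped blocks, translation on the swapped ones). The key claim is that whenever the length-$L$ read starting at position $p$ does not bridge any of the relevant copies of the repeats, one has ${\bf s}_p^L = {\bf s}'^L_{\phi(p)}$, because a non-bridging read is contained inside a contiguous stretch of ${\bf s}$ that also appears verbatim (possibly translated) inside ${\bf s}'$; the only $L$-mers whose content can differ between ${\bf s}$ and ${\bf s}'$ are those that strictly straddle a repeat boundary. Under the uniform/Poisson read placement model, the likelihood factors as $\prod_i N_{{\bf s}}(r_i)/(G-L+1)$, where $N_{{\bf s}}(r)$ is the number of starting positions $p$ with ${\bf s}_p^L = r$. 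The bijection then gives $N_{{\bf s}}(r) = N_{{\bf s}'}(r)$ for every $r$ attained at some non-bridging position of ${\bf s}$, and since by hypothesis every observed $r_i$ is of this form, the joint likelihoods under ${\bf s}$ and ${\bf s}'$ coincide.

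The main obstacle is the careful bookkeeping in the bijection step: one must verify that $\phi$ remains well-defined when $|B| \neq |D|$ (so intermediate positions shift), check the content-preservation property for $L$-mers whose support spans two adjacent blocks on the same side of a repeat, and rule out the coincidental case in which an observed non-bridging content $r$ also appears at some bridging position of ${\bf s}$, which would inflate $N_{{\bf s}}(r)$ above $N_{{\bf s}'}(r)$. A direct enumeration shows that bridging $L$-mers necessarily contain an entire repeat copy flanked by at least one character on each side, so their content has structural features distinguishing it from the non-bridging $L$-mers; combined with the block-wise symmetry between ${\bf s}$ and ${\bf s}'$, this closes the gap. The triple-repeat case additionally requires selecting the correct non-trivial permutation, which is guaranteed to exist by the triple-repeat maximality hypothesis.
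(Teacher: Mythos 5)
Your construction (swapping the segments strictly between the interleaved copies, and the two interior segments of a triple repeat) is exactly the transformation the paper intends --- its Fig.~3 depicts this swap, and the paper in fact omits the proof entirely, deferring to Ukkonen's argument --- so the architecture is right. But there is a genuine gap at precisely the step you flag as the main obstacle. You claim that bridging $L$-mers are ``structurally distinguishable'' from non-bridging ones, so that an observed read (sampled at a position bridging none of the four designated copies) can never share content with an $L$-mer at a bridging position. This is false, because the hypothesis only constrains the \emph{designated} copies: the string $R_1$ may have additional occurrences elsewhere in ${\bf s}$, and a read may legitimately bridge one of those. Concretely, let $a$ be the last character of $A$ and $b$ the first character of $B$, and suppose $E$ contains $a R_1 b$ as a substring; take $L=\ell_1+2$ and suppose a read equal to $a R_1 b$ is sampled from inside $E$. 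This sample satisfies your hypothesis, yet $N_{{\bf s}}(a R_1 b)=2$ (one occurrence in $E$, one at the $t_1$ copy), while $N_{{\bf s}'}(a R_1 b)=1$: after the swap, the first copy is followed by the first character of $D$, which differs from $b$ by maximality, and the second copy is preceded by the last character of $C$, which differs from $a$ by maximality. So the likelihoods under ${\bf s}$ and ${\bf s}'$ differ, and your proof breaks. The repair is to read ``copies'' as the paper defines them --- ``a single one of the instances of the subsequence's appearances,'' i.e.\ \emph{every} occurrence of $R_1$ and $R_2$ is unbridged --- and then to argue in the content domain: if an observed read's content contained $R_1$ flanked on both sides, the read would physically bridge the occurrence of $R_1$ located at the corresponding offset from its own sampling position, contradicting the hypothesis. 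Hence no observed content contains $R_1$ or $R_2$ with flanks, so all of its occurrences in either sequence sit at non-bridging positions, and your block bijection then yields $N_{{\bf s}}(r_i)=N_{{\bf s}'}(r_i)$ for every observed read.

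A second, smaller gap is in the triple-repeat case. Maximality of a triple repeat says only that the three left flanks are not all equal and the three right flanks are not all equal; it does \emph{not} rule out $B=C$. If $B=C$ (with the last character of $A$ different from that of $B$, and the first character of $D$ different from that of $B$), the triple repeat is maximal, yet the two interior orderings $A\,R\,B\,R\,C\,R\,D$ and $A\,R\,C\,R\,B\,R\,D$ produce the identical sequence, so no ``interior permutation'' gives ${\bf s}'\neq{\bf s}$. Your claim that maximality guarantees a nontrivial permutation therefore does not follow; this degenerate, tandem-like configuration must be excluded or treated separately (it is of the same flavor as the paper's remark about sequences formed by concatenating copies of a shorter sequence).
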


For brevity, we will call a repeat or a triple repeat {\em bridged} if at least one copy of the repeat is bridged,  and a pair of interleaved repeats {\em bridged } if at least one of the repeats is bridged. Thus, the above theorem says that a necessary condition for reconstruction is that all interleaved and triple repeats are bridged. 

How does Theorem \ref{t:Ukkonen} imply a lower bound on the coverage depth? Focus on the longest pair of interleaved repeats and suppose the read length $L$ is between the lengths of the shorter and the longer repeats. The probability this pair is unbridged is $(\pub_{\Lint})^2$, where
\vspace{-1mm}
\begin{align}
\nonumber
\pub_\ell &:=  \P[\mbox{$\ell$-length subseq. is unbridged}]\\
\label{e:pub}
& =  e^{\frac{N}{G}(L-\ell - 1)^+}.
\end{align}
%\vspace{-1mm}
Theorem \ref{t:Ukkonen} implies that the probability of making an error in the reconstruction is at least $1/2$ if this event occurs. Hence, the requirement that $\Pe \le \epsilon$ implies a lower bound on the number of reads $N$:
\begin{equation}
N \ge \frac{G}{(L-\Lint-1)\ln(1/(2\epsilon))}.
\end{equation}
A similar lower bound can be derived using the longest triple repeat. A slightly tighter lower bound can be obtained by taking into consideration the bridging of  {\em all} the interleaved and triple repeats, not only the longest one, resulting in the black  curve in Fig.~\ref{fig:plotINTRO}.

\vspace{-2mm}
\section{Towards optimal assembly }
% 3 pages max... maybe 4
\label{sec:opt_algo}
We now begin our search for algorithms performing close to the lower bounds derived in the previous section. 
Algorithm assessment begins with obtaining deterministic sufficient conditions for success in terms of repeat-bridging. We then find the necessary $N$ and $L$ in order to satisfy these sufficient conditions with a target probability $1-\eps$. The required coverage depth for each algorithm depends only on certain repeat statistics extracted from the DNA data, which may be thought of as  \emph{sufficient statistics}. 

\subsection{Greedy algorithm}\label{sec:Greedy}
%%The greedy algorithm was used by several of the most widely used genome assemblers for Sanger data, such as phrap, TIGR Assembler~\cite{Sut95}, %%and CAP3~\cite{HM99}. SSAKE~\cite{War07} is a recent assembler that uses the greedy algorithm on high-throughput shotgun sequencing with short read $$data. 

%For a node $\bv$, the in-degree $\din(\bv)=|\{\bu:(\bu,\bv) \text{ is an edge} \}|$ is the number of edges in the graph directed towards $\bv$ and the out-degree $\dout(\bv)=|\{\bu:(\bv,\bu) \text{ is an edge} \}|$ is the number of edges directed away from $\bv$.

The greedy algorithm, denoted {\GR}, with pseudocode in section \ref{sec:greedyApp}, is described as follows. Starting with the initial set of reads, the two fragments (i.e. subsequences) with maximum length overlap are merged, and this operation is repeated until a single fragment remains. Here the overlap of two fragments $\bx,\by$ is a suffix of $\bx$ equal to a prefix of $\by$, and merging two fragments results in a single longer fragment. 

%\vspace{3mm}
%\begin{mdframed}
%\textbf{{Algorithm} {\sc Greedy}.} Input: reads $\RR$. Output: sequence $\sh$. \\ 
%\noindent{
%1. For each read with sequence $\x$, form a node with label $\x$.  \\
%\emph{Greedy steps:}\\
%2. Consider all pairs of nodes $\x_1,\x_2$ in $\G$ satisfying $\dout(\x_1)=\din(\x_2)=0$, and add an edge $(\x_1,\x_2)$ with largest value $\ov(\x_1,\x_2)$. \\
%3. Repeat Step 2 until no candidate pair of nodes remains.\\
%\emph{Finishing step:}\\
%4. Output the sequence corresponding to the unique cycle in $\G$. 
%}
%%\\ \textbf{Sufficient conditions for reconstruction:} Every repeat has at least one bridged copy.\\
%%\textbf{Sufficient conditions on genome statistics:} $\cb \geq \max_m \frac{\al_m}{2(1-\frac mL)}
%%$.
%\end{mdframed} 
%\vspace{3mm}

\begin{theorem}
 \GR { }reconstructs the original sequence ${\bf s}$ if every repeat is bridged. 
\label{thm:GreedyTheorem}
\end{theorem}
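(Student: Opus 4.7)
The plan is to argue by contradiction: suppose \GR{} produces an incorrect assembly and focus on the \emph{first} wrong merge. Up to that point, each intermediate fragment is a correct contiguous substring $\s_a^k$ of $\mathbf{s}$, and by standard substring removal no fragment is properly contained in another. \GR{} picks two fragments $\bx = \s_{a_1}^{b_1 - a_1 + 1}$ and $\by = \s_{a_2}^{b_2 - a_2 + 1}$ realizing the globally maximum suffix--prefix overlap $\ell$. Setting $t_1 := b_1 - \ell + 1$ and $t_2 := a_2$, the overlap yields $\s_{t_1}^{\ell} = \s_{t_2}^{\ell}$; wrongness means $t_1 \neq t_2$, so $(t_1,t_2)$ is an instance of a repeat whose maximal extension I denote $\ell^\star \geq \ell$.

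Next I would invoke the hypothesis that every repeat is bridged. Then, without loss of generality, the copy at $t_1$ is bridged by some read $R$ whose interval $[u,v]$ satisfies $u \leq t_1 - 1$ and $v \geq t_1 + \ell^\star \geq t_1 + \ell > b_1$. Since all prior merges were correct, the current fragment $\bz$ containing $R$ corresponds to a genomic interval $[a_3, b_3]$ with $a_3 \leq t_1 - 1$ and $b_3 \geq t_1 + \ell > b_1$.

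The remainder is a case split on the identity of $\bz$. If $\bz = \bx$, then $b_3 = b_1 = t_1 + \ell - 1$ contradicts $b_3 \geq t_1 + \ell$. If $\bz = \by$, then $t_2 = a_2 = a_3 \leq t_1 - 1$, so $\by$ truly begins strictly to the left of $t_1$ on the genome; since $b_2 \geq b_3 > b_1$, the true genomic overlap of $\bx$ with $\by$ is $b_1 - t_2 + 1 \geq \ell + 1$, and because the string overlap always dominates the true overlap, \GR{} would have preferred to merge $\bx$ and $\by$ with an overlap of length at least $\ell+1$---contradicting maximality of $\ell$. Otherwise $\bz$ is a third fragment; since $a_3 \leq t_1 - 1 \leq b_1$ and $b_3 > b_1$, and since no fragment is contained in another, the suffix of $\bx$ overlaps the prefix of $\bz$ by $b_1 - a_3 + 1 \geq \ell + 1$, again contradicting maximality. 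A symmetric argument (using $\by$ and a read bridging the copy at $t_2$) handles the other choice of bridged copy, finishing the contradiction.

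The main obstacle is the middle subcase, where the bridging read sits \emph{inside} $\by$ rather than in a third fragment: the contradiction is no longer ``a third fragment overlaps $\bx$ better'' but rather an assertion about $\bx$ and $\by$ themselves. Making this precise requires cleanly separating the \emph{true} overlap (intersection of genomic intervals) from the \emph{string} overlap that \GR{} actually computes, and using the elementary fact that the latter is always at least the former. Once this bookkeeping is set up---together with the standard convention that substrings are removed so no fragment is properly contained in another---the case analysis and its $t_2$-symmetric counterpart are routine.
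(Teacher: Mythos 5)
Your overall strategy is the same as the paper's: isolate the first wrong merge, note that the spurious length-$\ell$ suffix--prefix match exhibits two occurrences of the same string (hence a repeat), and use the hypothesized bridging read to manufacture a pair with overlap at least $\ell+1$, contradicting greediness at level $\ell$. The paper runs this argument at the level of individual reads (its \GR{} is an overlap graph on reads, so every object in play has length exactly $L$), while you run it at the level of merged fragments, which forces your case analysis on which fragment contains the bridging read. That restructuring is legitimate, and your explicit separation of true genomic overlap from string overlap is a point the paper treats only implicitly.

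There is, however, one genuine gap: the invariant that no fragment is properly contained in another, which you attribute to ``standard substring removal.'' The algorithm \GR{} as defined in the paper has no such removal step, and since all reads have equal length $L$ there is nothing to remove initially anyway; the danger is containments \emph{created dynamically by merges} (merging fragments covering $[1,10]$ and $[5,14]$ yields $[1,14]$, which swallows a fragment covering $[3,12]$), and a preprocessing step would not prevent these. The invariant is load-bearing for you: in the third-fragment case, and in the case where the symmetric argument lands on $\bx$, the interval relations $a_3 \le t_1-1$ and $b_3>b_1$ only convert into a suffix--prefix overlap of length at least $\ell+1$ if the two intervals are not nested; your $\bz=\by$ case likewise needs $a_2>a_1$ for ``true overlap'' to mean what you want. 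Fortunately the invariant is true before the first error, and it is provable with exactly the tool you already use: if a \emph{correct} merge of $\bx=[a_1,b_1]$ and $\by=[a_2,b_2]$, performed at the global-maximum overlap $\ell^*=b_1-a_2+1$, produced a fragment properly containing some fragment $Z=[a_3,b_3]$, then (using that no nesting existed before that merge) one gets $a_1<a_3<a_2$ and $b_1<b_3\le b_2$, so the pair $(\bx,Z)$ already had true, hence string, overlap $b_1-a_3+1\ge \ell^*+1$ at that earlier step --- contradicting the maximality of $\ell^*$. Splicing in this induction (and restating the bound $v\ge t_1+\ell^\star$ as $v\ge t_1+\ell$, since the maximal extension of the repeat may grow to the left of $t_1$, not only to the right) closes your proof; without it, the proof as written establishes the theorem only for a modified algorithm that the paper does not analyze.
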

%\begin{proof}
%The proof is in Appendix~\ref{sec:appGreedy}.
%\end{proof}

% FIXME \texttt{\color{red} Compare with Theorem 4 of \cite{NP09} -- they prove something similar.}

%This theorem allows us to determine the coverage depth required by the greedy algorithm: we must ensure that all repeats are bridged. A calculation similar to the one leading to \eqref{e:intLowerBound} gives
%\begin{equation}
%\cb \geq \max_m \frac{\al_m}{2(1-\frac mL)}\,,
%\end{equation}
%where $\al_m:= \frac{\log a_m}{\log G}$ is a normalized version of $a_m$, the number of maximal repeats of length $m$. 

%One can show that the converse is also approximately true: {\GR } makes an error with probability at least $1/2$ if $L\leq 1+\Lrep$.

Theorem~\ref{thm:GreedyTheorem}  allows us to determine the coverage depth required by {\GR }: we must ensure that all repeats are bridged. By the union bound,
\begin{equation}
\label{e:ub}
\P[\mbox{some repeat is unbridged}] \le \sum_m a_m \left (\pub_m \right)^2\,,
\end{equation} 
where $\pub_m$ is defined in (\ref{e:pub}) and $a_m$ is the number of repeats of length $m$. 
Setting the right-hand side of (\ref{e:ub}) to $\eps$ ensures $\Pe \le \eps$ and yields the performance curve of {\GR} in Fig. \ref{fig:plotINTRO}. Note that the repeat statistics $\{a_m\}$ are sufficient to compute this curve.
%%and solving for $L$ gives 
%%\begin{equation}\label{e:greedyBound}
%%L\geq  \frac G{2N}\log\frac\gam\eps\,,
%%\end{equation}
%%where $\gam:=\sum_m a_m e^{2(N/G) (m+1)}$.

%In particular, this equation implies a minimum read length $L> 1+\Lrep$ for the greedy algorithm to work, where $\Lrep$ is the length of the longest repeat, since repeats longer than the read length cannot be bridged. 
%The performance obtained by the greedy algorithm is plotted in Fig.~\ref{fig:greedyCh22},  and nearly matches the lower bound. Chromosome~22 is special, however, in that $\Lrep$ (which determines the performance of \GR) is not much larger than $\Lint$ which forms the lower bound. 

{\GR } requires $L> \Lrep + 1$, whereas the lower bound has its asymptote at $L=\Lint + 1$. In chromosome 19, for instance, there is a large difference between $\Lint = 2248$ and $\Lrep = 4092$, and in Fig~\ref{fig:plotINTRO} we see a correspondingly large gap. $\GR$ is evidently sub-optimal in handling interleaved repeats. %Nevertheless, once the read length is slightly longer than $\Lrep$, for Chromosome 19 $\GR$ requires only $\cb\geq 1$, i.e. as soon as the reads are longer than $\Lrep$, coverage of the sequence is sufficient for $\GR$'s successful reconstruction. 
Its strength, however, is that once the reads are slightly longer than $\Lrep$, coverage of the sequence is sufficient for correct reconstruction. 
Thus if $\Lrep\approx \Lint$, then {\GR } is close to optimal.

\subsection{$K$-mer algorithms}

%\texttt{Check Idury and Waterman paper again}
The greedy algorithm fails when there are unbridged repeats, even if there are no unbridged \emph{interleaved} repeats, and therefore requires a read length much longer than that required by Ukkonen's condition. As we will see, $K$-mer algorithms do not have this limitation.

%In the introduction we mention the Sequencing By Hybridization (SBH) setting, where Ukkonen's condition was originally introduced. In the SBH setting, an optimal algorithm matching Ukkonen's condition is known, due to Pevzner \cite{Pev95}. %This algorithm forms the foundation for the shotgun sequencing algorithms in this section, which allow for certain repeats to be unbridged as long as they are not interleaved. 

%As mentioned in the introduction, Ukkonen's condition was originally introduced in the context of SBH, where one observes all length $L$ subsequences. (The set of all length $L$ subsequences is known as the $L$-spectrum and is denoted by $\calS_L$.) Since our lower bound generalizes Ukkonen's condition to shotgun sequencing, it makes sense to likewise turn to SBH for an assembly algorithm. 

\subsubsection{Background}\label{sec:KmerBackgound}
In the introduction we mention Sequencing By Hybridization (SBH), for which Ukkonen's condition was originally introduced. In the SBH setting, an optimal algorithm matching Ukkonen's condition is known, due to Pevzner \cite{Pev95}. %This algorithm forms the foundation for the shotgun sequencing algorithms in this section, which allow for certain repeats to be unbridged as long as they are not interleaved. 

%For the SBH model, where the set of reads is the $L$-spectrum $\calS_L$, %, where all possible reads are observed (with multiplicities). For
 %an optimal reconstruction algorithm was discovered by Pevzner \cite{Pev95} . Here optimality means it matches Ukkonen's necessary condition: if there are no interleaved or triple repeats of length at least $L-1$, then reconstruction is possible. 
%Suggested version:  an optimal (meaning, matching the lower bound) reconstruction algorithm was discovered by Pevzner \cite{Pev95} . %Here optimality means it matches Ukkonen's necessary condition: if there are no interleaved or triple repeats of length at least $L-1$, then reconstruction is possible. 

% Pevzner's algorithm is based on finding an appropriate cycle in a $K$-mer graph (also known as a de Bruijn graph) with $K=L-1$ (see e.g. \cite{PPT11} for an overview). A $K$-mer graph is formed from a sequence $\s$ by first adding a node to the graph for each unique $K$-mer (length $K$ subsequence) found in the set of reads, and then adding an edge with overlap $K-1$ between any two nodes representing adjacent $K$-mers (two $K$-mers in a read are said to be \emph{adjacent} if they %overlap by $K-1$ nucleotides).
% are offset by a single nucleotide). 
%Each edge is included only once, independent of how many reads designate its inclusion. Edges thus correspond to $(K+1)$-mers in $\s$ and paths correspond to longer subsequences obtained by merging the constituent nodes.  There exists a cycle corresponding to the original sequence $\s$, and reconstruction entails finding this cycle. 

  Pevzner's algorithm is based on finding an appropriate cycle in a $K$-mer graph (also known as a de Bruijn graph) with $K=L-1$ (see e.g. \cite{PPT11} for an overview). A $K$-mer graph is formed %from a sequence $\s$
   by first creating a node in the graph for each unique $K$-mer (length $K$ subsequence) in the set of reads, and then adding an edge with overlap $K-1$ between any two nodes representing $K$-mers that are {\emph{adjacent}} in a read, i.e. offset by a single nucleotide. Edges thus correspond to unique $(K+1)$-mers in $\s$ and paths correspond to longer subsequences obtained by merging the constituent nodes.  
There exists a cycle corresponding to the original sequence $\s$, and reconstruction entails finding this cycle. 
 
As is common, we will replace edges corresponding to an unambiguous path by a single node (c.f. Fig.~\ref{fig:condensing}). Since the subsequences at some nodes are now longer than $K$, this is no longer a $K$-mer graph, and we call the more general graph a sequence graph. 
%graph. 
The simplified graph is called the \emph{condensed sequence graph}. 
%%More details and precise definitions are in Appendix~\ref{sec:SBHAppendix}.

\begin{figure}[bth]
\begin{centering}
\includegraphics[width=2.5in]{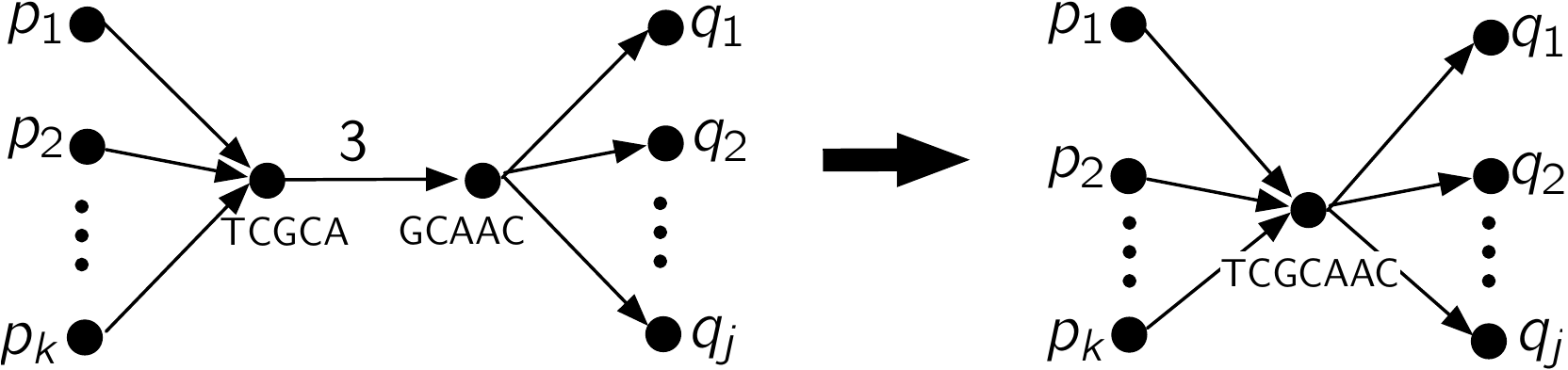}
\vspace{-1mm}
\caption{Contracting an edge by merging the incident nodes. Repeating this operation results in the condensed graph.} \label{fig:condensing}
\end{centering}
\vspace{-2mm}
\end{figure}

The condensed graph has the useful property that if the original sequence $\s$ is reconstructible, then $\s$ is determined by a unique Eulerian cycle:  
\begin{theorem}
\label{t:SBH_no_multiplicities}
Let $\G_0$ be the $K$-mer graph constructed from the $(K+1)$-spectrum $\calS_{K+1}$ of $\s$, and 
let $\G$ be the condensed sequence graph obtained from $\G_0$. If Ukkonen's condition is satisfied, i.e. there are no triple or interleaved repeats of length at least $K$, then there is a unique Eulerian cycle $\CC$ in $\G$ and $\CC$ corresponds to $\s$.
\end{theorem}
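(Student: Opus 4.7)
The plan is to split the claim into existence and uniqueness. Existence is immediate: reading off the $K$-mers along $\s$ in order traces a closed walk in $\G_0$ that traverses each $(K+1)$-mer edge exactly as many times as it appears in $\s$, giving an Eulerian cycle (in the natural multigraph refinement of $\G_0$ in which edge multiplicities match those in the spectrum) which descends under condensation to an Eulerian cycle $\CC_\s$ of $\G$.

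For uniqueness, the first step is to characterize branching vertices of $\G$. A vertex survives condensation only if it has multiple distinct incoming or multiple distinct outgoing edges, and such a vertex represents a substring $w$ of length $\geq K$ whose multiple occurrences in $\s$ do not all agree on the immediately adjacent bases (otherwise condensation would have absorbed the ambiguity into a longer node). Because triple repeats of length $\geq K$ are forbidden, $w$ appears exactly twice; maximality of condensation then forces the two copies to have distinct left-extensions and distinct right-extensions, so every branching vertex is $2$-in, $2$-out with exactly two possible pairings of incoming with outgoing edges. Consequently any alternative Eulerian cycle $\CC' \neq \CC_\s$ differs from $\CC_\s$ in exactly a non-empty set $S$ of branching vertices where the pairing is swapped.

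The crux is the following combinatorial lemma: if swapping the pairings at every vertex of $S$ still yields a single Eulerian cycle (rather than several disjoint cycles), then some pair $u, v \in S$ is visited by $\CC_\s$ in the interleaved cyclic order $u, v, u, v$. Granted the lemma, the two visits to $u$ in $\CC_\s$ correspond to the two occurrences of a substring $w_u$ of length $\geq K$ in $\s$, and similarly for $w_v$; interleaving of the visits translates directly to interleaving of these occurrences, producing a pair of interleaved repeats of length $\geq K$ and contradicting Ukkonen's condition. I would prove the lemma by induction on $|S|$: the base case $|S|=1$ separates $\CC_\s$ into the two arcs between its two visits to the single swap vertex, giving two disjoint cycles; for the inductive step, if no pair in $S$ is interleaved then the visits to $S$-vertices along $\CC_\s$ form a non-crossing (nested) pattern, so one can locate $v \in S$ whose two visits are consecutive among $S$-visits, swap at $v$ to isolate a short internal cycle, and argue that swaps at the remaining vertices of $S$ cannot re-attach this short cycle since none of them sit in both of $v$'s arcs. \textbf{The main obstacle} is precisely this last claim: a priori, later swaps could conceivably reconnect a previously broken cycle, and ruling this out requires carefully tracking how the swap operations compose on the half-edges incident to each branching vertex. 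I would formalize this either via a transition-system argument on these half-edges or by maintaining, as an induction invariant, an explicit cycle decomposition whose component count strictly exceeds one for every non-interleaved $S$.
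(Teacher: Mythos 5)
Your existence step is where the real gap lies: the claim that the walk ``descends under condensation to an Eulerian cycle $\CC(\s)$ of $\G$'' is precisely the point at which the no-triple-repeat hypothesis must be invoked, and you never invoke it. Two concrete problems. First, the spectrum $\calS_{K+1}$ is a \emph{set}, so the ``multigraph refinement in which edge multiplicities match those in the spectrum'' is not determined by the data; the multiplicities come from $\s$, and in any case condensing that multigraph does not produce $\G$: parallel edges make the multigraph out-degree of their tail exceed one, so they are never contracted, whereas $\G$ is by definition the condensation of the simple graph $\G_0$. Second, the cycle in $\G_0$ corresponding to $\s$ may traverse a single edge of $\G_0$ twice (whenever a $(K+1)$-mer occurs twice in $\s$), so its image in $\G$ is Eulerian only if every such doubly-traversed edge has been contracted away. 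This is exactly what the paper proves: if $(\bu,\bv)$ is traversed twice, then $\bu$ cannot have a second outgoing edge and $\bv$ cannot have a second incoming edge, since either would force a node to be traversed three times and hence a triple repeat of length at least $K$; therefore $\dout(\bu)=\din(\bv)=1$ and the edge is contracted. Without this argument your $\CC(\s)$ is only a closed walk covering every edge, not an Eulerian cycle. (Your later 2-in/2-out characterization of branching vertices contains the same ingredients, but as written the existence claim is unsupported, and it is not ``immediate.'')

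Your uniqueness argument, by contrast, is a genuinely different route from the paper's: the paper simply cites Pevzner's characterization (Lemma~\ref{l:Pev95}) that multiple Eulerian cycles force interleaved repeats, while you re-derive it combinatorially. Your outline is essentially correct, and the ``main obstacle'' you flag dissolves with one observation: a swap at a vertex $\bu$ can only split the cycle containing both visits of $\bu$, or merge the two cycles containing them; it cannot affect any cycle containing no visit of $\bu$. So if the visit-pairs of $S$ are pairwise non-interleaved, process an innermost vertex first: it splits off a cycle containing no visit of any remaining vertex of $S$, which therefore can never be re-absorbed; inductively every swap in a non-crossing family splits and never merges, producing $|S|+1\geq 2$ cycles, proving your lemma. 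Translating back, a 2-in/2-out node corresponds to a \emph{maximal} repeat (the two copies disagree on both flanking bases), so interleaved visits do yield interleaved repeats in the paper's sense, and Ukkonen's condition is contradicted. This buys you a self-contained proof where the paper outsources to \cite{Pev95}, but it only stands once you repair the existence gap above.
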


Theorem~\ref{t:SBH_no_multiplicities} characterizes, deterministically, the values of $K$ for which reconstruction from the $(K+1)$-spectrum is possible. 
We proceed with application of the $K$-mer graph approach to shotgun sequencing data.

\subsubsection{Basic $K$-mer algorithm}
%A major difference between the SBH model and the shotgun sequencing model is that in the SBH setting the \emph{multiplicities} of the edges in the $K$-mer graph are known. Here the multiplicity of an edge is the number of times the corresponding $(K+1)$-mer occurs in the DNA sequence. As described above, knowing the edge multiplicities in the $K$-mer graph allows to use an Eulerian path to reconstruct. However, the edge multiplicities are not known, and hence it is not clear how to define Since we do not know the edge multiplicities, we instead use the notion of \emph{Semi-Eulerian Path} (SE-cycle):  a path such that each edge is traversed at least once, no node is traversed more than twice, and not all edges are traversed twice.\\

Starting with Idury and Waterman \cite{IW95}, and then Pevzner et al.'s \cite{PTW01} \textsc{euler} algorithm, most current assembly algorithms for shotgun sequencing are based on the $K$-mer graph.
Idury and Waterman \cite{IW95} made the key observation that SBH with subsequences of length $K+1$ can be \emph{emulated} by shotgun sequencing if each read overlaps the subsequent read by $K$: the set of all $(K+1)$-mers within the reads is equal to 
%the set of all $(K+1)$-mers in the sequence $\s$, i.e. 
the $(K+1)$-spectrum $\calS_{K+1}$. 
The resultant algorithm \textsc{DeBruijn} which consists of constructing the $K$-mer graph from the $(K+1)$-spectrum observed in the reads, condensing the graph, and then identifying an Eulerian cycle, has 
%Attempting to emulate SBH gives rise to the algorithm \textsc{DeBruijn} with 
%%(stated formally in Appendix~\ref{sec:deBruijnApp}), 
sufficient conditions for correct reconstruction as follows. 
%Algorithms using the reads to a greater extent are discussed in the next subsection. 

%
%\begin{mdframed}
%\textbf{Algorithm  \\
%\noindent{
%1. For each subsequence $\x$ of length $K$ in a read, add a node to the graph $\G$ with label $\x$.  \\
%2. For each read, add edges between nodes representing adjacent $K$-mers in the read. \\
%3. Condense the graph as described in Definition~\ref{def:seqGraph}.\\
%4. Find an Eulerian cycle in $\G$ and return the corresponding sequence.
%}
%\end{mdframed}
%\vspace{3mm}

%The parameter $K$ determines the \emph{granularity} of the $K$-mer graph: repeats of length $K-1$ do not appear in the graph. Conditions~(a) and~(b) of Lemma~\ref{l:lemmaA} below ensure that $K$ is large enough that neither interleaved nor triple repeats cause the graph to be tangled. 
%At the same time, increasing $K$ requires more reads as they must overlap by $K$: otherwise the $K$-mer graph is  not connected. Condition~(c) guarantees that the graph is connected.
%
%As discussed below, the choice $K=1+\max\{\Ltri,\Lint\}$ is optimal for {\db}, so the parameter $K$ could be removed from the statement of Lemma~\ref{l:lemmaA}. We keep $K$ explicit  for two reasons. First, algorithm {\db } does not actually require  \emph{a priori} knowledge of the repeat lengths and thus in a hypothetical practical use of the algorithm $K$ would typically be larger than the optimum.  Second, the sufficient conditions in Lemma~\ref{l:lemmaA} are organized in order to facilitate comparison with the improved $K$-mer algorithms in the next subsection. 

\begin{theorem}
\label{l:lemmaAbody}
 \textsc{DeBruijn} with parameter choice $K$ reconstructs  the original sequence $\s$ if:
\begin{enumerate}
\item[(a)] %no interleaved repeats of length at least $K$, i.e. 
$K> \Lint$
\item[(b)] %no triple repeats of length at least $K$, i.e. 
$ K> \Ltri$ 
\item[(c)] adjacent reads overlap by at least K
\end{enumerate}
\end{theorem}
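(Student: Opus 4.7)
The strategy is a two-step reduction to the SBH uniqueness result, Theorem~\ref{t:SBH_no_multiplicities}. First, I would use condition~(c) alone to show that the input \textsc{DeBruijn} actually sees is informationally equivalent to the full $(K+1)$-spectrum $\calS_{K+1}$ of $\s$, so that the $K$-mer graph $\G_0$ it constructs coincides with the one in the SBH setting. Second, conditions (a) and (b) package exactly into the Ukkonen hypothesis of Theorem~\ref{t:SBH_no_multiplicities}, which then forces the condensed graph to have a unique Eulerian cycle corresponding to $\s$; since \textsc{DeBruijn} returns an Eulerian cycle of the condensed graph, it must return $\s$.

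For the first step, I would prove a small lemma: under condition~(c), the set of $(K+1)$-mers appearing in the reads equals $\calS_{K+1}$. The inclusion $\subseteq$ is immediate because each read is an exact substring of $\s$. For $\supseteq$, fix a $(K+1)$-mer of $\s$ starting at position $p$ and let $R_i$, starting at position $s_i$, be the read with the largest start satisfying $s_i \leq p$. If $R_i$ does not contain positions $p,\ldots,p+K$, then $s_i + L - 1 < p + K$, i.e.\ $s_i \leq p + K - L$. Condition~(c) says the next read $R_{i+1}$ starts at some $s_{i+1} \leq s_i + L - K$ (an overlap of at least $K$ bases). Combining the two inequalities yields $s_{i+1} \leq p$, contradicting the maximality of $s_i$. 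Hence every $(K+1)$-mer of $\s$ is contained in some read.

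With the spectrum recovered, $\G_0$ is identical to the $K$-mer graph of the SBH setting of Theorem~\ref{t:SBH_no_multiplicities}. Conditions (a) and (b) combine to give $K > \max\{\Lint,\Ltri\}$, which is precisely the hypothesis ``no triple or interleaved repeats of length at least $K$'' in that theorem. Applying it, the condensed graph $\G$ admits a unique Eulerian cycle and that cycle corresponds to $\s$, so \textsc{DeBruijn} necessarily outputs $\s$.

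The only step that does genuine work is the spectrum-recovery lemma; the inequality chase at the boundary between adjacent reads is short but is the one place where condition~(c) is actually consumed. Everything afterwards is a clean invocation of Theorem~\ref{t:SBH_no_multiplicities}, so I expect no further obstacle.
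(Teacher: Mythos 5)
Your proposal is correct and follows essentially the same route as the paper: the paper treats this theorem as a direct consequence of the Idury--Waterman emulation observation (condition~(c) makes the $(K+1)$-mers in the reads equal to $\calS_{K+1}$) combined with Theorem~\ref{t:SBH_no_multiplicities}, whose Ukkonen hypothesis is exactly conditions (a) and (b). The only difference is that you spell out the spectrum-recovery step with an explicit inequality argument, which the paper leaves as a cited observation.
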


%\begin{figure}[htb]
%\begin{centering}
%%\vspace{-1cm}
%\includegraphics[width=3in]{}
%\caption{Performance of {\db } (in blue) on Chrom19.} \label{fig:BasicKmerPerf}
%\end{centering}
%\end{figure}

Lander and Waterman's coverage analysis  applies also to Condition (c) of Theorem~\ref{l:lemmaAbody}, yielding a normalized coverage depth requirement $\cb = 1/(1-K/L)$.  The larger the overlap $K$, the higher the coverage depth required. Conditions (a) and (b) say that the smallest $K$ one can choose is $K=\max\{\Ltri,\Lint\}+1$, so%plugging this value into \eqref{e:Kcov}, we have 
\begin{equation}
\label{e:dbPerf}
\cb =  \frac{1}{1-\frac {\max\{\Ltri,\Lint\}+1}{L}}\,.
\end{equation}
%
%We can translateLemma~\ref{l:lemmaAbody} we compute the required coverage depth. 
%\begin{equation}
%\label{e:dbPerf}
%\cb = \frac{N}{\nc} \gtrsim \frac{1}{1-\frac {1+\max\{\Ltri,\Lint\}}L}\,.
%\end{equation}

The performance of \textsc{DeBruijn} is plotted in Fig.~\ref{fig:plotINTRO}. \textsc{DeBruijn} significantly improves on \textsc{Greedy} by obtaining the correct first order performance: given sufficiently many reads, the read length $L$ may be decreased to $\max\{\Ltri,\Lint\}+1$. Still, the number of reads required to approach this critical length is far above the lower bound. The following subsection pursues reducing $K$ in order to reduce the required number of reads. %In order to reduce the number of reads required while staying within the $K$-mer graph framework, $K$ must be reduced. We pursue this in the following subsection.

%\vspace{-3mm}

\vspace{-2mm}
\subsection{Improved $K$-mer algorithms}\label{sec:ImprovedKmers}
Algorithm \textsc{DeBruijn} ignores a lot of information contained in the reads, and indeed all of the $K$-mer based algorithms proposed by the sequencing community (including \cite{IW95}, \cite{PTW01}, \cite{Sim09}, \cite{Gne11}, \cite{Maccallum:2009qy}, \cite{Zerbino:2008fj}) use the read information to a greater extent than the naive \textsc{DeBruijn} algorithm. Better use of the read information, as described below in algorithms {\bdbI }  and {\bdbII}, will allow us to relax the condition $K> \max\{\Lint,\Ltri\}$ for success of \textsc{DeBruijn}, which in turn reduces the high coverage depth required by Condition (c).

Existing algorithms use read information in a variety of distinct ways to resolve repeats. For instance, Pevzner et al. \cite{PTW01} observe that for graphs where each edge has multiplicity one, if one copy of a repeat is bridged, the repeat can be resolved through what they call a ``detachment''. The algorithm {\bdbI } described below is very similar, and resolves repeats with two copies if at least one copy is bridged. 

Meanwhile, other algorithms are better suited to higher edge multiplicities due to higher order repeats; IDBA (Iterative DeBruijn Assembler) \cite{peng2010idba} creates a series of $K$-mer graphs, each with larger $K$, and at each step uses not just the reads to identify adjacent $K$-mers, but also all the unbridged paths in the $K$-mer graph with smaller $K$. Although not stated explicitly in their paper, we observe here that if all copies of every repeat are bridged, then IDBA correctly reconstructs. 

However, it is suboptimal to require that \emph{all} copies of every repeat up to the maximal $K$ be bridged. We  introduce {\bdbII }, which combines the aforementioned ideas to simultaneously allow for single-bridged double repeats, triple repeats in which all copies are bridged, and unbridged non-interleaved repeats. 

\vspace{-2mm}
\subsubsection{{\sc \bdbI}}

{\bdbI } improves on {\db } by resolving bridged 2-repeats (i.e. a repeat with exactly two copies in which at least one copy is bridged by a read). 
Condition (a) $K>\Lint$ for success of {\db } (ensuring that no interleaved repeats appear in the initial $K$-mer graph) is updated to require only no \emph{unbridged} interleaved repeats, which matches the lower bound. With this change, Condition (b) $K>\Ltri$ forms the bottleneck for typical DNA sequences. Thus {\bdbI }  is optimal with respect to interleaved repeats, but it is suboptimal with respect to triple repeats.

{\bdbI } deals with repeats by performing surgery on certain nodes in the sequence graph. In the sequence graph, a repeat corresponds to a node we call an \emph{\xn}, a node with in-degree and out-degree each at least two (e.g. Fig.~\ref{fig:bridgingStepEx}). A self-loop adds one each to the in-degree and out-degree. The cycle $\CC(\s)$ traverses each {\xn } at least twice, so {\xns } correspond to repeats in $\s$. We call an {\xn } traversed exactly twice a {2-\xn }; %these nodes have in-degree and out-degree 2 and correspond to 2-repeats.An {\xn } is said to be bridged if the corresponding repeat in $\s$ is bridged.  
these nodes correspond to 2-repeats, and are said to be bridged if the corresponding repeat in $\s$ is bridged.

%
%\begin{figure}[htb]
%\begin{centering}
%\includegraphics[width=1.5in]{figs/DogBoneCondensed2.pdf}
%\caption{An {\xn } in a condensed sequence graph is a junction centered at a node $v$ with $\dout(v)\geq 2$ and $\din(v)\geq 2$. A self-loop (e.g. Fig.~\ref{fig:XnodeB}) contributes one each to the in-degree and out-degree. } \label{fig:DogBone}
%\end{centering}
%\end{figure}

In the repeat resolution step of {\bdbI }  (illustrated in Fig.~\ref{fig:bridgingStepEx}), bridged {2-\xns }are duplicated in the graph and incoming and outgoing edges are inferred using the bridging read, reducing possible ambiguity. 
%\begin{figure}[htb]
%\begin{centering}
%\includegraphics[width=2.5in]{figs/bridgingReadCondensed2.pdf}
%\caption{When a read bridges a repeat with two copies, the {2-\xn } corresponding to the repeat is duplicated and potential ambiguity is reduced. } \label{fig:bridgingStep}
%\end{centering}
%\end{figure}

\vspace{-2mm}
\begin{figure}[htb]
\begin{centering}
\includegraphics[width=3in]{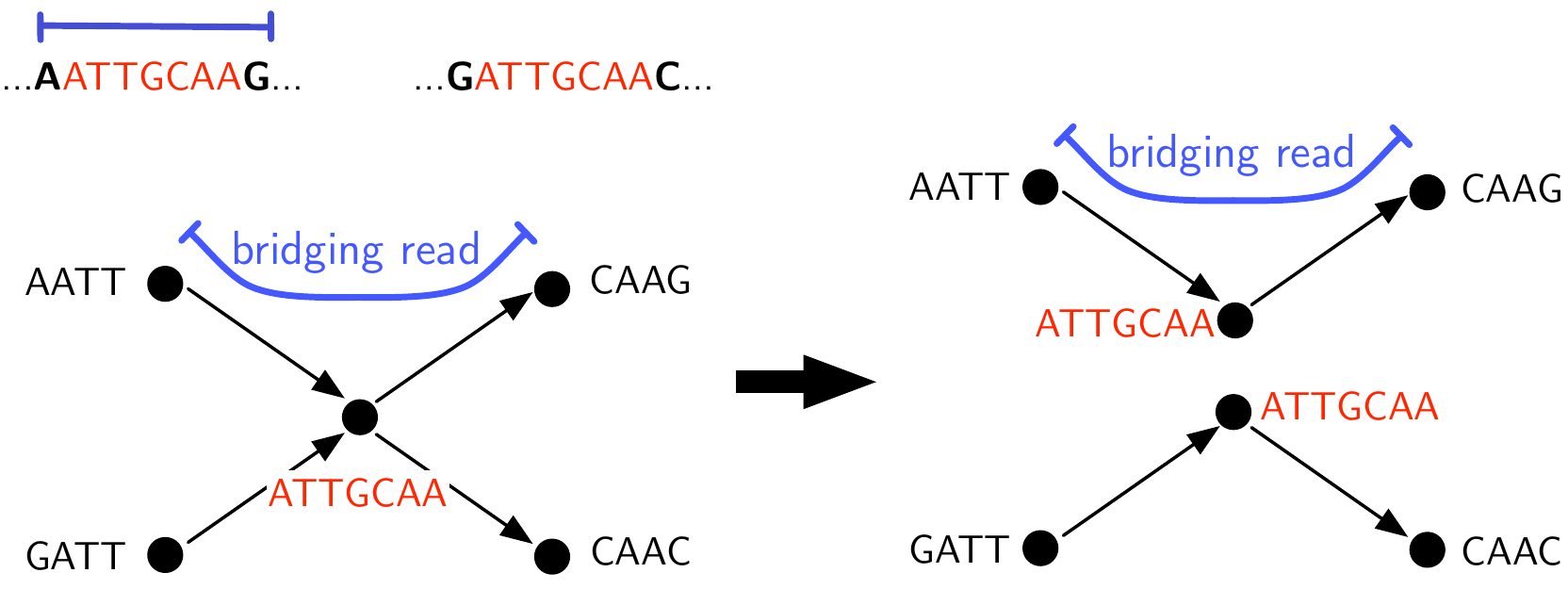}
\caption{An example of the bridging step in {\bdbI}.} \label{fig:bridgingStepEx}
\end{centering}
\vspace{-2mm}
\end{figure}

%In order to explain how {\bdbI } resolves bridged repeats, it will be convenient to talk about \emph{nodes} being bridged. We call a node in the condensed sequence graph \emph{bridged} if the subsequence labeling the node is bridged in $\s$. 

%\begin{figure}[htb]
%\begin{centering}
%%\vspace{-1cm}
%\includegraphics[width=3in]{}
%\caption{Performance of {\bdbI } (in magenta) on Chrom19.} 
%\label{fig:bridging1Ch19}
%\end{centering}
%\end{figure}

\vspace{-3mm}
\begin{theorem}
\label{l:bridgingDeBruijn}
{\bdbI }  with parameter choice $K$ reconstructs the original sequence $\s$ if:
\begin{enumerate}
\item[(a)] all interleaved repeats are bridged
\item[(b)] %no triple repeats of length $\geq$ K, i.e. 
$K>\Ltri$
\item[(c)] adjacent reads overlap by at least K.
\end{enumerate}
\end{theorem}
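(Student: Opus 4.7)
The strategy is to leverage the SBH analysis of Theorem~\ref{t:SBH_no_multiplicities} and show that the resolution step of {\bdbI} simplifies the condensed sequence graph to one in which Pevzner-type uniqueness of the Eulerian cycle still holds. First, condition (c) guarantees that the $(K+1)$-mers appearing in the reads equal the full $(K+1)$-spectrum $\calS_{K+1}$ of $\s$, so the initial $K$-mer graph built by {\bdbI} coincides with the SBH $K$-mer graph of $\s$; after condensing, we obtain a sequence graph $\G$ in which $\s$ corresponds to an Eulerian cycle $\CC$. Next, condition (b), $K>\Ltri$, implies that $\s$ has no triple repeat of length at least $K$, so each branching node of $\G$ is traversed exactly twice by $\CC$ and is therefore a 2-{\xn} associated with a 2-repeat of length at least $K$.

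The repeat-resolution phase then processes each bridged 2-{\xn} $v$ in turn. A bridging read, by definition, covers the repeat together with at least one base on each side of it, so it unambiguously pairs each incoming edge of $v$ with the correct outgoing edge; duplicating $v$ according to this pairing splits $v$ into two degree-2 nodes, preserves $\CC$ as an Eulerian cycle in the modified graph, and creates no new {\xns}. After iterating over all bridged 2-{\xns}, the only branching nodes remaining correspond to \emph{unbridged} 2-repeats of length at least $K$, and by condition (a) every interleaved repeat is bridged, so these surviving 2-{\xns} come only from pairwise non-interleaved repeats. Arguing as in Theorem~\ref{t:SBH_no_multiplicities} then shows that $\CC$ is the unique Eulerian cycle in the final graph, so {\bdbI} outputs $\s$.

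The main obstacle is the last step: justifying uniqueness of the Eulerian cycle when unbridged 2-{\xns} coming from non-interleaved repeats still remain. The key observation is that swapping the pairing at a 2-{\xn} $v$ interchanges the two $v$-to-$v$ arcs of $\CC$, and any other {\xn} $w$ whose two visits lie in different arcs of $\CC$ relative to $v$ would then have its own two visits interleaved with those of $v$, contradicting condition (a); hence no nontrivial rotation of $\CC$ is available and $\CC$ is unique. Making this combinatorial argument fully rigorous---and checking that the local edge duplication in Step~3 never accidentally violates its hypotheses (e.g.\ by merging paths in a way that would create a spurious triple repeat or interleaving in the modified graph)---is the main technical content of the proof.
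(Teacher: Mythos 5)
Your plan is correct and is essentially the paper's own route: condition (c) reduces the shotgun data to the $(K+1)$-spectrum (the Idury--Waterman emulation), condition (b) makes the cycle corresponding to $\s$ in the condensed graph Eulerian with every {\xn } a {2-\xn } (this is exactly the triple-repeat half of the proof of Theorem~\ref{t:SBH_no_multiplicities}), bridged {\xns } are resolved by the unique edge-pairing that a bridging read supplies (Lemma~\ref{l:bridgingReads}), and condition (a) yields uniqueness of the final Eulerian cycle via Pevzner's interleaved-swap characterization (Lemma~\ref{l:Pev95}). Note that the paper never writes this proof out for {\bdbI}: it supplies the ingredients (Lemmas~\ref{l:condensed}, \ref{l:bridgingReads}, \ref{l:Pev95}) and instantiates the full argument only for the {\bdbII } analogue (Theorem~\ref{l:dogBones}), whose Claim~3 performs your uniqueness step by relabeling the resolved graph as a 1-mer graph rather than arguing the swap structure directly---the same argument in substance. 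Finally, the obstacle you flag largely dissolves in this setting: since the cycle is Eulerian from the start and each resolution step preserves this, every node has in-degree $=$ out-degree $=$ multiplicity, so any contraction chain triggered by a resolution step begins at a node of in-degree one and therefore cannot create a new {\xn}, a triple traversal, or a new interleaving; this is precisely where {\bdbI } is simpler than {\bdbII}, whose pre-Eulerian graphs force the paper to use the weaker invariant that {\xns } continue to correspond to repeats of $\s$, together with a decreasing potential for termination.
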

%\begin{proof} The proof is very similar to that of Lemma~\ref{l:dogBones} below and is omitted. 
%\end{proof}

By the union bound,
\begin{align}
\nonumber
&  \P[\mbox{some interleaved repeat is unbridged}] \\
\label{e:int_ub}
&\le  \sum_{m,n} b_{m,n} \left(\pub_m\right)^2 \left(\pub_n\right)^2
\end{align}
where $b_{m,n}$ is the number of interleaved repeats in which one repeat is of length $m$ and the other is of length $n$. To ensure that condition (a) in the above theorem fails with probability no more than $\eps$, the right hand side of (\ref{e:int_ub}) is set to be  $\eps$; this imposes a constraint on the coverage depth. Furthermore, conditions (b) and (c) imply that the normalized coverage depth 
 $\cb \ge  1/(1-(\Ltri+1)/L)$. These two constraints together yield the performance curve of  {\bdbI } in
Figure~\ref{fig:plotINTRO}.

\vspace{-2mm}
%\subsubsection{Resolving triple repeats: {\sc \bdbII}}
\subsubsection{{\sc \bdbII}}
\label{sec:bdbII}
%\texttt{\color{red} FIXME: write paragraph to put this in context of existing work.}

We now turn to triple repeats. As previously observed, it can be challenging to resolve repeats with more than one copy  \cite{PTW01}, because an edge into the repeat may be paired with more than one outgoing edge. As discussed above, our approach here shares elements with IDBA   \cite{peng2010idba}: we note that increasing the node length serves to resolve repeats. Unlike IDBA, we do not increase the node length globally.
%, in order to allow for 

%We now turn to triple repeats. 
%The algorithm {\bdbII }  has a more sophisticated \emph{bridging step II}, which resolves higher order repeats. 
As noted in the previous subsection, repeats correspond to nodes in the sequence graph we call \emph{{\xns }}. Here the converse is false: not all repeats correspond to {\xns}. 
% (c.f. Fig.~\ref{fig:DogBone}), nodes with in-degree and out-degree each at least two. All {\xns } correspond to repeats, but not all repeats correspond to {\xns}. 
A repeat is said to be \emph{all-bridged} if \emph{all} repeat copies are bridged, and an {\xn } is called all-bridged if the corresponding repeat is all-bridged. 

\begin{figure}[htb]
\vspace{-2mm}
\centering{%\hspace{0.5cm}
\includegraphics[width=2.7in]{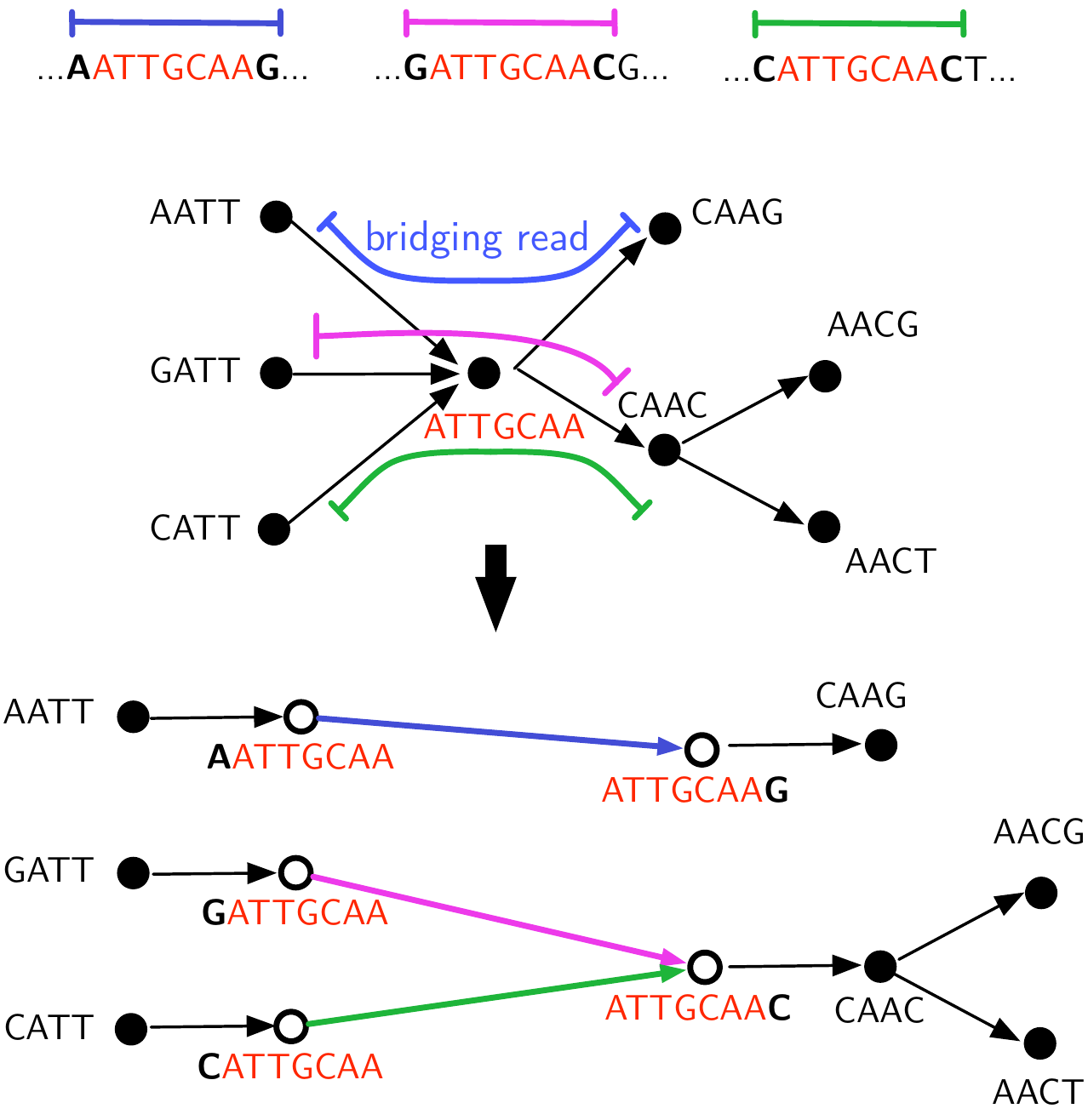}
 }
 \vspace{-3mm}
\caption{{\bdbII } resolves an {\xn } with label ATTGCAA corresponding to a triple repeat.}
\label{fig:XnodeA}
\vspace{-4mm}
\end{figure}

%In order to remove dogbones while maintaining a graph in which nodes are all distinct, and all edges correspond to physical overlap of the sequences, nodes larger than the repeat length must be introduced. 
The requirement that triple repeats be all-bridged allows them to be resolved \emph{locally} (Fig.~\ref{fig:XnodeA}). The {\xn } resolution procedure given in Step~4 of {\bdbII } %(stated precisely in Appendix~\ref{sec:bdbIIApp}) 
can be interpreted in the $K$-mer graph framework as increasing $K$ locally so that repeats do not appear in the graph. In order to do this, we introduce the following notation for extending nodes: 
Given an edge  $(\bv,\bq)$ with weight $a_{\bv,\bq}$, let $\vq$ denote $\bv$ extended one base to the right along $(\bv,\bq)$, % the subsequence $\bv$ appended with the single next base in the merging of $\bv$ and $\bq$,
 i.e. $\vq = \bv \, \bq_{\olap\bv\bq+1}^{1}$ (notation introduced in Sec.~\ref{sec:Ukkonen}).  Similarly, let $\pv = \bp_{\vecend-\olap\bp\bv}^{1} \,  \bv $. {\bdbII } is described as follows.

\begin{algorithm}[H]
\caption{{\sc \bdbII}. Input: reads $\RR$, parameter $K$. Output: sequence $\sh$. }
\label{alg3}
\emph{$K$-mer steps 1-3:}\\
1. For each subsequence $\x$ of length $K$ in a read, form a node with label $\x$.  \\
2. For each read, add edges between nodes representing adjacent $K$-mers in the read. \\
3. Condense the graph (c.f. Fig.~\ref{fig:condensing}). \\ %as described in Defn.~\ref{d:condensed}.\\
4.~\emph{Bridging step:} (See Fig.~\ref{fig:XnodeA}). While there exists a bridged {\xn } $\bv$: (i) For each edge $(\bp_i,\bv)$ with weight $a_{\bp_i,\bv}$, create a new node $\bu_i = \pvinp{\bp_i}\bv$ %that extends $\bv$ by $1$ base along that edge, 
and an edge $(\bp_i, \bu_i)$ with weight $1+a_{\bp_i,\bv}$. Similarly for each edge $(\bv, \bq_j)$, create a new node $\bw_j=\vqinp\bv{\bq_j}$ and edge $(\bw_j,\bq_j)$. 
(ii) If $\bv$ has a self-loop $(\bv,\bv)$ with weight $a_{\bv,\bv}$, add an edge $(\vqinp\bv\bv,\pvinp\bv\bv)$ with weight $a_{\bv,\bv}+2$. % (c.f. Fig.~\ref{fig:XnodeB}).
(iii) Remove node $\bv$ and all incident edges.
(iv) For each pair $\bu_i,\bw_j$ adjacent in a  read, 
add edge $(\bu_i,\bw_j)$. If exactly one each of the $\bu_i$ and $\bw_j$ nodes have no added edge, add the edge.   (v) Condense graph. 
%If $\langle w_i,w_j'\rangle$ is a subsequence of a read, add the edge $(u_i,w_j')$, and if there is exactly one pair $w_i,w_j'$ of un-paired vertices then add the edge $(w_i,w_j')$. Add dashed edges $e_i = (u_i,w_i)$ and $e_i = (w_i', u_i')$ corresponding to prefix/suffix relationships. In the special event of a self-loop, $u_i=v_q$, then the node $u_i=v_q$ will have been deleted, so do not add dashed edge $e_i = (u_i,w_i)$. 
\\
5.~\emph{Finishing step:} Find an Eulerian cycle in the graph and return the corresponding sequence.
\end{algorithm}
%\vspace{-2mm}

\begin{theorem}
\label{l:dogBones}
 The algorithm {\bdbII }  reconstructs the sequence $\s$ if:
\begin{enumerate}
\item[(a)] all interleaved repeats are bridged
\item[(b)] all triple repeats are \textbf{all-bridged}
\item[(c)] the sequence is covered by the reads.
\end{enumerate}
\end{theorem}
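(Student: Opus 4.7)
The plan is to reduce the theorem to a post-bridging analog of Theorem~\ref{t:SBH_no_multiplicities}. I would fix $K$ small enough that coverage (hypothesis (c)) forces every $(K+1)$-mer of $\s$ to appear as two consecutive $K$-mers within some read, so that Steps~1--3 produce exactly the condensed $K$-mer graph of the $(K+1)$-spectrum of $\s$. As in Section~\ref{sec:KmerBackgound}, every X-node $\bv$ of this graph corresponds to a repeat of $\s$ of length at least $|\bv|\ge K$, with multiplicity equal to the number of copies, and the Eulerian cycle for $\s$ traverses $\bv$ once per copy.

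The heart of the argument is to show that Step~4, applied to a bridged X-node $\bv$, correctly splits $\bv$ into its copies, indexed by the traversals of the Eulerian cycle of $\s$. The extended nodes $\bu_i=\pvinp{\bp_i}\bv$ and $\bw_j=\vqinp\bv{\bq_j}$ encode the one-base flanking context of each incoming and outgoing edge, so each traversal of $\bv$ in $\s$ corresponds to a unique pair $(\bu_i,\bw_j)$. I would verify that: every edge added in Step~4(iv) from a bridging read corresponds to a genuine traversal of $\bv$; when $\bv$ is all-bridged, every traversal is witnessed by some read so all pairings are recovered; and when $\bv$ is a 2-X-node (a 2-repeat) with only one copy bridged, the lone missing pairing is forced by the clause ``exactly one each of the $\bu_i$ and $\bw_j$ nodes have no added edge.'' The self-loop sub-case~4(ii) requires parallel bookkeeping but reduces to the same argument.

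After the while loop terminates, the remaining X-nodes correspond exactly to repeats for which no copy is bridged. Hypothesis~(b) excludes any residual triple X-node, and hypothesis~(a) implies that in every interleaved pair at least one repeat is bridged (hence already resolved), so no two residual X-nodes form an interleaved pair. A natural extension of Theorem~\ref{t:SBH_no_multiplicities} therefore applies to the post-bridging graph, which admits a unique Eulerian cycle, necessarily the cycle of $\s$, and this cycle is returned by Step~5.

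The main obstacle will be the soundness-and-completeness argument of the second paragraph: one must show that no spurious edges are added in 4(iv), that every true traversal is captured either directly by a bridging read or via the ``exactly one each'' clause, and that iteratively resolving X-nodes does not corrupt unrelated X-nodes through shared read substrings. The self-loop sub-case and the bookkeeping of edge multiplicities after 4(i)--(iii) require careful local casework, whereas the first and third paragraphs above are comparatively mechanical.
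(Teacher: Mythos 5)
Your overall architecture mirrors the paper's proof: establish that Steps~1--3 yield a condensed sequence graph containing a cycle for $\s$ that traverses every edge (the paper's Lemma~\ref{l:cycleLemma}), show that the bridging step preserves this cycle and correctly pairs incoming with outgoing edges (the paper's Claim~1, resting on Lemma~\ref{l:bridgingReads}), and then argue that the residual graph has a unique Eulerian cycle which must be $\CC(\s)$. Your second paragraph, which you identify as the main obstacle, indeed corresponds to the paper's induction over resolution steps; note that the paper also supplies a termination argument for the while loop via the strictly decreasing potential $\LL(\G_i)=\sum_{\bv\in\G_i} m_{\CC_i}(\bv)\mathbf{1}_{m_{\CC_i}(\bv)>1}$, which your sketch omits.

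The genuine gap is in your third paragraph, which you dismiss as ``comparatively mechanical.'' It is not: it is where the paper's key technical work lives. First, excluding residual triple {\xns} does not by itself yield Eulerian-ness. A priori a node that is \emph{not} an {\xn} (say with in-degree $1$ and out-degree $2$) could be traversed three times, or an edge could be traversed twice, without any {\xn} being traversed three times; your hypotheses say nothing about such nodes. The paper closes this hole with Lemma~\ref{c:SEclaim}: in a condensed sequence graph that is not itself a cycle, any node of maximum multiplicity at least $2$ is an {\xn}; hence multiplicity $\geq 3$ anywhere forces an {\xn} of multiplicity $\geq 3$, i.e.\ an all-bridged triple repeat that would still be resolvable, a contradiction --- and only then does the contraction argument rule out doubly-traversed edges. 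Second, Theorem~\ref{t:SBH_no_multiplicities} (and Pevzner's Lemma~\ref{l:Pev95} behind it) has \emph{sequence-level} hypotheses about repeats of $\s$ of length at least $K$, whereas after the surgery of Step~4 the residual graph is no longer the condensed $K$-mer graph of any spectrum, and your hypotheses are \emph{graph-level} (about residual {\xns}). The paper bridges this mismatch by relabeling each node of the residual graph with a fresh symbol, producing an auxiliary sequence $\s'$ whose $1$-mer graph is exactly the residual graph with the order of repeats preserved, and only then invokes Lemma~\ref{l:Pev95}, pulling unbridged interleaved repeats of $\s'$ back to unbridged interleaved repeats of $\s$ in contradiction with hypothesis~(a). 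Your ``natural extension'' is a true statement, but proving it requires exactly these two devices --- the maximum-multiplicity lemma and the relabeling reduction --- neither of which appears in your plan.
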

%\begin{proof}
%The proof is contained in the appendix.
%\end{proof}
A similar analysis as for {\bdbI} yields the performance curve of {\bdbII} in Figure~\ref{fig:plotINTRO}.

\comments{
\subsection{Hybrid algorithm}
\texttt{\color{red} FIXME: discussion needs to be modified in light of good performance of K-mer algo}\\ \\
Examining Figs.~\ref{fig:bridging2Ch19} and \ref{fig:greedyCh19} reveals that given sufficient coverage depth, {\bdbII }  achieves the read length lower bound, $L>\max\{\Lint,\Ltri\}$, while if $L>\Lrep$ then \GR achieves the lower bound $\cb\geq1$ on the coverage depth. It would be desirable, however, to \emph{simultaneously} achieve both of the lower bounds, since this necessitates both the smallest read length $L$ and the least coverage depth. Motivated by the optimality of each of the two algorithms in the respective regimes, we seek to combine the two algorithms into a \emph{hybrid} algorithm  achieving the best of both worlds.

\subsubsection{Kmer-graph based Hybrid algorithm (I or II)}
Starting with the de Bruijn graph-based algorithm {\bdbI }  or \textsc{II}, we wish to relax condition (c), that each read overlaps its successor by at least $K$. If condition (c) is not satisfied, then some of the $K$-mers at the \emph{ends} of reads will not have edges in the de Bruijn graph. The natural next step is to \emph{greedily} add edges between $K$-mers lacking an edge. Of course, adding edges between $K$-mers at this point is the same as adding edges between reads since the reads overlap by less than $K$. Now, we would still like to resolve dogbones using the bridging read information, and this is done in Step~5 of \textsc{Hybrid}. But because reads do not always overlap one another by at least $K$, the existence of a dogbone in the graph corresponding to each repeat is not guaranteed. In order that dogbones appear for each repeat we add the condition that every maximal repeat of length $\geq K$ has its first and last $(K-1)$-mer bridged in all copies. \\

\begin{mdframed}
\textbf{Algorithm}{\textsc{Hybrid I/II}}. Input: reads $\RR$. Output: sequence $\sh$. \\ 
\noindent{\emph{$K$-mer steps:}\\
1. For each substring $\x$ of length $K$ in a read, form a node with label $\x$.  \\
2. For each read, add edges between nodes representing adjacent $K$-mers in the read. \\
\emph{Greedy steps:}\\
3. Consider all pairs of nodes $\x_1,\x_2$ in $\G$ satisfying $\dout(\x_1)=\din(\x_2)=0$, and add an edge $(\x_1,\x_2)$ with largest value $\ov(\x_1,\x_2)$. \\
4. Repeat Step 2 until no candidate pair of nodes remains.\\
5.~\emph{Bridging step:}~Step 3 from algorithm \textsc{BridgingDeBruijn I or II}.\\
6.~\emph{Finishing step:} Find a SE-cycle in $\G$. Return the sequence corresponding to the cycle.}\\
\textbf{Sufficient conditions for reconstruction:} (a) no \emph{unbridged} interleaved  maximal repeats,
(b) same as condition (b) of \textsc{BridgingDeBruijn I or II},
(c) modified condition on overlaps:
(i) no unbridged repeats of length $\leq K$ and (ii)
every maximal repeat of length $\geq K$ has its last and first $(K-1)$-mer bridged in all copies.  \\
\textbf{Sufficient conditions on genome statistics:} (a) $\cb \geq \max_m \frac{\be_m}{2(2-\frac mL)}$ and $L>\Lint$, (b) same as condition (b) of \textsc{BridgingDeBruijn I or II} (c) i. $\cb \geq \max_{m\leq K} \frac{\al_m}{2(1-\frac mL)}$, ii.
 $\cb>\frac{\log(4 \sum_{\ell \geq K} a_\ell )}{\log N} \frac{1}{1-\frac KL}$.
\end{mdframed}

\begin{figure}[htb]
\begin{centering}
%\vspace{-1cm}
%\includegraphics[height=3.5in]{figs/intStat22.pdf}
\caption{Performance of \textsc{Hybrid II} on Chrom19.} \label{fig:hybrid2Ch19}
\end{centering}
\end{figure}

\begin{lemma}
\label{l:hybrid2}
Suppose the sequence $\s$ and reads satisfy the following assumptions:
\begin{enumerate}
\item[(a)] interleaved repeats: no \textbf{unbridged} interleaved maximal repeats
\item[(b)] triple repeats: no triple repeats of length $\geq$ K which are not \textbf{all-bridged}
\item[(c)] coverage depth: (i) no unbridged repeats of length $\leq K$ and (ii)
every maximal repeat of length $\geq K$ has its last and first $(K-1)$-mer bridged in all copies.
\end{enumerate}
Then \textsc{Hybrid II} with parameter choice $K$ successfully reconstructs the sequence $\s$.
\end{lemma}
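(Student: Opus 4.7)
The plan is to establish three claims: (i) the condensed sequence graph $\G$ produced by Steps~1--3 contains the natural Eulerian cycle $\CC(\s)$ that spells out $\s$; (ii) the bridging Step~4 preserves $\CC(\s)$ while correctly resolving every bridged X-node; and (iii) after Step~4 terminates, the Eulerian cycle of the final graph is unique, so Step~5 returns $\s$.

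For~(i), coverage (interpreted at the level of $(K+1)$-mers, as implicitly required by the choice of $K$) ensures that every $(K+1)$-mer of $\s$ appears in at least one read and hence as an edge of $\G_0$, so $\CC(\s)$ is an Eulerian cycle of $\G_0$. Contracting unambiguous paths in Step~3 maps this to an Eulerian cycle of $\G$ in which each node is visited as many times as its label occurs in $\s$. In particular, an X-node (in- and out-degree at least $2$) corresponds to a maximal repeat of $\s$, and an $f$-copy repeat yields an X-node with in- and out-degree both equal to $f$.

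For~(ii), I would prove by induction on iterations the invariant that the current graph admits an Eulerian cycle spelling out $\s$ and that every edge added so far lies on this cycle. The key local facts are: any read bridging a copy of an X-node $\bv$ directly realizes an incoming-outgoing transition $(\bp_i,\bq_j)$ used by $\CC(\s)$ at that copy, so edges added in Step~4(iv) are genuine; for a 2-X-node with one bridged copy, the exclusion clause in Step~4(iv) forces the remaining pairing; for an X-node of multiplicity $f \geq 3$, condition~(b) forces \emph{every} copy to be bridged (any unbridged copy would participate in a triple repeat that is not all-bridged), so each of the $f$ pairings is directly realized. Splitting $\bv$ into $\{\bu_i\}, \{\bw_j\}$ and re-condensing reroutes the cycle through the split node, and since each $\bu_i$ has in-degree one and each $\bw_j$ has out-degree one by construction, no new X-nodes are created by the operation.

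For~(iii), the remaining X-nodes after Step~4 must be 2-X-nodes corresponding to 2-repeats that are both unbridged and pairwise non-interleaved, because condition~(a) guarantees that each interleaved pair has at least one bridged member, which is therefore resolved during Step~4, thereby breaking the interleaving with its partner. I would then invoke the standard Eulerian argument underlying Theorem~\ref{t:SBH_no_multiplicities}: at an isolated, unbridged 2-X-node the alternative pairing of its two incoming and two outgoing edges decomposes the global cycle into two disjoint cycles rather than one Eulerian cycle, so the true pairing is forced. Hence $\CC(\s)$ is the unique Eulerian cycle of the final graph and Step~5 returns $\s$.

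The main obstacle will be the bookkeeping in~(ii): showing that resolving one X-node does not leave some other, still-unbridged repeat newly ``interleaved'' with a surviving fragment, so that conditions~(a) and~(b) remain applicable to the evolving graph. I expect this to follow from a monotonicity statement that the splitting operation only decreases the set of interleaved X-node pairs, combined with a careful treatment of the self-loop case handled in Step~4(ii), but isolating the right invariant is the delicate part.
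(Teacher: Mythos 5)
There is a genuine gap, and it is fatal: your proof is a proof about \textsc{\bdbII}, not about \textsc{Hybrid II}. The defining feature of the hybrid algorithm is its greedy steps (Steps 3--4), which add maximal-overlap edges between nodes with $\dout=0$ or $\din=0$, precisely because under the lemma's hypotheses consecutive reads need \emph{not} overlap by $K$, so the pure $K$-mer steps leave edges missing at read boundaries. Your claim~(i) assumes ``coverage interpreted at the level of $(K+1)$-mers,'' i.e.\ that every $(K+1)$-mer of $\s$ appears in some read; that is equivalent to adjacent reads overlapping by at least $K$, which is exactly the condition (Condition~(c) of Theorems~\ref{l:lemmaAbody} and~\ref{l:bridgingDeBruijn}) that the hybrid algorithm is designed to relax, and it is not implied by hypothesis~(c) of the lemma. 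Under the actual hypotheses, $\CC(\s)$ is \emph{not} a cycle in the graph produced by Steps~1--3, so your induction in~(ii) never gets started, and the X-node/Eulerian analysis you import from Theorem~\ref{l:dogBones} does not apply to the graph the algorithm actually builds.

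The tell is that your argument never invokes hypotheses~(c)(i) and~(c)(ii), which are the substance of the lemma. A correct proof must use (c)(i) --- no unbridged repeats of length $\leq K$ --- to show the greedily added edges correspond to true adjacencies in $\s$ (an argument parallel to the proof of Theorem~\ref{thm:GreedyTheorem}: a wrong greedy merge of two zero-out/in-degree nodes forces an unbridged repeat of length at most $K$), and must use (c)(ii) --- every maximal repeat of length $\geq K$ has its first and last $(K-1)$-mer bridged in all copies --- to guarantee that every long repeat still manifests as an X-node with the correct in/out edges, so that the subsequent bridging step has something to resolve; without this, a repeat copy entered or exited only via greedy edges can fail to create the branching structure, and the finishing step can return a wrong sequence even though your claims~(ii) and~(iii) would go through formally. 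For what it is worth, the paper itself defers this proof to an appendix that does not in fact contain it (the hybrid section is excised from the compiled text), so there is no reference proof to compare against; but the missing treatment of the greedy steps is a gap independent of that.
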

\begin{proof}
The proof is contained in the appendix.
\end{proof}

\subsubsection{Read-overlap hybrid algorithm I}

\begin{mdframed}

\textbf{Algorithm}{\textsc{K-edge read-overlap based}}. Input: reads $\RR$. Output: sequence $\sh$. \\ 
\noindent{\emph{$K$-overlap steps:}\\
1. For each read, form a node with label $\x$.  \\
2. For each read, add edges to reads overlapping by at least $K$ bases. \\
3. Consider an edge to be \emph{necessary} if it is the only edge entering a read, or the only edge exiting a read (i.e. $(\x_1,\x_2)$ is necessary if  $\dout(\x_1)=1 $ or $\din(\x_2)=1$). Then for any read with multiple outgoing edges, keep only the necessary outgoing edges, and likewise for a read with multiple incoming edges, keep only necessary incoming edges.\\
4. Any time two reads $\x_1$ and $\x_2$ each have two outgoing edges to the same pair of reads $\x_3$,$\x_4$, create a new node $\x_5$ with label formed by the union of $\ov(\x_1,\x_2), \ov(\x_3, \x_4)$, and replace edges with $(\x_1, \x_5),(\x_2, \x_5), (\x_5, \x_3), (\x_1, \x_4)$.\\
5.~\emph{Finishing step:} Find a SE-cycle in $\G$. Return the sequence corresponding to the cycle.}\\
\textbf{Sufficient conditions for reconstruction:} (a) no \emph{unbridged} interleaved  maximal repeats,
(b) no triple repeats of length at least $K$,
(c) condition on overlaps at repeat boundary: every read overlaps its successor by at least $K$, if a read has multiple successors (and likewise for predecessors).  \\
\textbf{Sufficient conditions on genome statistics:} (a) $\cb \geq \max_m \frac{\be_m}{2(2-\frac mL)}$ and $L>\Lint$, (b) $L>K>\Ltri$ (c) \texttt{fill this in-- is it the same as before?}.
\end{mdframed}

\subsubsection{Read-overlap graph based hybrid algorithm II}
}

\vspace{-2mm}
\subsection{Gap to lower bound} 
\label{sec:gap}

The only difference between the sufficient condition guaranteeing the success of {\bdbII} and the necessary condition of the lower bound is the bridging condition of {\em triple} repeats: while {\bdbII} requires bridging {\em all three copies} of the triple repeats, the necessary condition requires only bridging {\em a single copy.} When $\Ltri$ is significantly smaller than   $\Lint$, the bridging requirement of interleaved repeats dominates over that of triple repeats and {\bdbII} achieves very close to the lower bound. This occurs in \hc{19} and the majority of the datasets we looked at. (See Fig.~\ref{fig:sim} and the plots in the supplementary material.) A critical phenomenon occurs as $L$ increases: for $L<\lcrit$ reconstruction is impossible, over a small critical window the bridging requirement of interleaved repeats (primarily the longest) dominates, and then for larger $L$, %to the right of the window coverage of the sequence suffices. 
coverage suffices.
% a transition from impossible to reconstruct when $L < \lcrit$ to only requiring coverage occurs over a small critical window where bridging requirement of the interleaved repeats (primarily that of the longest interleaved repeats) dominates. 

On the other hand, when $\Ltri$ is comparable or larger than $\Lint$, then {\bdbII} has a gap in the  coverage depth to the lower bound (see for example Fig.~\ref{fig:bridging2triplesINTRO}).  If we further assume that the longest triple repeat is dominant, then this gap can be calculated to be a factor of  
$  3\cdot \frac{\log3\eps\inv}{\log \eps\inv}\approx 3.72 \quad \text{for } \eps = 10^{-2}.$
This gap occurs only within the critical window where the repeat-bridging constraint is active. Beyond the critical window, the coverage constraint dominates and {\bdbII} is optimal. Further details are provided in the appendices.

\vspace{-2mm}
\section{Simulations and complexity}
\label{sec:simulations}

\newcommand{\simfigsize}{1.8in}

\begin{figure*}[t!]
%\centering{
\subfloat[S. Aureus ]{
\includegraphics[height=\simfigsize]{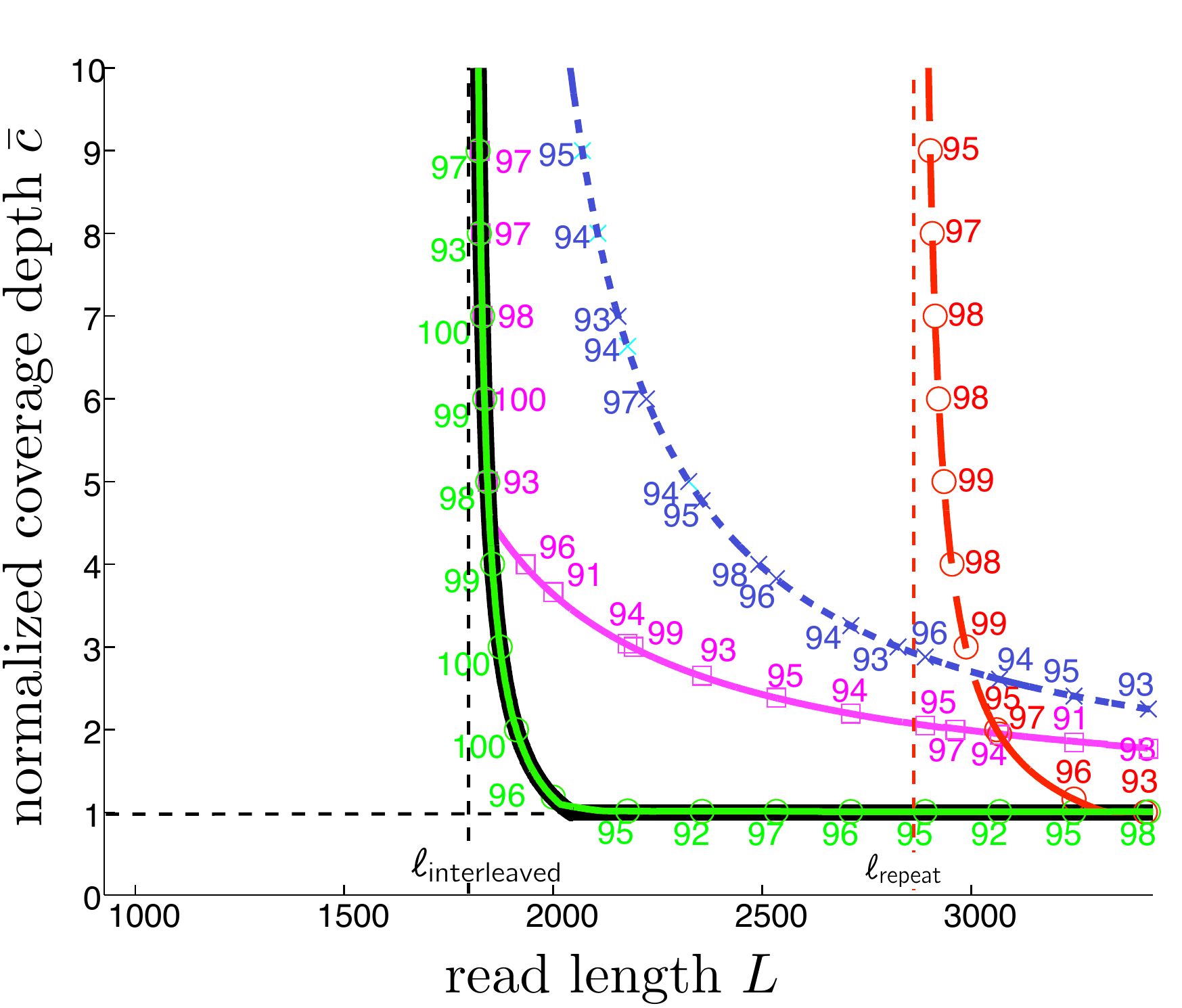}}
%\label{fig:sim1}
%}
%\vspace{-1mm}
%\label{fig:sims}
\subfloat[R. sphaeroides]{
\includegraphics[height=\simfigsize]{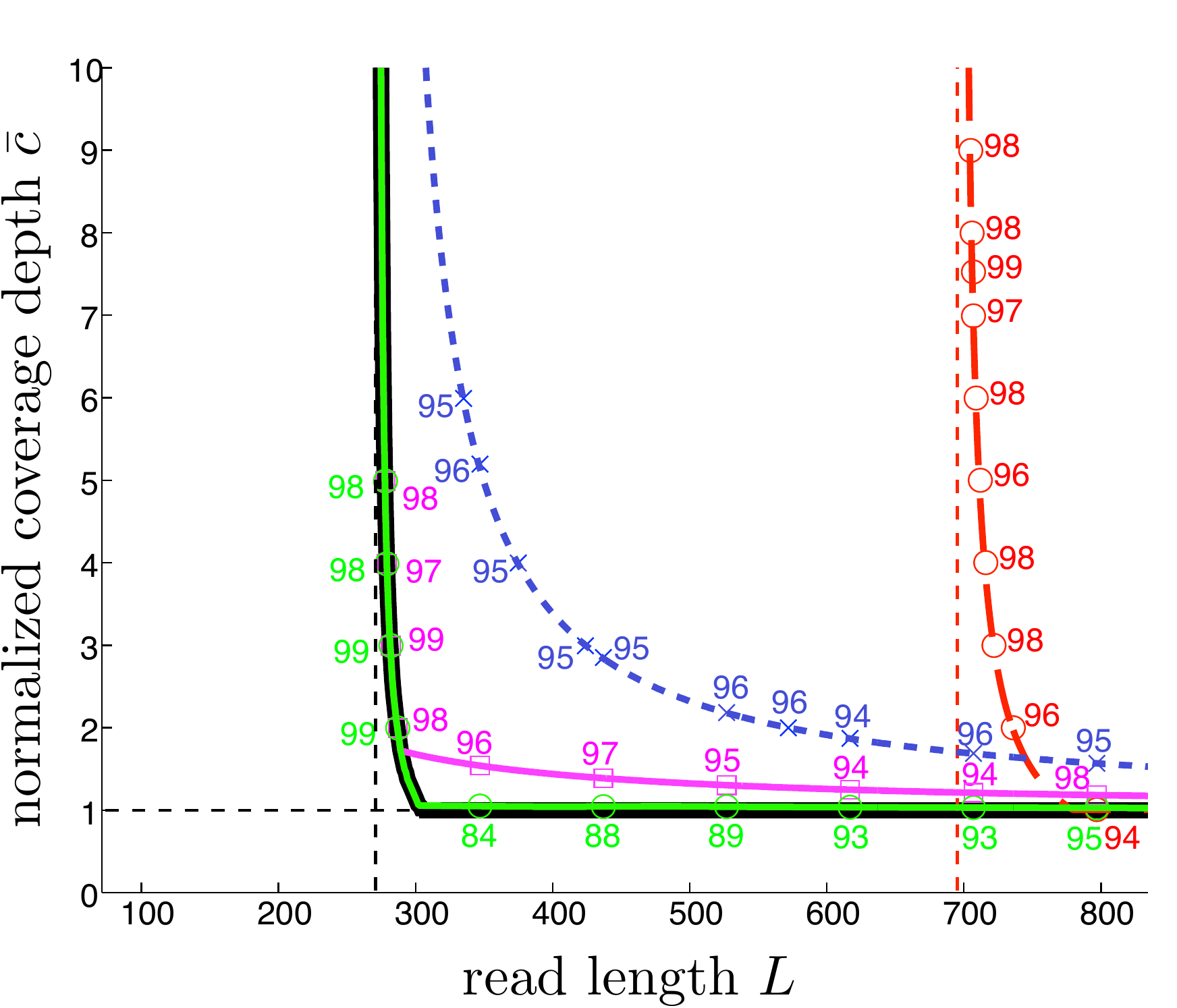} }
\subfloat[\hc{14}]{
\includegraphics[height=\simfigsize]{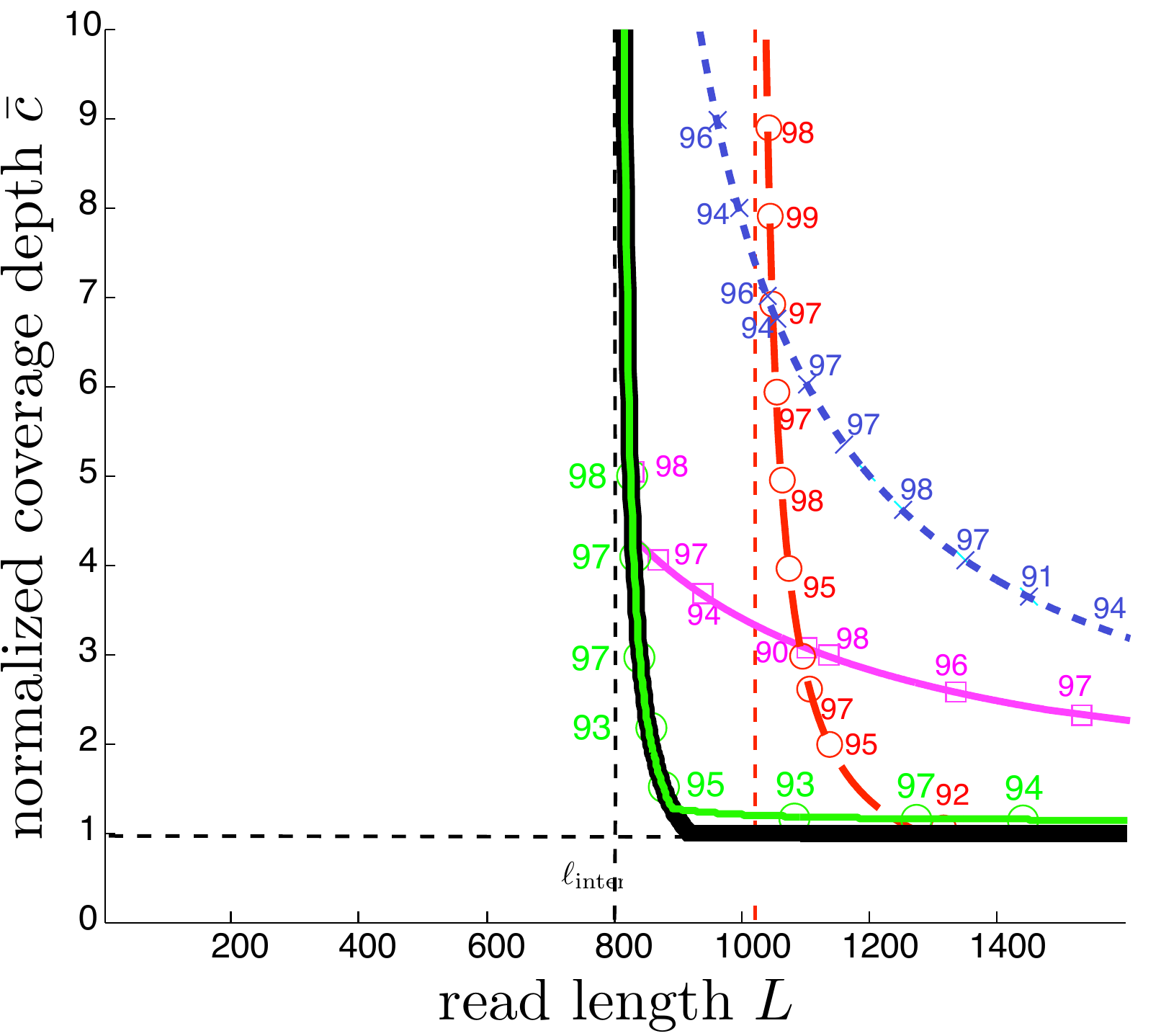}
}
\vspace{-2mm}
\caption{Simulation results for each of the GAGE reference genomes. Each simulated $(N,L)$ point is marked with the number of correct reconstructions (e.g. 93, 98, 95) on 100 simulated read sets. All four algorithms ({\GR }, {\db }, {\bdbI }, and {\bdbII}) were run on \emph{S. Aureus}, \emph{R. sphaeroides} and \hc{14}. %Only {\bdbII } was run on the relatively long \hc{14} due to limited computational resources. 
Note that {\bdbII} is very close to the lower bound on all $3$ datasets. }
\label{fig:sim}
\vspace{-4mm}
\end{figure*}

In order to verify performance predictions, we implemented and ran the algorithms on simulated error-free reads from sequenced genomes. For each algorithm, we sampled $(N,L)$ points predicted to give $<5\%$ error, and recorded the number of times correct reconstruction was achieved out of $100$ trials. Fig.~\ref{fig:sim} shows results for the three GAGE reference sequences. %datasets.

We now estimate the run-time of \bdbII. The  algorithm has two phases: the $K$-mer graph formation step, and the repeat resolution step. The $K$-mer graph formation runtime can be easily bounded by $O((L-K) N K)$, assuming $O(K)$ look-up time for each of the $(L-K) N$ $K$-mers observed in reads. This step is common to all $K$-mer graph based algorithms, so previous works to decrease the practical runtime or memory requirements %\FIXME{\cite{} }
are applicable. 

The repeat resolution step depends on the repeat statistics and choice of $K$. It can be loosely bounded as 
%\vspace{-2mm}
%\vspace{-1mm}
$O \Big(  \sum_{\ell=K}^{L} L \sum_{\text{max repeats $x$}\atop\text{of length } \ell} d_x \Big).
%\vspace{-1mm}
$
The second sum is over distinct maximal repeats $x$ of length $\ell$ and $d_x$ is the number of (not necessarily maximal) copies of repeat $x$.  The bound comes from the fact that each maximal repeat of length $K< \ell < L$ is resolved via exactly one bridged {\xn}, and each such resolution requires examining at most the $L d_x$ distinct reads that contain the repeat. We note that 
%\vspace{-1mm}
$ \sum_{\ell=K}^{L} L \sum_{\text{max repeats $x$}\atop\text{of length } \ell} d_x < L   \sum_{\ell = K}^L  a_\ell\,, $ 
%\vspace{-3mm}
and the latter quantity is easily computable from our sufficient statistics. %Creating the list of reads containing each repeat can be done as part of the $K$-mer graph formation step, and maintaining it does not add complexity to the resolutions. 

For our data sets, with appropriate choice of $K$, the bridging step is much simpler than the $K$-mer graph formation step: for \emph{R. sphaeroides}  we use $K=40$ to get $\sum_{\ell = K}^L L a_\ell =  412$; in contrast, $N > 22421$ for the relevant range of $L$. Similarly, for \hc{14}, using $K=300$, $\sum_{\ell = K}^L L a_\ell =  661$ while $N>733550$; for \emph{S. Aureus},  $\sum_{\ell = K}^L L a_\ell =558$ while $N>8031$.

% 0.5 pages

%\section{Multiple read lengths}
%\label{sec:TwoReadLengths}
%\input{twoReadLengths}

\vspace{-2mm}
\section{Discussions and extensions}
\label{sec:conclusions}
The notion of {\em optimal shotgun assembly} is not commonly discussed in the literature.  One reason is that there is no universally agreed-upon metric of success.  Another reason is that most of the optimization-based formulations of assembly have been shown to be NP-hard, including Shortest Common Superstring \cite{Gallant80}, \cite{KM93}, De Bruijn Superwalk  \cite{PTW01}, \cite{medvedev07}, and Minimum s-Walk on the string graph \cite{Mye05}, \cite{medvedev07}. Thus, it would seem that  optimal assembly algorithms are out of the question from a computational perspective. What we show in this paper is that if the goal is complete reconstruction, then one can define a clear notion of optimality, and moreover there is a computationally efficient assembly algorithm ({\bdbII}) that is near optimal for a wide range of DNA repeat statistics. So while the reconstruction problem may well be NP-hard, typical instances of the problem seem much easier than the worst-case,  a possibility already suggested by Nagarajan and Pop \cite{NP09}. 
%%Indeed, the performance of {\bdbII} described in this paper has runtimes that depends explicitly on the repeat statistics, and are efficient for typical %%statistics,  as discussed in Section  \ref{sec:simulations}.

The {\bdbII} algorithm is near optimal in the sense that, for a wide range of repeat statistics,  it requires the minimum read length and minimum coverage depth to achieve complete reconstruction. However, since the repeat statistics of a genome to be sequenced are usually not known in advance, this minimum required read length and minimum required coverage depth may also not be known in advance.  In this context, it would be useful for the {\bdbII} algorithm to {\em validate} whether its assembly is correct. More generally, an interesting question is to seek algorithms which are not only optimal in their data requirements but also provide a measure of  confidence in their assemblies.

How realistic is the goal of complete reconstruction given current-day sequencing technologies? The minimum read lengths $\lcrit$ required for complete reconstruction on the datasets we examined  are typically on the order of $500-3000$ base pairs (bp). This is substantially longer than the reads produced by Illumina,  the current dominant sequencing technology, which produces reads of lengths 100-200bp; however, other technologies produce longer reads. PacBio reads can be as long as several thousand base pairs, and as demonstrated by \cite{KSW12}, the noise can be cleaned by Illumina reads to enable near-complete reconstruction. Thus our framework is already relevant to some of the current cutting edge technologies. To make  our framework more relevant to short-read technologies such as Illumina, an important direction is to incorporate mate-pairs in the read model, which can help to resolve long repeats with short reads. Other extensions to the basic shotgun sequencing model:

\noindent
{\bf heterogenous read lengths}: This occurs in some technologies where the read length is random (e.g. Pacbio) or when reads from multiple technologies are used.  Generalized Ukkonen's conditions and the sufficient conditions of {\bdbII} extend verbatim to this case, and only the computation of the bridging probability (\ref{e:pub}) has to be slightly modified.

\noindent
{\bf non-uniform read coverage}: Again, only the computation of the bridging probability has to be modified.  One issue of interest is to investigate whether reads are sampled less frequently from  long repeat regions. If so, our framework can quantify the performance hit.

\noindent
{\bf double strand}: DNA is double-stranded and consists of a length-$G$ sequence $\bu$ and its reverse complement $\ru$. Each read is either sampled from $\bu$ or $\ru$. This more realistic scenario can be mapped into our single-strand model by defining $\bs$ as the length-$2G$ concatenation of $\bu$ and $\ru$, transforming each read into itself and its reverse complement so that there are $2N$ reads. Generalized Ukkonen's conditions hold verbatim for this problem, and {\bdbII} can be applied, with the slight modification that instead of looking for a single Eulerian path, it should look for two Eulerian paths, one for each component of the sequence graph  after repeat-resolution.  An interesting aspect of this model is that, in addition to interleaved repeats on the single strand $\bu$,  {\em reverse complement repeats} on $\bu$ will also induce interleaved repeats on the sequence $\bs$. 

%%It is hoped that the framework introduced in this paper also serves as a foundation to study other applications of high throughput shotgun sequencing.
%%It is of interest to investigate empirically whether these induced interleaved repeats will %%be longer than the interleaved repeats on the single strand, and %%hence affect performance significantly.

%\section{Discussion}
%0.5 pages
%\label{sec:discussion}
%\input{discussion_short}

%\section{Conclusions}
%\label{sec:conclusions}
%\input{conclusions_and_extensions}

%\section*{Acknowledgements}

\clearpage
\bibliographystyle{amsplain}
\bibliography{DNArefs}
\clearpage

\appendix

\section{Supplementary Material}
\label{sec:plots}
%Here are some plots.
%
%\begin{figure*}[htb]
%\begin{centering}
%\includegraphics[width=2.5in]{figs/PlotLegend.pdf}
%\caption{Plot legend.}
%\label{fig:plotLegend}
%\end{centering}
%\end{figure*}

%un-comment \balance command in order to spread equally among both columns.
%\balance 

In this supplementary material, we display the output of our pipeline for 9 datasets (in addition to \hc{19}, whose output is in the introduction, and the GAGE datasets R. sphaeroides,  S. Aureus, and \hc{14}). For each dataset we plot
$$
\log(1+a_\ell),
$$ the log of one plus the number of repeats of each length $\ell$. From the repeat statistics $a_m$, $b_{m,n}$, and $c_m$, we produce a feasibility plot. The thick black line denotes the lower bound on feasible $N,L$, and the green line is the performance achieved by {\bdbII}.
%
%\begin{figure*}[h]
%\begin{centering}
%\subfloat{
%\includegraphics[height=2.5in]{}
%}
%\subfloat{
%\includegraphics[height=2.5in]{}
%}
%\end{centering}
%\caption{Human Chrom {14}. $G=88,289,540$, $\Ltri =611$, $\Lint=805$, $\Lrep=1022$.}
%\vspace{-0.3cm}
%\end{figure*}

%\begin{figure*}[h]
%\begin{centering}
%\subfloat{
%\includegraphics[height=2.5in]{plots/NICEPLOTS/aCh22.pdf}
%}
%\subfloat{
%\includegraphics[height=2.5in]{plots/NICEPLOTS/Ch22Plot.pdf}
%}
%\end{centering}
%\caption{Human Chrom {22}. $G=88,289,540$, $\Ltri =611$, $\Lint=805$, $\Lrep=1022$.}
%\end{figure*}

%\begin{figure*}[h]
%\begin{centering}
%\subfloat{
%\includegraphics[height=2.5in]{plots/NICEPLOTS/aCh19.pdf}
%}
%\subfloat{
%\includegraphics[height=2.5in]{plots/NICEPLOTS/Ch19Plot.pdf}
%}
%\end{centering}
%\caption{Human Chrom {19}. $G=88,289,540$, $\Ltri =611$, $\Lint=805$, $\Lrep=1022$.}
%\end{figure*}

\begin{figure*}[htb]
\begin{centering}
\subfloat{
\includegraphics[height=2.5in]{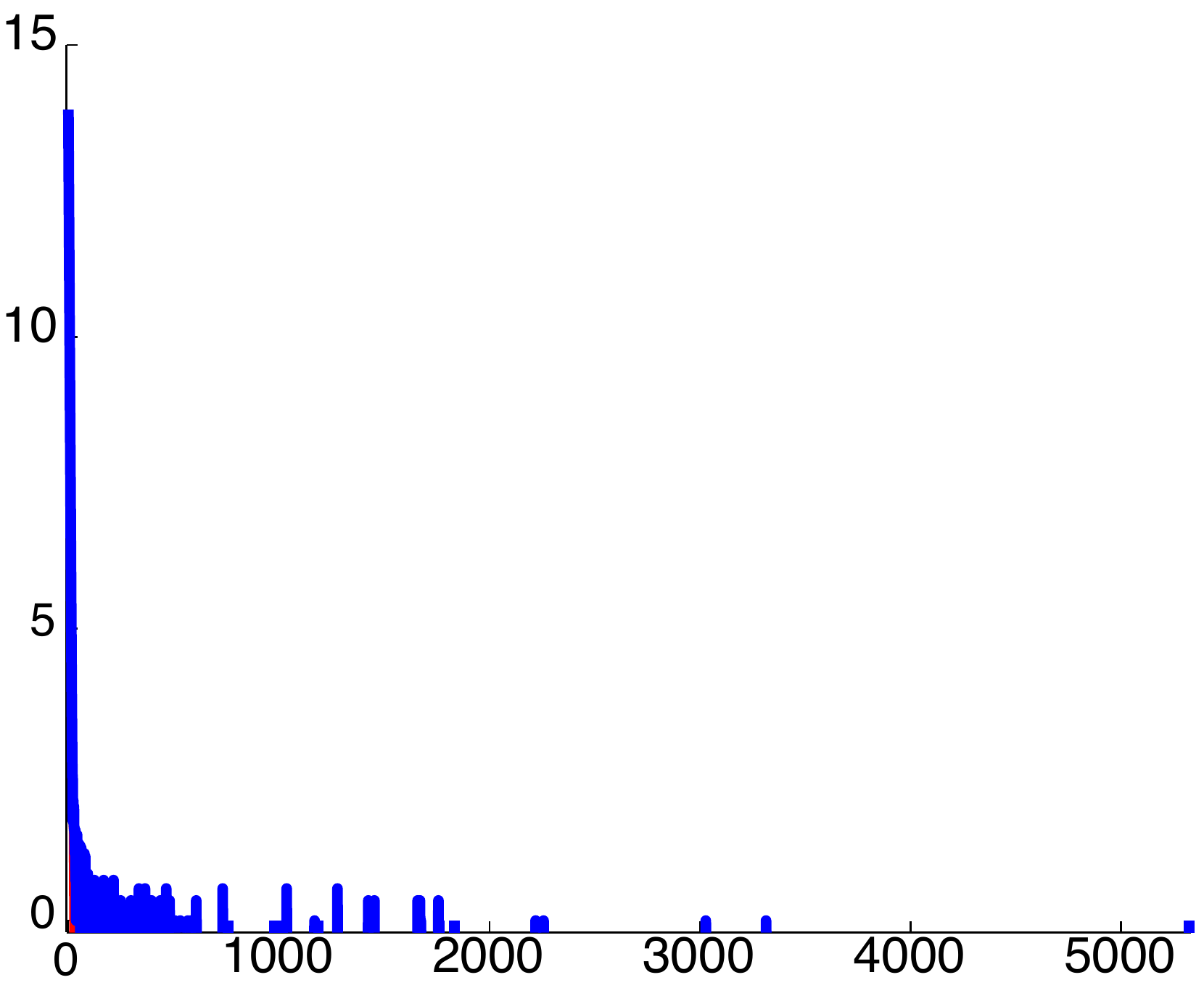}
}
\subfloat{
\includegraphics[height=2.5in]{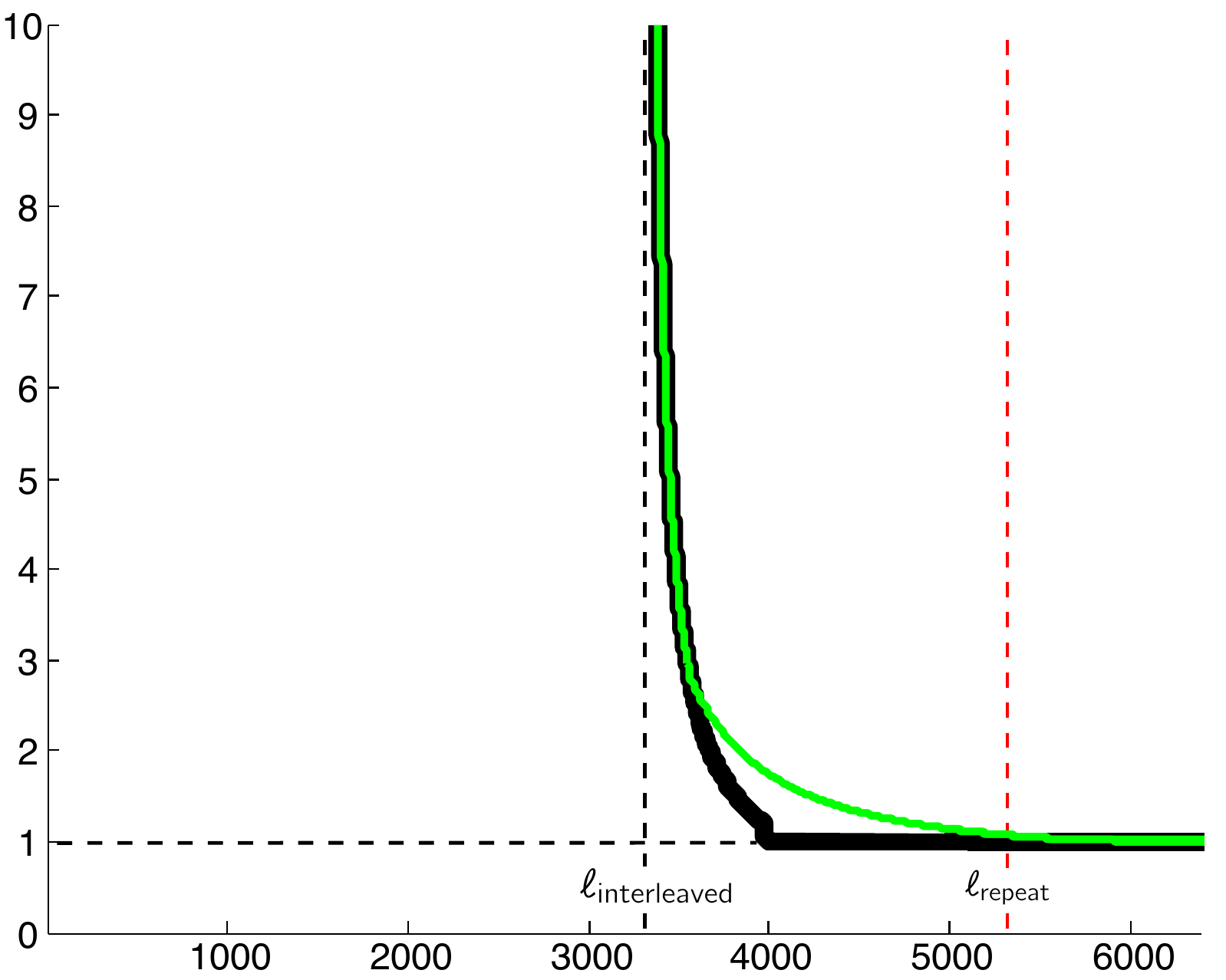}
}
\end{centering}
\caption{Lactofidus. $G=2,078,001$, $\Ltri = 3027$, $\Lint=3313$, $\Lrep=5321$.}
\vspace{-0.3cm}
\end{figure*}

\begin{figure*}[h]
\begin{centering}
\subfloat{
\includegraphics[height=2.5in]{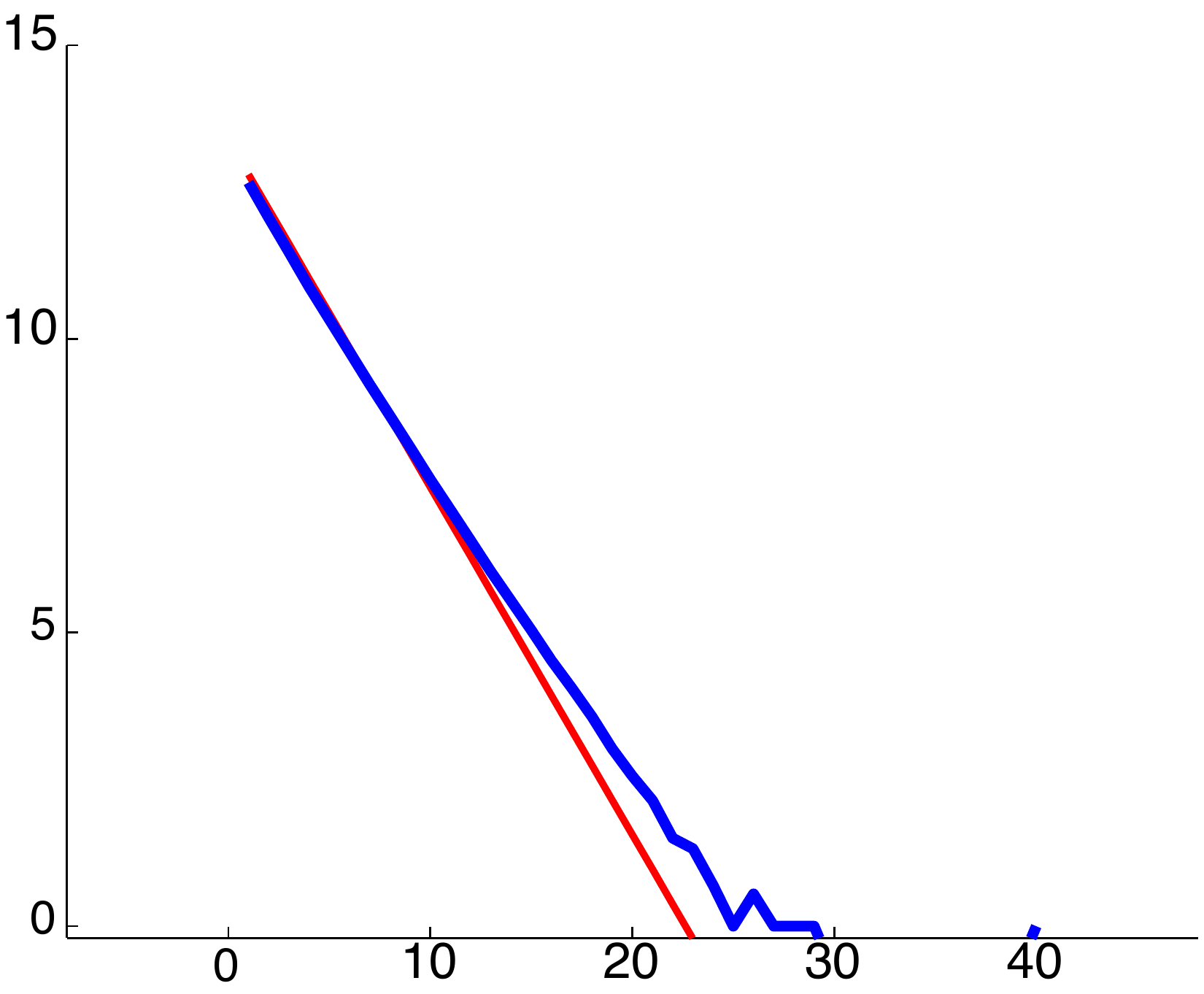}
}
\subfloat{
\includegraphics[height=2.5in]{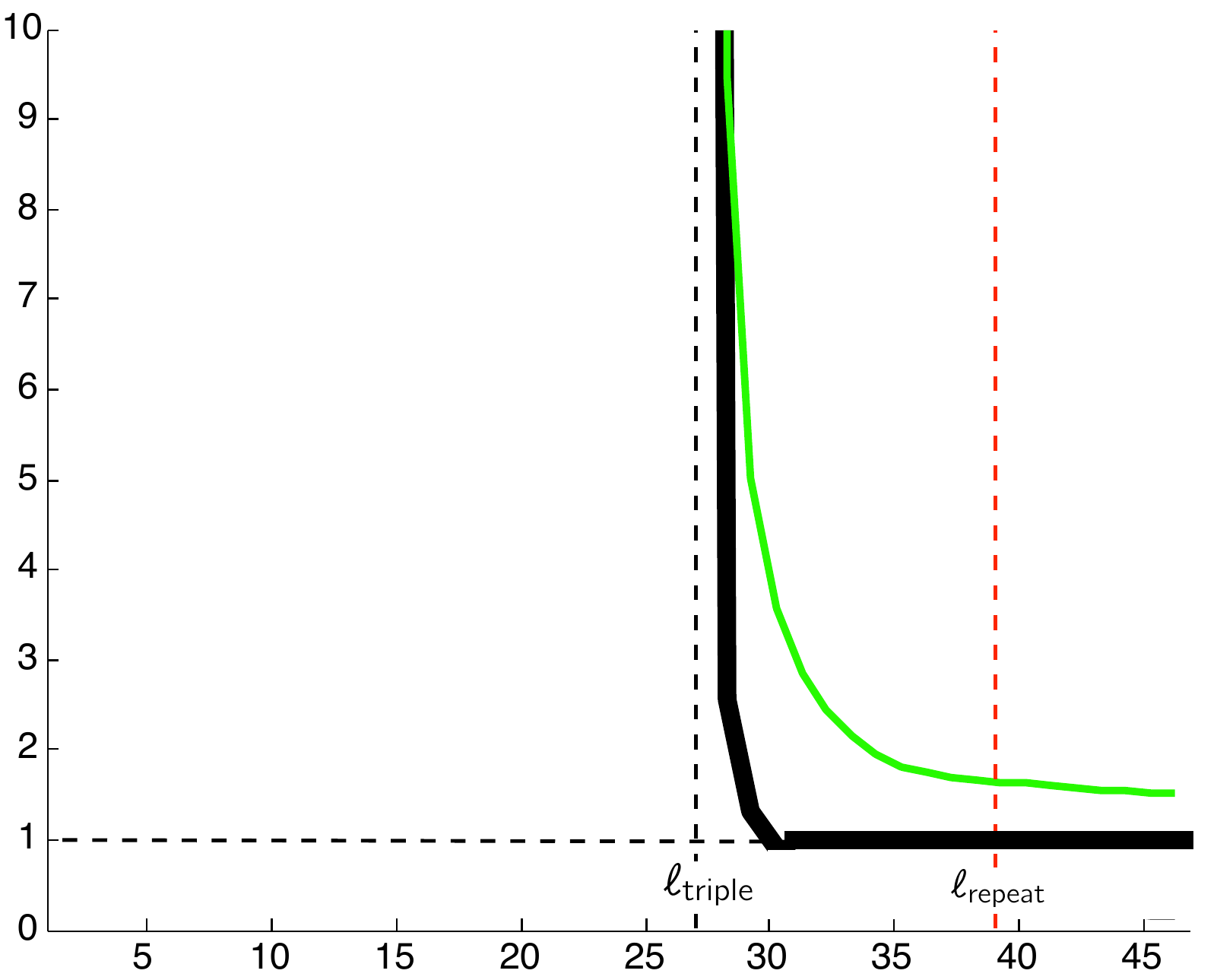}
}
\end{centering}
\caption{Buchnera. $G=642,122$, $\Ltri = 27$, $\Lint=23$, $\Lrep=39$.}
\vspace{-0.3cm}
\end{figure*}

\begin{figure*}[h]
\begin{centering}
\subfloat{
\includegraphics[height=2.5in]{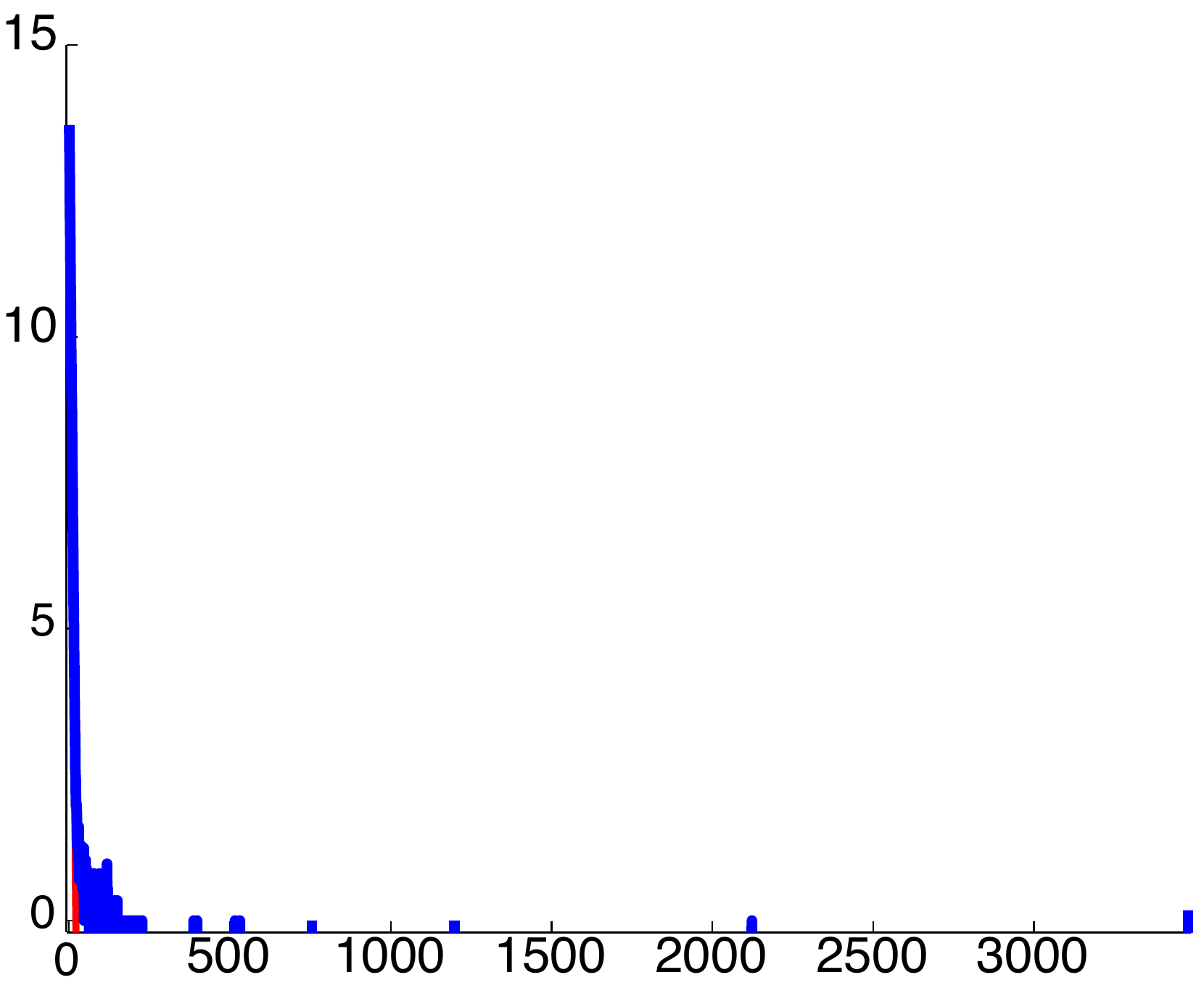}
}
\subfloat{
\includegraphics[height=2.5in]{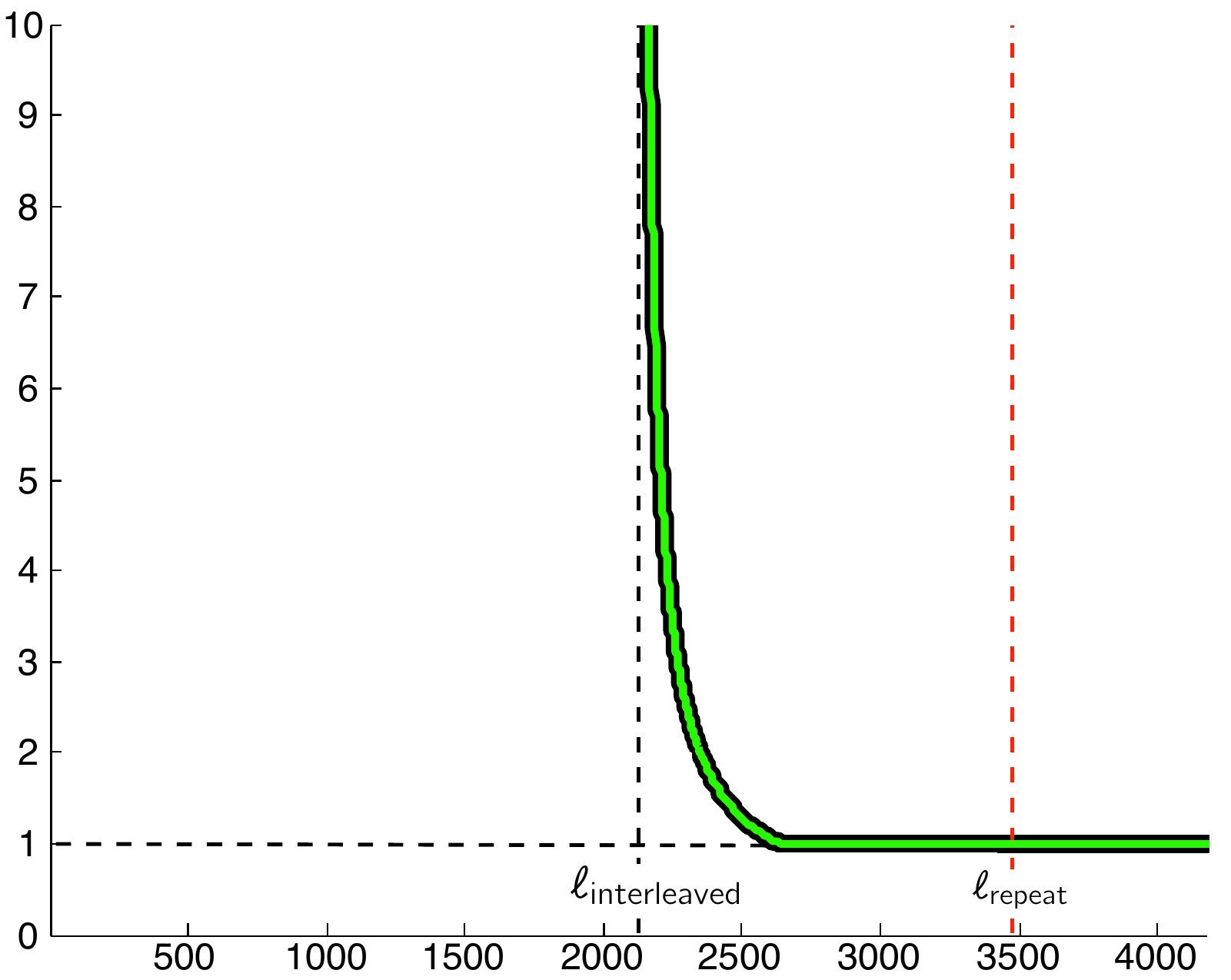}
}
\end{centering}
\caption{Heli51. $G=1,589,954$, $\Ltri =219$, $\Lint = 2122$, $\Lrep = 3478$.}
\vspace{-0.3cm}
\end{figure*}

%
%\begin{figure*}[h]
%\begin{centering}
%\subfloat{
%\includegraphics[height=2.5in]{}
%}
%\subfloat{
%\includegraphics[height=2.5in]{}
%}
%\end{centering}
%\caption{Staphylococcus Aureus. $G=2,872,915$, $\Ltri =1397$, $\Lint =1799$, $\Lrep =2862$.}
%\vspace{-0.3cm}
%\end{figure*}

\begin{figure*}[h]
\begin{centering}
\subfloat{
\includegraphics[height=2.5in]{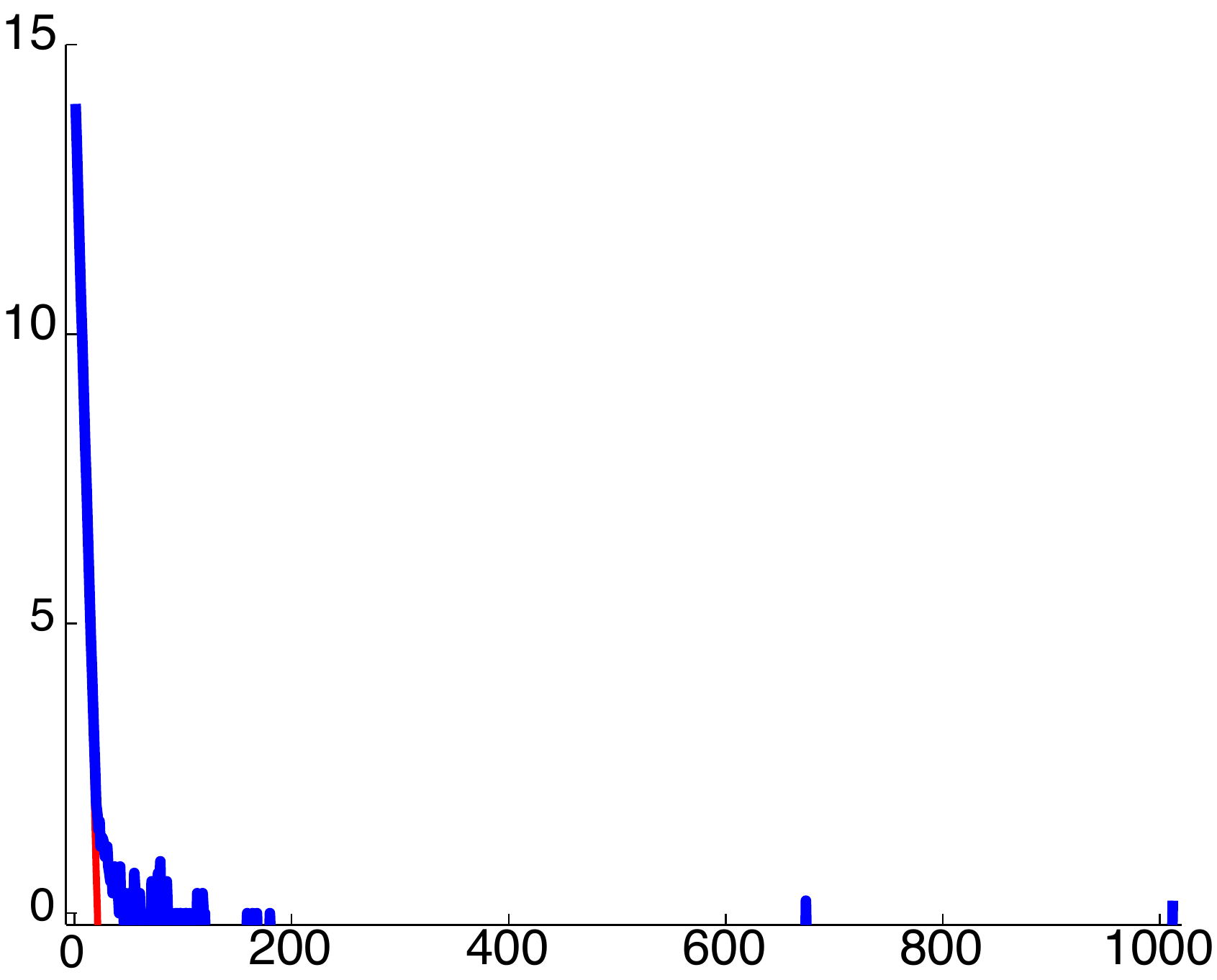}
}
\subfloat{
\includegraphics[height=2.5in]{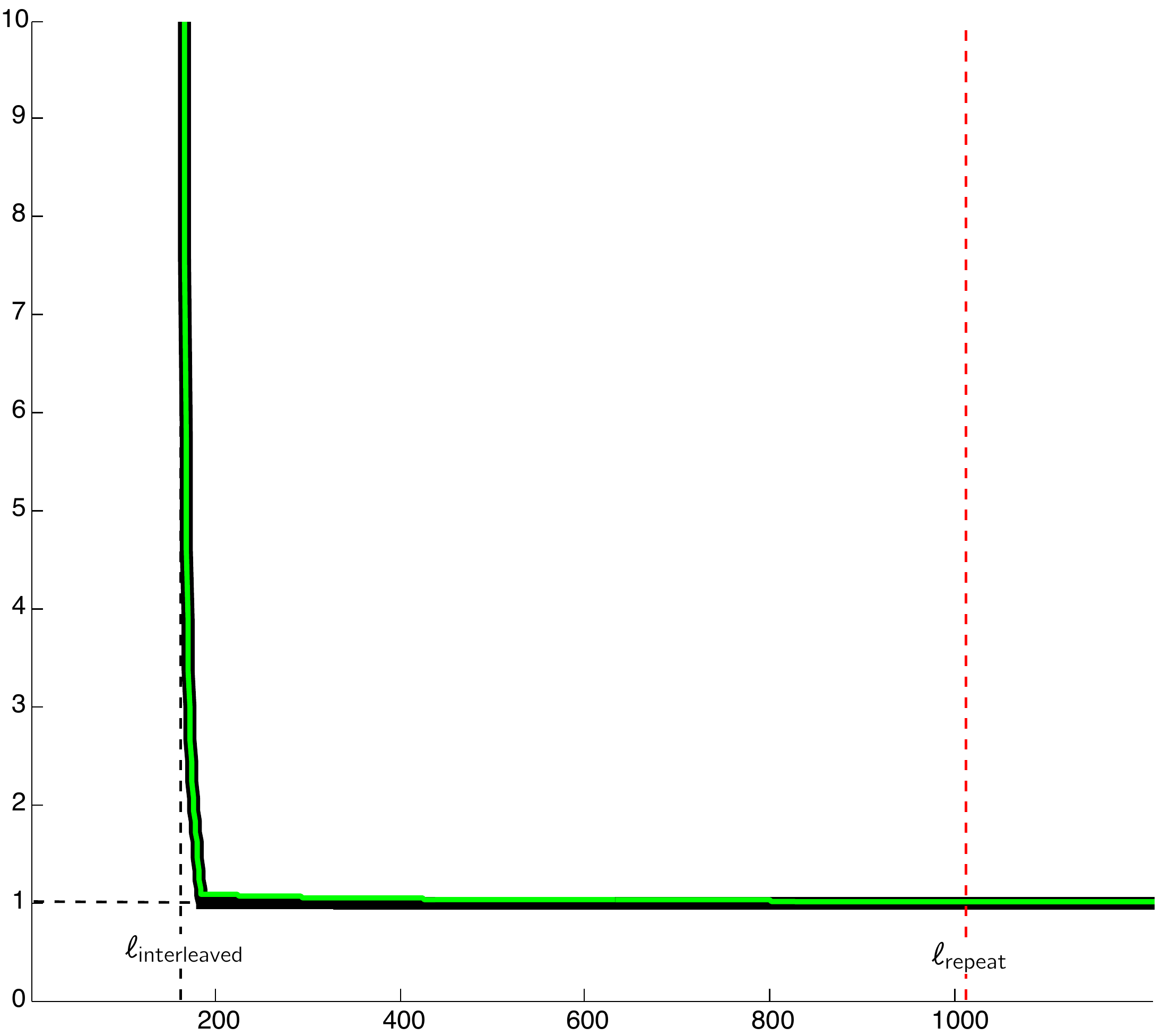}
}
\end{centering}
\caption{Salmonella. $G=2,215,568$, $\Ltri =112$, $\Lint =163$, $\Lrep =1011$.}
\vspace{-0.3cm}
\end{figure*}

\begin{figure*}[h]
\begin{centering}
\subfloat{
\includegraphics[height=2.5in]{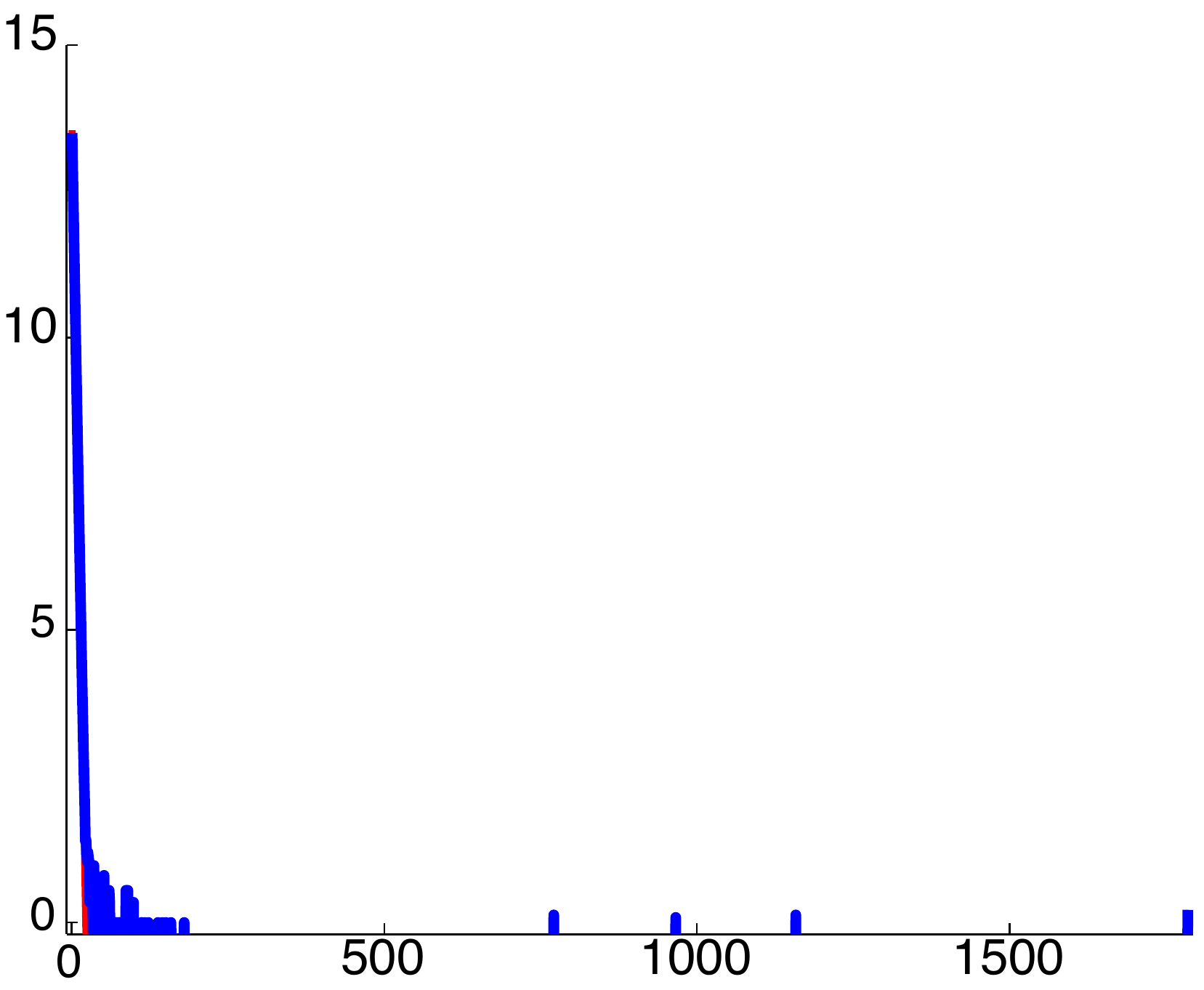}
}
\subfloat{
\includegraphics[height=2.5in]{plots/NICEPLOTS/PmarinusPlot.pdf}
}
\end{centering}
\caption{Perkinsus marinus. $G=1,440,372$, $\Ltri =770$, $\Lint =92$, $\Lrep =1784$.}
\vspace{-0.3cm}
\end{figure*}

\begin{figure*}[h]
\begin{centering}
\subfloat{
\includegraphics[height=2.5in]{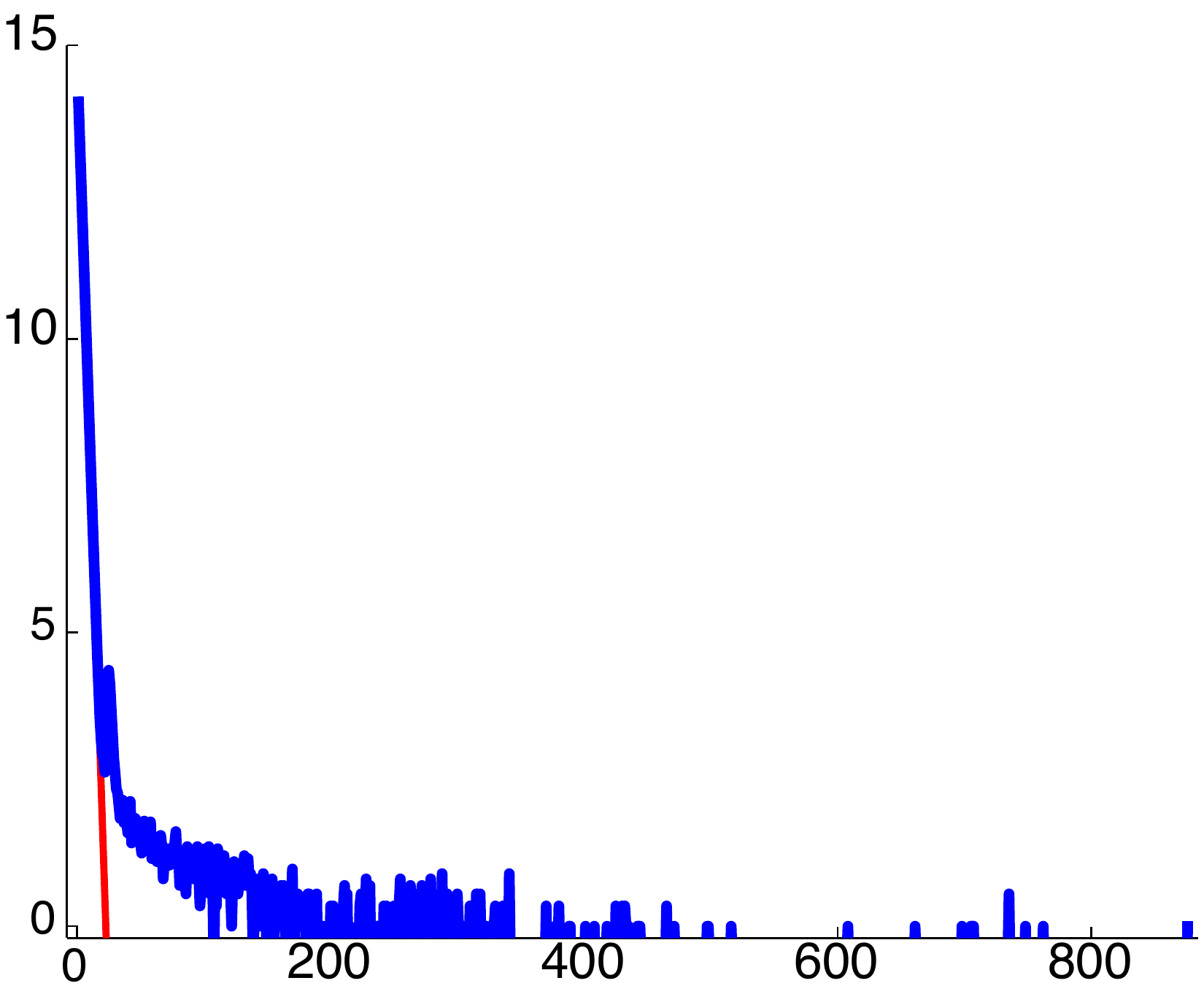}
}
\subfloat{
\includegraphics[height=2.5in]{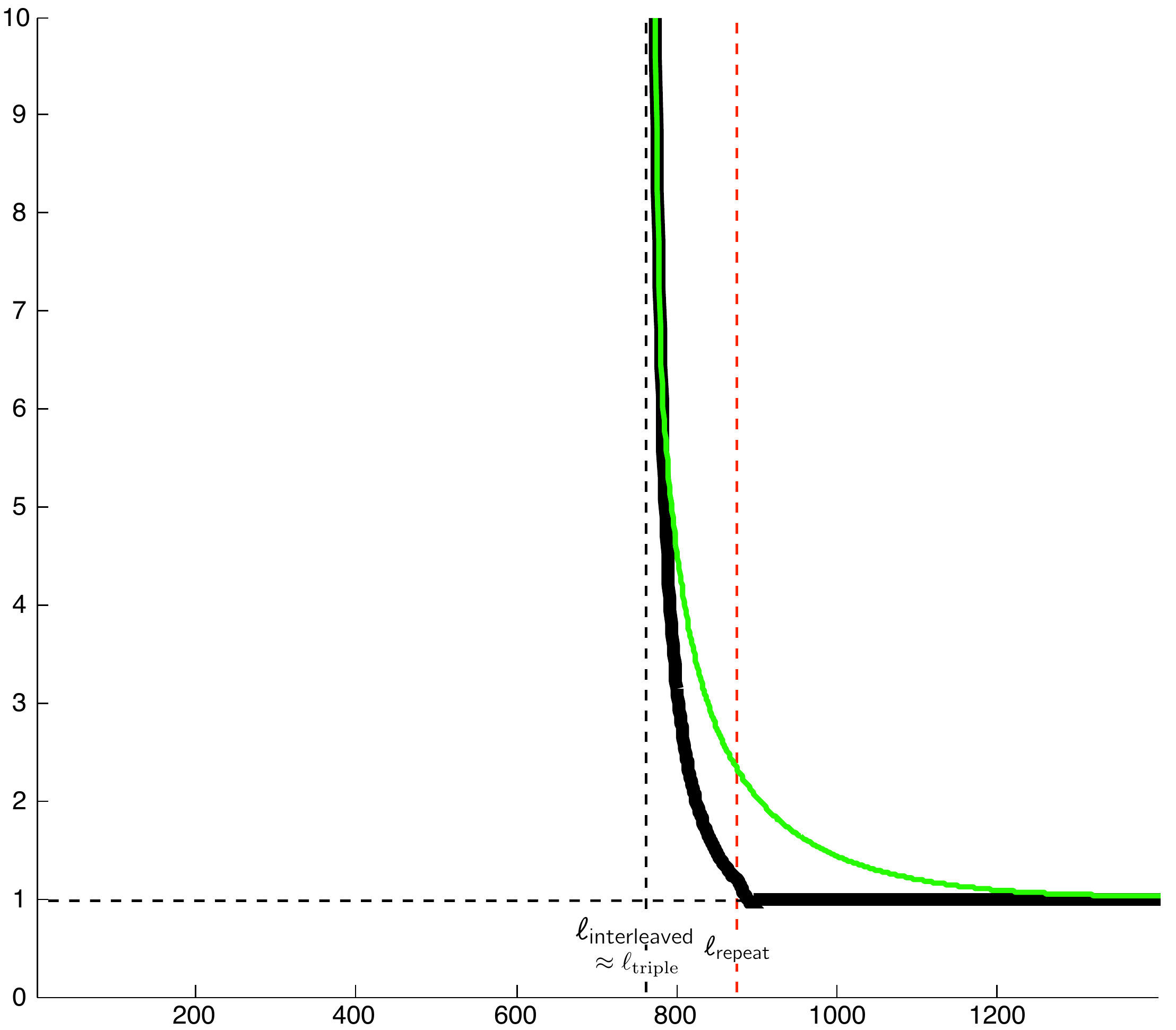}
}
\end{centering}
\caption{Sulfolobus islandicus. $G=2,655,198$, $\Ltri = 734$, $\Lint = 761$, $\Lrep = 875$.}
\vspace{-0.3cm}
\end{figure*}

%
%\begin{figure*}[h]
%\begin{centering}
%\subfloat{
%\includegraphics[height=2.5in]{}
%}
%\subfloat{
%\includegraphics[height=2.5in]{}
%}
%\end{centering}
%\caption{Rhodobacter sphaeroides. $G=3,188,599$, $\Ltri = 114$, $\Lint = 271$, $\Lrep = 695$.}
%\vspace{-0.3cm}
%\end{figure*}

\begin{figure*}[h]
\begin{centering}
\subfloat{
\includegraphics[height=2.5in]{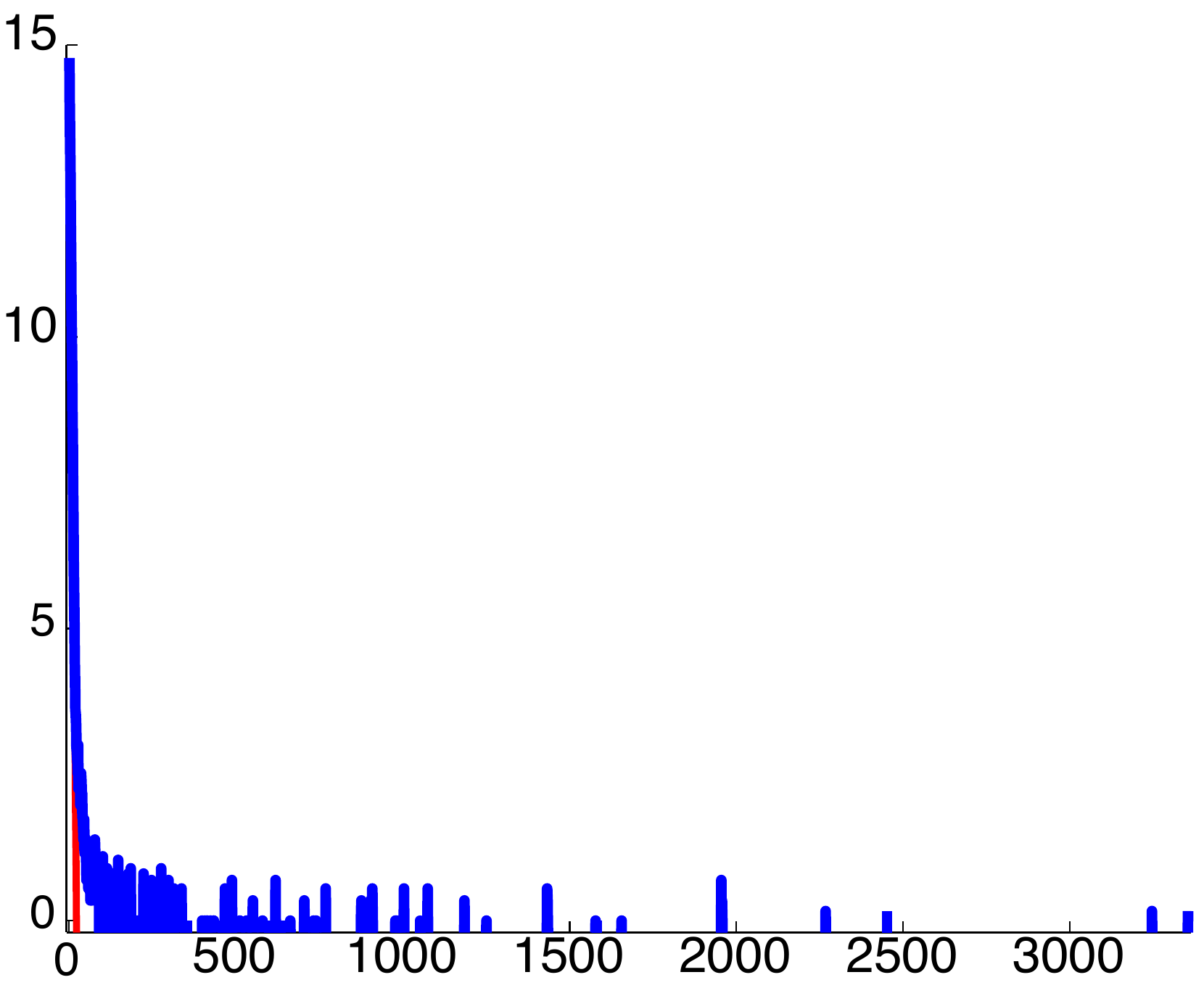}
}
\subfloat{
\includegraphics[height=2.5in]{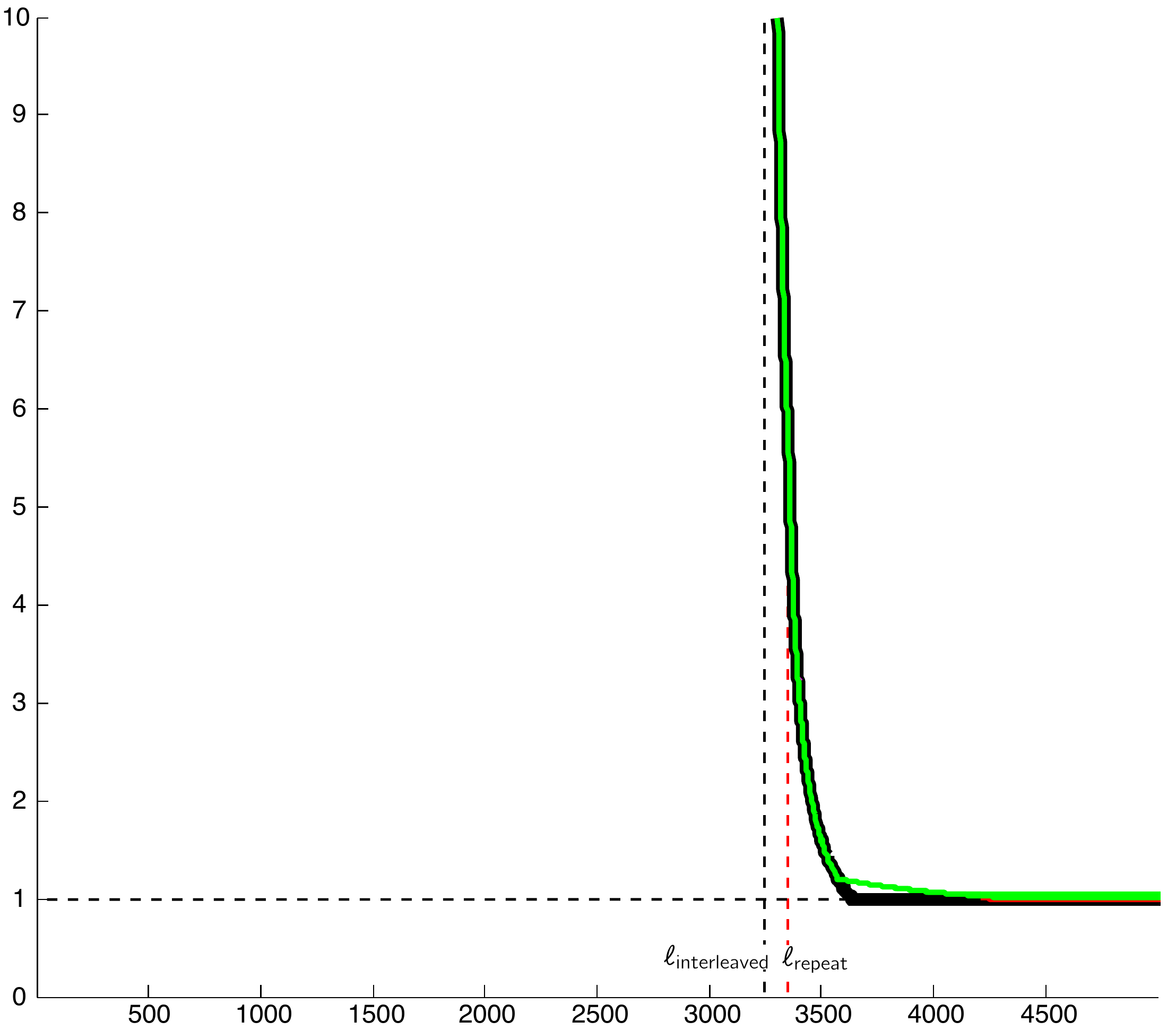}
}
\end{centering}
\caption{Ecoli536. $G=4,938,920$, $\Ltri = 2267$, $\Lint = 3245$, $\Lrep = 3353$.}
\vspace{-0.3cm}
\end{figure*}

\begin{figure*}[h]
\begin{centering}
\subfloat{
\includegraphics[height=2.5in]{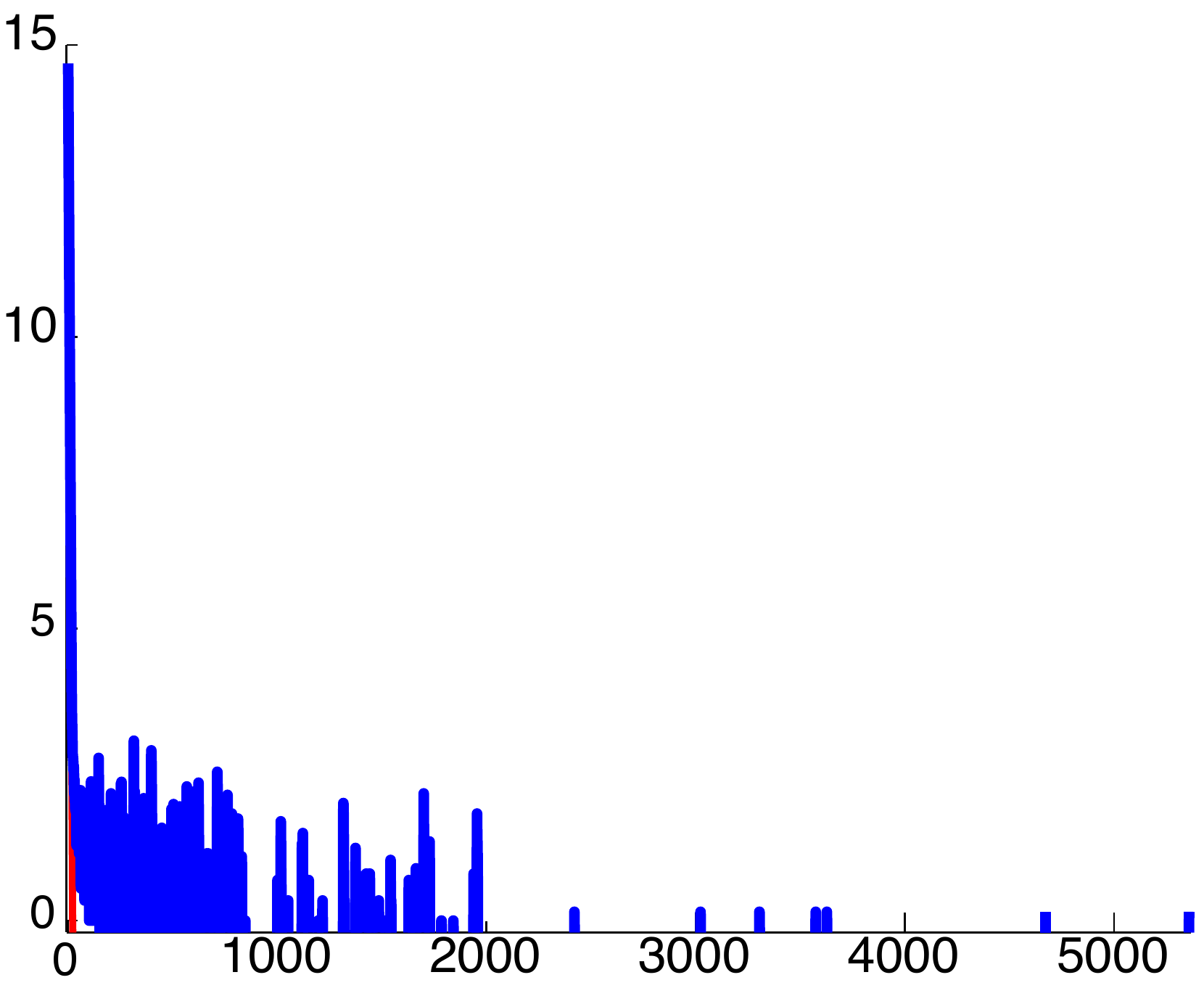}
}
\subfloat{
\includegraphics[height=2.5in]{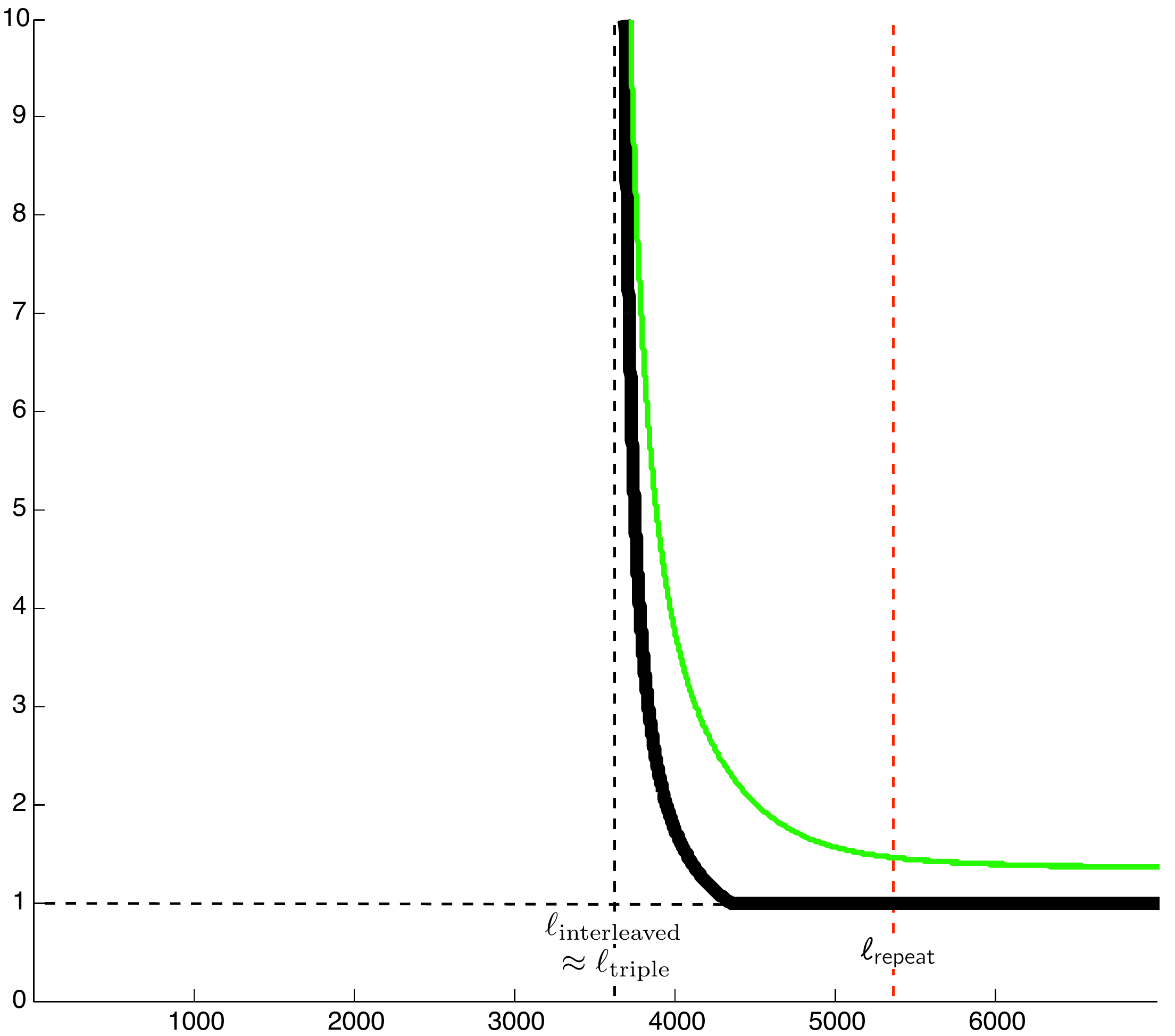}
}
\end{centering}
\caption{Yesnina. $G=4,504,254$, $\Ltri = 3573$, $\Lint = 3627$, $\Lrep = 5358$.}
\vspace{-0.3cm}
\end{figure*}

%\begin{figure*}[h!]
%\includegraphics[width=3.5in]{plots/LactoAcid_a_ell_matlab_1.pdf} \includegraphics[width=3.5in]{plots/NICEPLOTS/LactoAcidPlot.pdf}\\
%\includegraphics[width=3.5in]{plots/Buchnera_a_ell_matlab_1.pdf} \includegraphics[width=3.5in]{plots/NICEPLOTS/BuchneraPlot.pdf}\\
%\includegraphics[width=3.5in]{plots/Heli51_a_ell_matlab_1.pdf} \includegraphics[width=3.5in]{plots/NICEPLOTS/Heli51Plot.pdf}\\
%\includegraphics[width=3.5in]{plots/SAureus_a_ell_matlab_1.pdf} \includegraphics[width=3.5in]{plots/NICEPLOTS/SAureusPlot.pdf}\\
%\end{figure*}
%
%\begin{figure*}[h]
%\includegraphics[width=3.5in]{plots/Salmonella_a_ell_matlab_1.pdf} \includegraphics[width=3.5in]{plots/NICEPLOTS/SalmonellaPlot.pdf}\\
%\includegraphics[width=3.5in]{plots/Ecoli536_a_ell_matlab_1.pdf} \includegraphics[width=3.5in]{plots/NICEPLOTS/Ecoli536_eps1Plot2_6allbridgeTriple_withLeg.pdf}\\
%\includegraphics[width=3.5in]{plots/Pmarinus_a_ell_matlab_1.pdf} \includegraphics[width=3.5in]{plots/NICEPLOTS/PmarinusPlot.pdf}\\
%\includegraphics[width=3.5in]{plots/Sislandicus_a_ell_matlab_1.pdf} \includegraphics[width=3.5in]{plots/NICEPLOTS/SislandicusPlot.pdf}\\
%	\caption{Filler text..
%	}
%		\label{fig:chromsPart1}
%	%\label{fig:chromsPart2}
%\end{figure*}

\clearpage

%\section{Appendix}
%\label{sec:appendix}
\section{Lower bounds on coverage depth}\label{sec:AppLower}

The lower bounds are based on a generalization of Ukkonen's condition to shotgun sequencing, as described in Theorem~\ref{t:Ukkonen}. The proof of Theorem~\ref{t:Ukkonen} follows by a straightforward modification to the argument in \cite{Ukk92} and is omitted here.
\begin{reptheorem}{t:Ukkonen}
Given a DNA sequence ${\bf s}$ and a set of reads, if there is a pair of  interleaved repeats or a triple repeat whose copies are all unbridged, then there is another sequence ${\bf s'}$ of the same length under which the likelihood of observing the reads is the same.
\end{reptheorem}

\subsection{Lower bound due to interleaved repeats}
In this section we derive a necessary condition on $N$ and $L$ in order that the probability of correct reconstruction be at least $1-\eps$. 

Recall that a pair of repeats, one at positions $t_1,t_3$ with $t_1<t_3$ and the second at positions $t_2,t_4$ with $t_2<t_4$, is \emph{interleaved} if $t_1<t_2<t_3<t_4$ or $t_2<t_1<t_4<t_3$. 
%The strict inequalities in the definition is important, since triple repeats are not counted as interleaved repeats. Repeats are enumerated by $i$, $1\leq i\leq \# \text{ repeats}$, with repeat $i$ having length $\ell_i$. 
%We define a (symmetric) matrix of interleavedness indicators, 
%$$\II_{ij} = \begin{cases}1\quad\text{if repeats $i$ and $j$ are interleaved} \\ 0\quad \text{otherwise} \end{cases}\,.$$
From the DNA we may extract a (symmetric) matrix of interleaved repeat statistics
$b_{mn}$, the number of pairs of interleaved repeats of lengths $m$ and $n$. 
%The repeat statistic $b_{ij}$ can be computed  
%We proceed by fixing both $N$ and $L$ and checking whether or not unbridged interleaved repeats occur with probability higher than $\eps$. 
%We will break up repeats into 2 categories: those that are always unbridged and those that are sometimes unbridged. Denote by $\RRg$ the set of repeats of length at least $L-1$, i.e. the set of repeats which cannot be bridged by reads of length $L$. We assume that $L>\Lint+1$, or equivalently $\II_{ij}=0$ for all $i,j\in \RRg$, since otherwise there are (with certainty) unbridged interleaved repeats and Ukkonen's condition is violated. Let $\RRl$ be the set of repeats of length less than $L-1$. 

We proceed by fixing both $N$ and $L$ and checking whether or not unbridged interleaved repeats occur with probability higher than $\eps$. 
We will break up repeats into 2 categories: repeats of length at least $L-1$ (these are always unbridged), and repeats of length less than $L-1$ (these are sometimes unbridged). We assume that $L>\Lint+1$, or equivalently $b_{ij}=0$ for all $i,j\geq L-1$, since otherwise there are (with certainty) unbridged interleaved repeats and Ukkonen's condition is violated.% Let $\RRl$ be the set of repeats of length less than $L-1$. 

%We next wish to estimate probability of error due to $j\in \RRg$ and $i\in \RRl$. If repeats $i$ and $j$ are interleaved, an error occurs if repeat $i$ is unbridged, since repeat $j$ cannot be bridged. The probability a subsequence of length $\ell$ is unbridged is approximately $e^{-\lam(L-\ell-1)}$, equal to the probability of no Poisson arrivals in the interval of size $L-\ell-1$ before the subsequence (c.f.~Figure~\ref{fig:bridge}). For a repeat, as long as the two copies' locations are not too nearby\footnote{More precisely, for the two copies of a a repeat  of length $\ell$ to be bridged independently requires that no single read can bridge them both. This means their locations $t$ and $t+d$ must have separation $d\geq L-\ell-2$. }, each copy is bridged independently and hence the probability both copes are unbridged is $e^{-2\lam(L-\ell-1)}$. (We will henceforth call a repeat unbridged if \emph{both} copies are unbridged.) 
First, we estimate the probability of error due to interleaved repeats of lengths $i<L-1$ and $j\geq L-1$. The repeat of length $j$ is too long to be bridged, so an error occurs if the repeat of length $i$ is unbridged. For a repeat, as long as the two copies' locations are not too nearby\footnote{More precisely, for the two copies of a a repeat  of length $\ell$ to be bridged independently requires that no single read can bridge them both. This means their locations $t$ and $t+d$ must have separation $d\geq L-\ell-2$. }, each copy is bridged independently and hence the probability that both copies of the repeat of length $i$ are unbridged is $\pub_i = e^{-2\frac NG(L-i-1)}$. (Recall that a repeat is unbridged if \emph{both} copies are unbridged.)

A union bound estimate\footnote{The union bound on probabilities gives an \emph{upper bound}, so its use here is only an approximation. To get a rigorous lower bound we can use the inclusion-exclusion principle, but the difference in the two computations is negligible for the data we observed. For ease of exposition we opt to present the simpler union bound estimate.}  gives a probability of error
\begin{equation} \label{e:intLowerBound1}
\Pe\approx 
%\sum_{j\in \RRg} \sum_{i\in \RRl} \II_{ij} e^{-2\lam(L-\ell_i-1)} = 
\frac12 \sum_{m< L-1\atop n\geq L-1}b_{mn}e^{-2\lam(L-m-1)}\,.
\end{equation}
Requiring the error probability to be less than $\eps$ and solving for $L$ gives the necessary condition
\begin{equation}\label{e:LintBound1Appendix}
L\geq \frac1{2\lam}\log\frac{\gamma_1}{2\eps}=\frac G{2N}\log\frac{\gamma_1}{2\eps} \,,
\end{equation}
where $\gamma_1:=\sum_{m< L-1\atop n\geq L-1}b_{mn} e^{2(N/G)(m+1)}$ is a simple function of the interleaved repeat statistic $b_{mn}$.

We now estimate the probability of error due to interleaved repeat pairs in which both repeats are shorter than $L-1$. In this case only one repeat of each interleaved repeat pair must be bridged. Again a union bound estimate gives
$$
\Pe\approx \frac12 \sum_{m,n<L-1} b_{mn} e^{-2\lam(L-m-1)}e^{-2\lam(L-n-1)}\,.
$$
Requiring the error probability to be less than $\eps$ gives the necessary condition
\begin{equation}\label{e:LintBound2Appendix}
L\geq \frac1{4\lam}\log\frac{\gamma_2}{2\eps}=  \frac G{4N}\log\frac{\gamma_2}{2\eps}\,,
\end{equation}
where $\gamma_2:=\sum_{m,n< L-1}b_{mn} e^{2(N/G)(m+n+2)}$ and similarly to $\gam_1$ is computed from $b_{mn}$.

\subsection{Lower bound due to triple repeats} \label{sec:TripleRepeats}
We translate the generalized Ukkonen's condition prohibiting unbridged triple repeats into a condition on $L$ and $N$. Let $c_m$ denote the number of triple repeats of length $m$. Then a union bound estimate gives
\begin{equation}
\P(\EE)\approx \frac12 \sum_m c_m e^{-3\lam(L-m-1)}\,.
\end{equation}
Requiring $\P(\EE)\leq \eps$ and solving for $L$ gives 
\begin{equation}\label{e:LtripBoundAppendix}
L\geq \frac{1}{3\lam}\log\frac{\gam_3}{2\eps}= \frac{G}{3N}\log\frac{\gam_3}{2\eps}\,,
\end{equation}
where $\gam_3 := \sum_m c_m e^{3(N/G) (m+1)}$.

\begin{remark}
As discussed here and in Section~\ref{sec:Lower}, if the DNA sequence is not covered by the reads or there are unbridged interleaved or triple repeats, then reconstruction is not possible. But there is another situation which must be ruled out. Without knowing its length a priori, it is impossible to know how many copies of the DNA sequence are actually present: if the sequence $\s$ to be assembled consists of multiple concatenated copies of a shorter sequence, rather than just one copy, the probability of observing any set of reads will be the same. Since it is unlikely that a true DNA sequence will consist of the same sequence repeated multiple times, we assume this is not the case throughout the paper. Equivalently, if $\s$ does consist of multiple concatenated copies of a shorter sequence, we are content to reconstruct a single  copy. If available, knowledge of the approximate length of $\s$ would then allow to reconstruct.
\end{remark}

\section{Proofs for algorithms}
\subsection{Proof of Theorem~\ref{thm:GreedyTheorem} (\GR)} 
\label{sec:greedyApp}
The greedy algorithm's underlying data structure is the overlap graph, where each node represents a read and each (directed) edge $(\bx,\by)$ is labeled with the overlap $\ov(\bx,\by)$ (defined as the the length of the shared prefix/suffix) between the incident nodes' reads.  %The overlap of two reads $\ov(\bx,\by)$ is defined to be the length of the longest prefix of $\by$ equal to a suffix of $\bx$. 
For a node $\bv$, the in-degree [out-degree] is the number of edges in the graph directed towards [away from] $\bv$.
The greedy algorithm is described as follows. 

\label{sec:appGreedy}
\begin{algorithm}[H]
\caption{{\sc Greedy}. Input: reads $\RR$. Output: sequence $\sh$. }
\label{alg1}
\noindent
1. For each read with sequence $\x$, form a node with label $\x$.  \\
\emph{Greedy steps 2-3:}\\
2. Consider all pairs of nodes $\x_1,\x_2$ in $\G$ satisfying $\dout(\x_1)=\din(\x_2)=0$, and add an edge $(\x_1,\x_2)$ with largest value $\ov(\x_1,\x_2)$. \\
3. Repeat Step 2 until no candidate pair of nodes remains.\\
\emph{Finishing step:}\\
4. Output the sequence corresponding to the unique cycle in $\G$. 
\end{algorithm}

\begin{reptheorem}{thm:GreedyTheorem}
Given a sequence ${\bf s}$ and a set of reads, \GR { }returns ${\bf s}$ if every repeat is bridged. 
\end{reptheorem}
\begin{proof}We prove the contrapositive. Suppose {\GR } makes its first error in merging reads $\br_i$ and $\br_j$ with overlap $\ov(\br_i,\br_j)=\ell$.  
Now, if $\br_j$ is the successor to $\br_i$, then the error is due to incorrectly aligning the reads; the other case is that $\br_j$ is not the successor of $\br_i$. In the first case, the subsequence $\s_{t_j}^\ell$ is repeated at location $\s_{t_i+L-\ell}^\ell$, and no read bridges either repeat copy.

In the second case, there is a repeat $\s_{t_j}^\ell = \s_{t_i+L-\ell}^\ell$. If $\s_{t_i+L-\ell}^\ell$ is bridged by some read $\br_k$, then $\br_i$ has overlap at least $\ell+1$ with $\br_k$, implying that read $\br_i$ has already found its successor before step $\ell$ (either $\br_k$ or some other read with even higher overlap). A similar argument shows that $\s_{t_j}^\ell$ cannot be bridged, hence there is an unbridged repeat.
\end{proof}

\subsection{Proofs for $K$-mer algorithms}
\label{sec:SBHAppendix}

\subsubsection{Background}
We give some mathematical background leading to the proof of Theorem~\ref{t:SBH_no_multiplicities} (restated below). 

\begin{lemma}\label{l:cycleLemma}
Fix an arbitrary $K$ and form the $K$-mer graph from the $(K+1)$-spectrum $\calS_{K+1}$.
The sequence $\s$ corresponds to a unique cycle $\CC(\s)$ traversing each edge at least once.
\end{lemma}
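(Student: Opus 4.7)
The plan is to construct $\CC(\s)$ explicitly as the closed walk induced by reading $\s$ left-to-right (viewing $\s$ of length $G$ as circular), and then verify the three claims: that the walk is well-defined in $\G_0$, that it covers every edge, and that $\s$ pins it down uniquely.

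First I would set up notation. For each position $t \in \{1,\dots,G\}$, let $v_t$ be the node of $\G_0$ labeled by the $K$-mer $\s_t^K$; these are all nodes, by the definition of $\G_0$ from $\calS_{K+1}$. The $K$-mers $v_t$ and $v_{t+1}$ overlap in $K-1$ characters and together determine the $(K+1)$-mer $\s_t^{K+1}$. Since $\s_t^{K+1} \in \calS_{K+1}$, the graph contains the corresponding edge $e_t = (v_t, v_{t+1})$ labeled by $\s_t^{K+1}$. Define $\CC(\s) := (e_1,e_2,\dots,e_G)$ with indices taken mod $G$; this is a closed walk because $v_{G+1}=v_1$.

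Next I would check the coverage claim. By the construction of $\G_0$ from $\calS_{K+1}$, edges are in bijection with the distinct $(K+1)$-mers appearing somewhere in $\s$. For any edge $e$ with label $w\in\calS_{K+1}$, there exists at least one $t$ with $\s_t^{K+1}=w$, so $e=e_t$ is traversed at step $t$ of $\CC(\s)$. Hence every edge is used at least once, and those $(K+1)$-mers that occur multiple times in $\s$ give edges traversed correspondingly often.

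Finally, for uniqueness I would note that the walk is forced step-by-step: at step $t$ the walker must follow the unique edge whose label is $\s_t^{K+1}$, and since distinct edges of $\G_0$ carry distinct $(K+1)$-mer labels, there is no choice. Thus $\s$ determines $\CC(\s)$ uniquely (up to the cyclic starting point, which is immaterial for a cycle). I do not foresee a genuine obstacle here; the only minor subtlety is keeping clear that ``cycle'' means closed walk in the Eulerian sense, so repetitions of edges and vertices are permitted whenever the same $(K+1)$-mer or $K$-mer recurs in $\s$.
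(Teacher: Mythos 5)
Your proposal is correct and matches the paper's own argument: the paper likewise builds the cycle by noting that consecutive $(K+1)$-mers of $\s$ correspond to adjacent edges (your explicit $e_t=(v_t,v_{t+1})$ construction is exactly the induction the paper invokes), and concludes edge coverage from the fact that every edge is a $(K+1)$-mer occurring in $\s$. Your explicit treatment of uniqueness --- the walk being forced by the labels $\s_t^{K+1}$ --- is a welcome elaboration of what the paper leaves implicit, but it is the same route.
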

To prove the lemma, note that all $(K+1)$-mers in $\s$ correspond to edges and adjacent $(K+1)$-mers in $\s$ are represented by adjacent edges. An induction argument shows that $\s$ corresponds to a cycle. The cycle traverses all the edges, since each edge represents a unique $(K+1)$-mer in $\s$. 

In both SBH and shotgun sequencing the number of times each edge $e$ is traversed by $\CC(\s)$ (henceforth called the \emph{multiplicity} of $e$) is unknown a priori, and finding this number is part of the reconstruction task. 
%Note that all edges have multiplicity at least one. 
Repeated $(K+1)$-mers in $\s$ correspond to edges in the $K$-mer graph traversed more than once by $\CC(\s)$, i.e. having multiplicity greater than one. In order to estimate the multiplicity, previous works seek a solution to the so-called Chinese Postman Problem (CPP), in which the goal is to find a cycle of the shortest total length traversing every edge in the graph (see e.g. \cite{Pev89}, \cite{IW95}, \cite{PTW01}, \cite{medvedev2009maximum}). It is not obvious under what conditions the CPP solution \emph{correctly} assigns multiplicities in agreement with $\CC(\s)$. For our purposes, as we will see in Theorem~\ref{t:SBH_no_multiplicities}, the multiplicity estimation problem can be sidestepped (thereby avoiding solving CPP) through a modification to the $K$-mer graph. 

Ignoring the issue of edge multiplicities for a moment, Pevzner \cite{Pev95} showed for the SBH model that if the edge multiplicities are known with multiple copies of each edge included according to the multiplicities, and moreover Ukkonen's condition is satisfied, then there is a \emph{unique Eulerian cycle} in the $K$-mer graph and the Eulerian cycle corresponds to the original sequence. (An Eulerian cycle is a cycle traversing each edge exactly once.) Pevzner's algorithm is thus to find an Eulerian cycle and read off the corresponding sequence. Both steps can be done efficiently. 
\begin{lemma}[Pevzner \cite{Pev95}]\label{l:Pev95}
In the SBH setting, if the edge multiplicities are known, then there is a unique Eulerian cycle in the $K$-mer graph with $K=L-1$ if and only if there are no unbridged interleaved repeats or unbridged triple repeats.
\end{lemma}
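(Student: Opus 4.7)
The plan is to translate the question of Eulerian cycle uniqueness into a local question about how pairings of incoming and outgoing edges at branching nodes are forced or ambiguous, and to identify these local ambiguities with interleaved or triple repeats of length $\geq K$ (equivalently, the unbridged repeats in the SBH setting, where $L=K+1$ and bridging a length-$\ell$ repeat requires $\ell\leq K-1$). By Lemma~\ref{l:cycleLemma} the cycle $\CC(\s)$ is Eulerian in the multigraph obtained from the $K$-mer graph by installing the known edge multiplicities. Each branching node of this multigraph (one with multiple distinct predecessor or successor edges) corresponds to a boundary $K$-mer of a maximal repeat of length $\geq K$ in $\s$, with in/out-degree equal to the repeat's multiplicity; in particular $2$-repeats yield $2$-nodes and triple repeats yield $3$-nodes. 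An Eulerian cycle of the multigraph is determined by choosing, at every branching node, a pairing of its incoming with its outgoing edges.

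The first key ingredient is a local Kotzig-type transposition lemma: flipping the pairing at a single $2$-node in an Eulerian cycle always splits it into two edge-disjoint cycles. Hence any second Eulerian cycle must differ from $\CC(\s)$ by a coherent simultaneous family of flips at branching nodes. A flip at a $2$-node $\bv$ can be repaired by a flip at another $2$-node $\bw$ if and only if $\bw$ has exactly one of its two visits in each of the two arcs created at $\bv$ — which is precisely the combinatorial definition of the two $2$-repeats being interleaved.

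For the forward direction ($\Leftarrow$) I induct on the number of branching nodes. The hypothesis forces every branching node to be a $2$-node, and arranges the corresponding $2$-repeats in a laminar (nested or disjoint) family. Take an innermost $2$-repeat $\bv$, so that no other branching node is visited in the inner arc between its two visits. By the repair criterion, no other $2$-node is interleaved with $\bv$, hence no coherent family of flips involving $\bv$ can yield a connected Eulerian cycle, and the pairing at $\bv$ is forced. Resolving $\bv$ (contracting its forced pairing) produces a smaller multigraph with one fewer $2$-repeat still satisfying the hypotheses, and induction concludes. For the reverse direction ($\Rightarrow$), two interleaved $2$-repeats $\bv,\bw$ are visited in the pattern $\bv\bw\bv\bw$, so a simultaneous flip at both preserves connectivity and produces an alternative Eulerian cycle that swaps the two middle arcs (the ambiguity in Fig.~\ref{fig:Ukkonen}); at a triple repeat, the associated $3$-node admits $3!=6$ possible pairings, of which precisely the two $3$-cycles in $S_3$ give connected Eulerian cycles, again yielding non-uniqueness.

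The main obstacle is the inductive step in the forward direction: one must rule out not just pairwise but arbitrarily large coherent families of simultaneous flips. Reducing this question to pairwise interleaving via the repair criterion, and exploiting the laminar structure of non-interleaved $2$-repeats to isolate an innermost one, is the crux. Some bookkeeping is also needed for degenerate configurations (self-loops at a branching node, or copies of a repeat that directly abut each other), but these do not change the conclusion.
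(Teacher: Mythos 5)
You should first note what the paper actually does with this statement: it offers no proof at all. Lemma~\ref{l:Pev95} is imported from Pevzner \cite{Pev95} and used as a black box (in the proof of Theorem~\ref{t:SBH_no_multiplicities} and in Claim~3 of the proof of Theorem~\ref{l:dogBones}), so there is no in-paper argument to compare against; your proposal is a from-scratch reconstruction of the classical Ukkonen--Pevzner characterization. Its skeleton is the standard one and is sound: the directed Kotzig lemma (flipping the pairing at a doubly-visited node always splits the circuit into two edge-disjoint closed trails) is correct for directed graphs; the repair criterion (a second flip reconnects the two trails if and only if the flipped node has one visit in each trail, i.e.\ the two $2$-repeats are interleaved) is correct; and, crucially, your innermost-repeat step does handle arbitrary families of flips, because the inner arc of an innermost repeat contains no branching, so any Eulerian cycle that flips the innermost pairing immediately closes a proper sub-trail and fails to be Eulerian --- the pairing there is forced in \emph{every} Eulerian cycle regardless of choices elsewhere, and the induction then goes through. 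The reverse direction via the $\bv\bw\bv\bw$ pattern is also correct.

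Two concrete flaws remain, both repairable. First, the $S_3$ count is wrong: at a node visited three times with distinct in- and out-edges $e_i, f_i$, exactly two of the six pairings yield a single connected circuit, namely the original pairing $e_i \mapsto f_i$ and \emph{one} of the two cyclic derangements ($e_i \mapsto f_{i+1}$ in the cyclic order of visits); the other derangement $e_i \mapsto f_{i-1}$ splits the circuit into three closed trails, and each transposition splits it into two. Your claim that ``precisely the two $3$-cycles'' are connected would in particular make the original pairing disconnected, which is absurd; the conclusion you need (at least two Eulerian cycles, hence non-uniqueness) survives, but the statement should be fixed. Second, your single-node pairing calculus is literally valid only in the condensed sequence graph (Definition~\ref{d:condensed}), not in the raw $K$-mer multigraph to which the lemma refers: there, an unbridged repeat of length $\ell > K$ is a path of $\ell-K+1$ doubly-visited nodes whose interior edges have multiplicity two, the meaningful ambiguity is the entry--exit matching across the whole path, and permuting indistinguishable parallel copies of an edge produces formally distinct but sequence-identical Eulerian cycles --- so ``unique Eulerian cycle'' must be read up to such permutations, or the argument must be run in the condensed graph as the paper itself does in Theorem~\ref{t:SBH_no_multiplicities}. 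Relatedly, the paper's definition of a triple repeat permits two of the three left-extensions (or right-extensions) to coincide, so the clean three-distinct-edges picture does not always apply; non-uniqueness still holds in that degenerate case (the connected cyclic re-pairing still produces a genuinely different sequence), but this needs to be checked explicitly rather than deferred to ``bookkeeping.''
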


% Assuming that the original sequence is reconstructible implies that no vertex can be traversed three times, for this would indicate the existence of an unbridged triple repeat. Similarly, if all edges are traversed twice, then either there is an unbridged interleaved repeat, or the entire sequence is simply two concatenated copies of a shorter sequence. If we ignore the latter option, 

%\texttt{\color{red} FIXME: ADD CITATIONS}

Most practical algorithms (e.g. \cite{IW95}, \cite{Maccallum:2009qy}, \cite{Zerbino:2008fj}) condense unambiguous paths (called unitigs by Myers \cite{Mye00} in a slightly different setting) for computational efficiency. The more significant benefit for us, as shown in Theorem~\ref{t:SBH_no_multiplicities}, is that if Ukkonen's condition is satisfied then condensing the graph obviates the need to estimate multiplicities. Condensing a $K$-mer graph results in a graph of the following type.
%in some nodes labeled with subsequences longer than $K$, requiring the following more general graph structure.
\begin{definition}[Sequence graph] \label{def:seqGraph}
A \emph{sequence graph} is a graph in which each node is labeled with a subsequence, and edges $(\bu,\bv)$ are labeled with an \emph{overlap} $\olap\bu\bv$ such the subsequences $\bu$ and $\bv$ overlap by $\olap\bu\bv$ (the overlap is not necessarily maximal). In other words, an edge label $\olap\bu\bv$ on $e=(\bu,\bv)$ indicates that the $\olap\bu\bv$-length suffix of $\bu$ is equal to the $\olap\bu\bv$-length prefix of $\bv$. 
%$\bu_{\text{end}-a_{uv}+1}^{a_{uv}} = \bv_{1}^{a_{u,v}}$. 
\end{definition}
The sequence graph generalizes both the overlap graph used by {\GR } in Section~\ref{sec:Greedy} (nodes correspond to reads, and edge overlaps are \emph{maximal} overlaps) as well as the $K$-mer algorithms discussed in this section (nodes correspond to $K$-mers, and edge overlaps are $K-1$).

In order to speak concisely about concatenated sequences in the sequence graph, we extend the notation $\s_t^\ell$ (denoting the length-$\ell$ subsequence of the DNA sequence ${\bf s}$ starting at position $t$) which was introduced in Section~\ref{sec:Ukkonen}; we abuse notation slightly, and write $\s_t^\text{end}$ to indicate the subsequence of ${\bf s}$ starting at position $t$ and having length so that its end coincides with the end of ${\bf s}$.

We will perform two basic operations on the sequence graph. For an edge $e=(\bu,\bv)$ with overlap $\olap\bu\bv$, \emph{merging} $\bu$ and $\bv$ along $e$ produces the concatenation $\bu_1^\text{end} \bv_{\olap\bu\bv+1}^\text{end}$. \emph{Contracting} an edge $e=(\bu,\bv)$ entails two steps (c.f. Fig.~\ref{fig:condensing}): first, merging $\bu$ and $\bv$ along $e$ to form a new node $\bw=\bu_1^\text{end} \bv_{\olap\bu\bv+1}^\text{end}$, and, second, edges to $\bu$ are replaced with edges to $\bw$, and edges from $\bv$ are replaced by edges from $\bw$. We will only contract edges $(\bu,\bv)$ with $\dout(\bu)=\din(\bv)=1$.

The condensed graph is defined next.  
%\begin{definition}[Condensed sequence graph] \label{d:condensed}
% The condensed sequence graph replaces simple (unambiguous) paths by a single vertex. Concretely, if for any two nodes $u$ and $v$, $e=(u,v)$ is an edge in the graph and $\dout(u)=\din(v)=1$, then the nodes $u$ and $v$ are condensed, i.e. replaced by a new node $w$ obtained by aligning the sequences corresponding to $u$ and $v$ with overlap according to the label on $e$. This operation is repeated until no candidate edges remain, resulting in the \emph{condensed sequence graph}. If $\PP=v_1,v_2,\dots,v_q$ is a path in the original graph, then the condensed path is obtained by condensing an edge $v_i,v_{i+1}$ whenever it is condensed in the graph, and replacing the node $v_1$ by $w$ whenever an edge $(u,v_1)$ is condensed into $w$, and similarly for the final node $v_q$. 
%\end{definition}
\begin{definition}[Condensed sequence graph] \label{d:condensed}
The \emph{condensed sequence graph} replaces unambiguous paths by single nodes. Concretely, any edge $e=(u,v)$ with $\dout(u)=\din(v)=1$ is contracted, and this is repeated until no candidate edges remain.%, yielding the \emph{condensed sequence graph}. 
\end{definition}

For a path $\PP=\bv_1,\bv_2,\dots,\bv_q$ in the original graph, the corresponding path in the condensed graph is obtained by contracting an edge $(\bv_i,\bv_{i+1})$ whenever it is contracted in the graph, replacing the node $\bv_1$ by $\bw$ whenever an edge $(\bu,\bv_1)$ is contracted to form $\bw$, and similarly for the final node $\bv_q$. It is impossible for an intermediate node $\bv_i$, $2\leq i < q$, to be merged with a node outside of $\PP$, as this would violate the condition $\dout(u)=\din(v)=1$ for edge contraction in Defn.~\ref{d:condensed}.

In the condensed sequence graph $\G$ obtained from a sequence $\s$, nodes correspond to subsequences via their labels, and paths in $\G$ correspond to subsequences in $\s$ via merging the constituent nodes along the path. If the subsequence corresponding to a node $\bv$ appears twice or more in $\s$, we say that $\bv$ corresponds to a repeat. Conversely, subsequences of length $\ell \geq K$ in $\s$ correspond to paths $\PP$ of length $\ell - K +1$ in the $K$-mer graph, and thus by the previous paragraph also to paths in the condensed graph $\G$.

We record a few simple facts about the condensed sequence graph obtained from a $K$-mer graph.
\begin{lemma} \label{l:condensed} Let $\G_0$ be the $K$-mer graph constructed from the $(K+1)$-spectrum of $\s$ and let $\CC_0=\CC_0(\s)$ be the cycle corresponding to $\s$. In the condensed graph $\G$, let $\CC$ be the cycle obtained from $\CC_0$ by contracting the same edges as those contracted in  $\G_0$.
\begin{enumerate}
\item Edges in $\G_0$ can be contracted in any order, resulting in the same graph $\G$, so the condensed graph is well-defined. Similarly $\CC$ is well-defined. 
\item  The cycle $\CC$ in $\G$ corresponds to $\s$ and is the unique such cycle. 
\item The cycle $\CC$ in $\G$ traverses each edge at least once.  
\end{enumerate}
\end{lemma}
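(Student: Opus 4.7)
The plan is to address the three claims in the order stated, since part~1 is needed for parts~2 and~3 to be meaningful.

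For part~1, I would prove well-definedness by a confluence argument. Since each contraction strictly decreases the number of edges, the process terminates, so by Newman's lemma it suffices to establish local confluence. Suppose $e_1 = (u_1, v_1)$ and $e_2 = (u_2, v_2)$ are two distinct contractible edges, i.e.\ $\dout(u_i) = \din(v_i) = 1$. If they share no endpoint the contractions obviously commute. If they share an endpoint, the contractibility conditions rule out $u_1 = u_2$ (which would force $\dout(u_1) \geq 2$) and $v_1 = v_2$ (which would force $\din(v_1) \geq 2$), leaving only the chain case $v_1 = u_2$ (or the symmetric $u_1 = v_2$). Here $v_1$ is a pass-through node with a single in-edge and single out-edge, and both orders of contraction merge $u_1, v_1, v_2$ into one node with the same label and the same incident edges. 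A contraction also leaves every other contractibility condition in the graph intact. The same argument, restricted to edges along $\CC_0$, shows that $\CC$ is independent of contraction order.

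For part~2 I would handle existence and uniqueness separately. Existence is immediate: by Lemma~\ref{l:cycleLemma} the merged node labels along $\CC_0$ spell $\s$, and contracting any edge $(u,v)$ performs exactly the merge already implicit in reading off the label across that edge, so the merged label along $\CC$ in $\G$ still equals $\s$. For uniqueness, let $\CC'$ be any cycle in $\G$ whose merged label is $\s$. I would \emph{lift} $\CC'$ to a closed walk $\CC_0'$ in $\G_0$ by replacing each node of $\CC'$ by the unambiguous path in $\G_0$ that was condensed to form it and each edge of $\CC'$ by the corresponding (non-contracted) edge of $\G_0$. The lifted walk has merged label $\s$, and it traverses every edge of $\G_0$: each contracted edge sits in the interior of some unambiguous path and is used whenever $\CC'$ visits the corresponding node of $\G$, while every non-contracted edge of $\G_0$ is an edge of $\G$ and therefore appears in $\CC'$, since every $(K+1)$-mer of $\s$ must appear along any cycle spelling $\s$. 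The uniqueness clause of Lemma~\ref{l:cycleLemma} then gives $\CC_0' = \CC_0$, and projecting back via the same contractions yields $\CC' = \CC$. Part~3 is then immediate: edges of $\G$ are in bijection with non-contracted edges of $\G_0$, and $\CC_0$ traverses every edge of $\G_0$, so $\CC$ traverses every edge of $\G$.

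I expect the main subtlety to be the lifting step in part~2: one must verify that the lifted object really is a closed walk in $\G_0$ (the condensed paths concatenate consistently at their endpoints with the non-contracted edges), that its merged label agrees with $\s$, and crucially that it traverses every edge of $\G_0$ so that Lemma~\ref{l:cycleLemma} applies. The cleanest way through is to observe that whenever a cycle in $\G$ spells $\s$ it must traverse every node and every edge of $\G$, because nodes and edges correspond bijectively to $K$-mers and $(K+1)$-mers of $\s$ and each such $k$-mer must appear along any cycle whose merged label equals $\s$.
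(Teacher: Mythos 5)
Your proposal is correct, but there is little in the paper to compare it against: the paper states Lemma~\ref{l:condensed} with no proof at all, introducing it as ``a few simple facts'' and leaning on the two paragraphs that precede it, which establish only the correspondence between paths in $\G_0$ and paths in $\G$ (in particular, that an intermediate node of a path cannot be merged with a node outside the path, since that would violate the condition $\dout(\bu)=\din(\bv)=1$ of Defn.~\ref{d:condensed}). Your argument supplies exactly the rigor the paper leaves implicit. The Newman's-lemma/local-confluence argument for part~1 is a genuine addition --- nowhere does the paper address order-independence of contractions. Your lifting argument for uniqueness in part~2 (expand each node of a candidate cycle $\CC'$ into its condensed path of $\G_0$, observe that the lift spells $\s$ and covers every edge of $\G_0$ because contracted and non-contracted edges are \emph{distinct} $(K+1)$-mers, then invoke Lemma~\ref{l:cycleLemma} and project back) is precisely the bridge that the paper's later proofs, e.g.\ of Theorem~\ref{t:SBH_no_multiplicities} and Theorem~\ref{l:dogBones}, take for granted. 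Part~3 (edges of $\G$ biject with non-contracted edges of $\G_0$, and $\CC_0$ covers all of $\G_0$) matches the intended argument. One small wording fix: in the condensed graph it is not true that \emph{nodes} correspond bijectively to $K$-mers of $\s$ (node labels are longer); what your uniqueness step actually needs, and what is true, is the edge half of that claim.

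One caveat worth recording against part~1 of your proof. Your case analysis of overlapping contractible edges omits the configuration in which \emph{both} endpoints are shared, i.e.\ $v_1=u_2$ and $u_1=v_2$ simultaneously (a 2-cycle $\bu\to\bv\to\bu$ with all four relevant degrees equal to one). In that configuration local confluence genuinely fails: contracting one edge or the other terminates in a single self-looped node whose label is a different rotation of the underlying circular sequence, so the normal form is unique only up to cyclic rotation. Since the sequence graph built from a single sequence is connected, this can happen only when $\G_0$ is itself a simple cycle, i.e.\ when $\s$ has no repeated $K$-mer --- a degenerate case the paper also sidesteps silently (compare the separate ``$\G$ is itself a cycle'' branch in the proof of Theorem~\ref{l:dogBones}, and the Remark excluding sequences formed by concatenating copies of a shorter sequence). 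Your proof should either exclude this case by hypothesis or state part~1 up to rotation; with that caveat it is complete.
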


\begin{reptheorem}{t:SBH_no_multiplicities}
%\label{t:SBH_no_multiplicities}
Let $\calS_{K+1}$ be the $(K+1)$-spectrum of $\s$ and $\G_0$ be the $K$-mer graph constructed from $\calS_{K+1}$, and 
let $\G$ be the condensed sequence graph obtained from $\G_0$. If Ukkonen's condition is satisfied, i.e. there are no triple repeats or interleaved repeats of length at least $K$, then there is a unique Eulerian cycle $\CC$ in $\G$ and $\CC$ corresponds to $\s$.
\end{reptheorem}
\begin{proof}
%Pevzner \cite{Pev95} shows that if Ukkonen's condition is violated, there are at least two Eulerian cycles in the $K$-mer graph with known edge multiplicities. Condensing the graph does not change the number of Eulerian cycles. 
We will show that if Ukkonen's condition is satisfied, the cycle $\CC=\CC(\s)$ in $\G$ corresponding to $\s$ (constructed in Lemma~\ref{l:condensed}) traverses each edge exactly once in the condensed $K$-mer graph, i.e. $\CC$ is Eulerian. Pevzner's \cite{Pev95} arguments show that if there are multiple Eulerian cycles then Ukkonen's condition is violated, so it is sufficient to prove that $\CC$ is Eulerian. As noted in Lemma~\ref{l:condensed}, $\CC$ traverses each edge at least once, and thus it remains only to show that $\CC$ traverses each edge \emph{at most} once.

To begin, let $\CC_0$ be the cycle corresponding to $\s$ in the original $K$-mer graph $\G_0$. 
We argue that every edge $(\bu,\bv)$ traversed twice by $\CC_0$ in the $K$-mer graph $\G_0$ has been contracted in the condensed graph $\G$ and hence in $\CC$. Note that the cycle $\CC_0$ does not traverse any node three times in $\G_0$, for this would imply the existence of a triple repeat of length $K$, violating the hypothesis of the Lemma.  It follows that the node $\bu$ cannot have two outgoing edges in $\G_0$ as $\bu$ would then be traversed three times; similarly, $\bv$ cannot have two incoming edges.  Thus $\dout(\bu)=\din(\bv)=1$ and, as prescribed in Defn.~\ref{d:condensed}, the edge $(\bu,\bv)$ has been contracted. 
%Similarly, if all edges are traversed twice, then either there is an unbridged interleaved repeat, or the entire sequence is simply two concatenated copies of a shorter sequence. If we ignore the latter option, the desired cycle is the one given in the following definition.
\end{proof}

\subsubsection{Proofs for \bdbI}
\label{sec:bdbApp}

Since bridging reads extend one base to either end of a repeat, it will be convenient to use the following notation for extending sequences: Given an {\xn } $\bv$ with an incoming edge $(\bp,\bv)$ and an outgoing edge
 $(\bv,\bq)$, let 
 \begin{equation}\label{e:extNotation}
 \vq=\bv\,\bq_{\olap\bv\bq+1}^{1},\quad \text{and}\quad \pv=\bp_{\vecend - \olap\bp\bv}^{1}\bv\,.
% \vq=\bv_1^\vecend\,\bq_{\olap\bv\bq+1}^{1},\quad \text{and}\quad \pv=\bp_{\vecend - \olap\bp\bv}^{1}\bv_1^\vecend\,.
%,\quad \text{and }\pvq = \bp_{\vecend - \olap\bp\bv}^{\vecend - \olap\bp\bv}\bv_1^\vecend\,\bq_{\olap\bv\bq+1}^{\olap\bv\bq+1}\,.
 \end{equation}
Here $\vq$ denotes the subsequence $\bv$ appended with the single next base in the merging of $\bv$ and $\bq$ and $\pv$ the subsequence $\bv$ prepended with the single previous base in the merging of $\bp$ and $\bv$.
% , and finally $\pvq $ the subsequence $\bv$ with a base on either end. 
 For example, if $\bv =$ ATTC, $\bp=$ TCAT, $a_{\bp\bv}=2$, $\bq = $ TTCGCC, and $a_{\bv\bq} = 3$, then $\vq =$ ATTCG, $\pv =$ CATTC, and $\pvq = $ CATTCG. 
 
 The idea is that a bridging read is consistent with only one pair $\pv$ and $\vq$ and thus allows to match up edge $(\bp,\bv)$ with $(\bv,\bq)$. This is recorded in the following lemma.
 
\begin{lemma}\label{l:bridgingReads}
Suppose $\CC$ corresponds to a sequence $\s$ in a condensed sequence graph $\G$. If a read $\br$ bridges an {\xn } $\bv$, then there are unique edges $(\bp,\bv)$ and $(\bv,\bq)$ such that $\pv$ and $\vq$ are adjacent in $\br$.
\end{lemma}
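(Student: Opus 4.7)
The plan is to leverage the correspondence between the sequence $\s$ and the unique cycle $\CC$ in the condensed sequence graph $\G$ guaranteed by Theorem~\ref{t:SBH_no_multiplicities}. For existence, if the read $\br$ bridges a copy of $\bv$ at position $t$ in $\s$, then by the bridging definition $\br$ contains the length-$(|\bv|+2)$ substring $s(t-1)\,\bv\,s(t+|\bv|)$. At this copy, the cycle $\CC$ enters $\bv$ via some edge $(\bp,\bv)$ and exits via some edge $(\bv,\bq)$. By the $K$-mer graph's edge construction, the character prepended by $(\bp,\bv)$ is precisely $s(t-1)$, so $\pv = s(t-1)\,\bv$; symmetrically $\vq = \bv\,s(t+|\bv|)$. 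Both appear as length-$(|\bv|+1)$ substrings of $\br$, with $\vq$ starting one position after $\pv$ (so that they overlap on $\bv$), which is exactly what ``adjacent in $\br$'' means.

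For uniqueness, I would rely on a structural property of the condensed sequence graph: two distinct incoming edges $(\bp_1,\bv), (\bp_2,\bv)$ must arise from distinct $(K+1)$-mers of the form $(\text{base})\cdot \bv[1..K]$ in the underlying $K$-mer graph, hence prepend \emph{distinct} bases to $\bv$. Therefore the first character of $\pv$ alone identifies $\bp$. Symmetric reasoning shows the last character of $\vq$ identifies $\bq$. Consequently, any length-$(|\bv|+2)$ window of $\br$ whose middle $|\bv|$ bases coincide with $\bv$ determines one and only one compatible pair $(\bp,\bq)$, and the existence argument above shows that this pair is realized by the walk through $\CC$ at the bridged copy.

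The main obstacle I foresee is that $\br$ could in principle contain the label of $\bv$ as a substring more than once, e.g.\ if two copies of the repeat $\bv$ in $\s$ lie within distance $L$. The natural reading of the lemma is then that uniqueness holds \emph{for each bridging event}: every bridged copy of $\bv$ yields a uniquely determined pair $(\bp,\bq)$ from its immediate flanking bases. This per-event uniqueness is exactly the information that the bridging step of {\bdbI} exploits when it attaches a bridging read to the single incoming-outgoing edge pair it witnesses, and it also matches the usage of the lemma in the repeat-resolution step of Algorithm~\ref{alg3}.
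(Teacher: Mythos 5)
The paper never proves Lemma~\ref{l:bridgingReads}: it is stated bare, as a record of the preceding sentence ("a bridging read is consistent with only one pair $\pv$ and $\vq$"), so there is no in-paper argument to compare against, and your proposal supplies the missing proof. Its core is correct and is clearly the intended argument. Existence is exactly as you say: the traversal of $\CC$ through the bridged copy enters $\bv$ by some edge $(\bp,\bv)$ and leaves by some edge $(\bv,\bq)$, and since merging the nodes along $\CC$ reproduces $\s$, the two flanking bases covered by $\br$ are precisely the prepended and appended bases of those edges, so $\pv$ and $\vq$ occur in $\br$ offset by one position. Uniqueness via "the first base of $\pv$ identifies $(\bp,\bv)$ and the last base of $\vq$ identifies $(\bv,\bq)$" is also right, and your justification---distinct in-edges of a node descend from distinct $(K+1)$-mers agreeing on their last $K$ bases, hence differing in their first base---is the correct mechanism.

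Two caveats. First, your appeal to Theorem~\ref{t:SBH_no_multiplicities} is unnecessary and slightly off: the correspondence between $\CC$ and $\s$ is the \emph{hypothesis} of the lemma (cf.\ Lemmas~\ref{l:cycleLemma} and~\ref{l:condensed}), and that theorem additionally assumes Ukkonen's condition, which is not assumed here; no uniqueness of Eulerian cycles is needed. Second, and more substantively, your uniqueness step is a property of condensed graphs obtained from $K$-mer graphs, not of arbitrary condensed sequence graphs in the sense of Definitions~\ref{def:seqGraph} and~\ref{d:condensed}: in a general sequence graph two in-edges $(\bp_1,\bv)$ and $(\bp_2,\bv)$ can prepend the same base (e.g.\ $\bp_1$ ending in CTA and $\bp_2$ ending in GTA, each with overlap $1$ into a node beginning with A, both prepending T), in which case the literal statement fails; in a $K$-mer graph this cannot occur because the shared context would be a common node. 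So the lemma implicitly requires either that $\G$ descends from a $K$-mer graph or the invariant that distinct in-edges (out-edges) of any node prepend (append) distinct bases---and since {\bdbI} re-applies the lemma to graphs already modified by its own step~4, one should also note that this invariant is preserved by resolution (it is: resolution only redirects or splits edges, never changes the base an edge prepends or appends, and never gives a node two in-edges prepending the same base). Finally, your closing observation that a single read may contain two differently flanked occurrences of $\bv$'s label is a genuine gap in the literal "unique pair per read" phrasing, and your per-bridging-event reading is the right repair; it is exactly the information the bridging steps of {\bdbI} and {\bdbII} consume.
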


{\bdbI } is described as follows. 
\begin{algorithm}[H]
\caption{{\sc \bdbI}. Input: reads $\RR$, parameter $K$. Output: sequence $\sh$. }
\label{alg2}
\noindent
\emph{$K$-mer steps 1-3:}\\
1. For each subsequence $\x$ of length $K$ in a read, form a node with label $\x$.  \\
2. For each read, add edges between nodes representing adjacent $K$-mers in the read. \\
3. Condense the graph as described in Defn.~\ref{d:condensed}.\\
4.~\emph{Bridging step:} See Fig.~\ref{fig:bridgingStepEx}. While there exists an {\xn } $\bv$ with $\din(\bv)=\dout(\bv)=2$ bridged by some read $\br$: (i) Remove $\bv$ and edges incident to it. Add duplicate nodes $\bv_1,\bv_2$. (ii) Choose the unique $\bp_i$ and $\bq_j$ s.t. $\pvinp{\bp_i}{\bv}$ and $\vqinp\bv{\bq_j}$ are adjacent in $\br$ and add edges $(\bp_i,\bv_1)$ and $(\bv_1,\bq_j)$. Choose the unused $\bp_i$ and $\bq_j$, add edges $(\bp_i,\bv_2)$ and $(\bv_2,\bq_j)$. (iii) Condense the graph. \\
5.~\emph{Finishing step:} Find an Eulerian cycle in the graph and return the corresponding sequence.
\end{algorithm}

\subsubsection{Proofs for \bdbII}
\label{sec:bdbIIApp}

In this subsection we recall Theorem~\ref{l:dogBones} stating sufficient conditions for correct reconstruction, and derive the corresponding required coverage depth and read length to meet a target probability of correct reconstruction. The subsection concludes with a proof that the sufficient conditions are correct. 

\begin{figure}[htb]
\centering{
\includegraphics[height=5in]{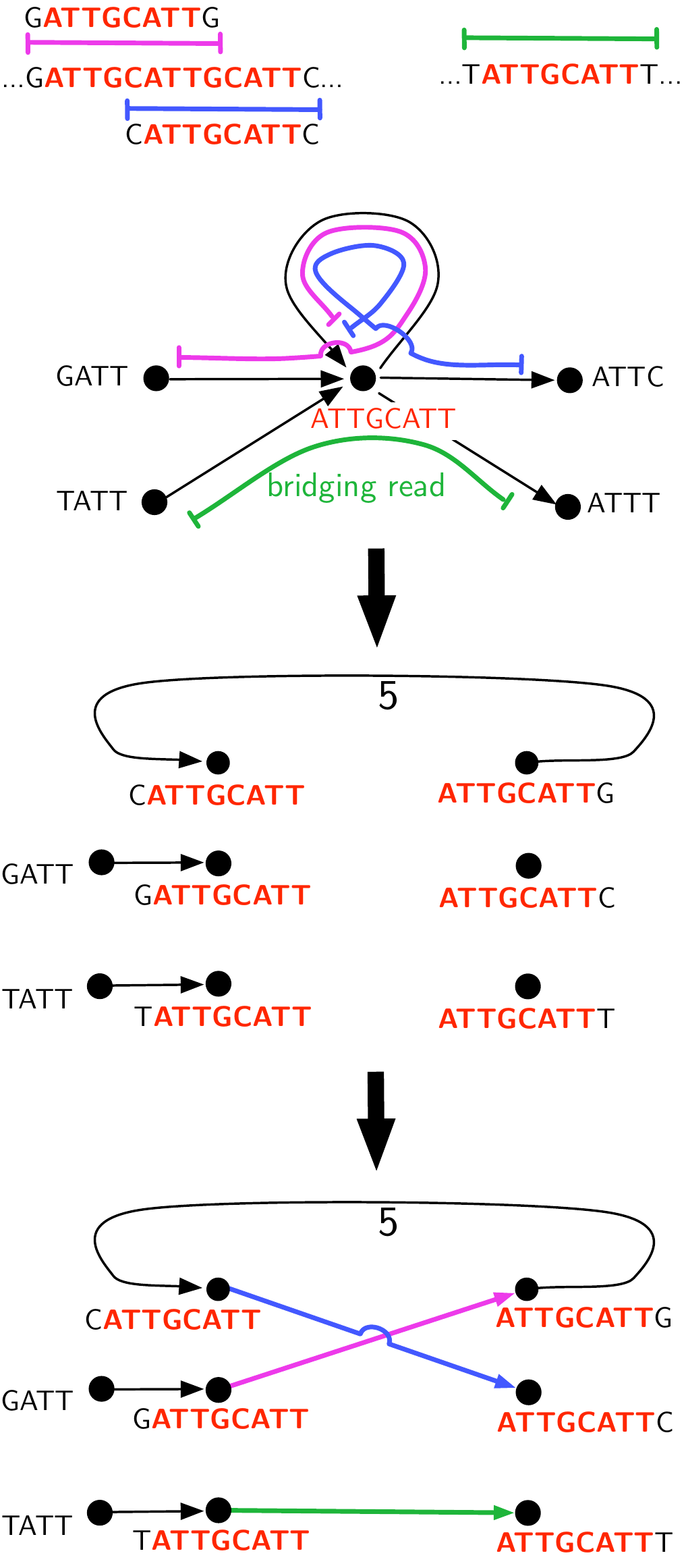}
%\caption{Resolution of an {\xn } with a self-loop.} \label{fig:ResolveDogBone}
\caption{Resolution of {\xn } with a self-loop. }
\label{fig:XnodeB} }
\end{figure}

\begin{reptheorem}{l:dogBones}
 The algorithm {\bdbII }  reconstructs the sequence $\s$ if:
\begin{enumerate}
\item[(a)] all interleaved repeats are bridged
\item[(b)] all triple repeats are \textbf{all-bridged}
\item[(c)] the sequence is covered by the reads.
\end{enumerate}
\end{reptheorem}
%\begin{proof}

\begin{remark}Unlike the previous $K$-mer algorithms, \textsc{DeBruijn} and {\bdbI}, it is unnecessary to specify a parameter $K$ for {\bdbII }.  Implicitly {\bdbII }  uses $K=1$, which makes the condition that reads overlap by $K$ equivalent to coverage of the genome. 
%Depending on the repetitiveness of the genome, building the initial $K$-mer graph with $K$ around 40 seems to provide a good trade-off between computational efficiency and required number of reads. 
\end{remark}

Figure~\ref{fig:plotINTRO} plots the performance of \bdbII, obtained by solving for the relationship between $G,N,L,$ and $\eps$ in order to satisfy the conditions of Lemma~\ref{l:dogBones}. We first perform the requisite calculations, and then prove the Lemma.

Condition~(a) is already dealt with in \eqref{e:LintBound1Appendix} and \eqref{e:LintBound2Appendix}, and Condition~(c) amounts to the requirement that $\frac{N}{\nc}\geq 1$. 

%\FIXME{equation reference}

We turn to Condition~(b) that all triple repeats are all-bridged. Let $c_m$ denote the number of triple repeats of length $m$. A union bound estimate over triple repeats for the event that one such triple repeat fails to be all-bridged gives
\begin{equation}
\Pe\approx \sum_m 3\cdot c_m e^{-\lam(L-m-1)^+}\,,
\end{equation} 
and requiring $\Pe\leq \eps$ and solving for $L$ yields
\begin{equation}\label{e:tripBridge}
L\geq \frac1\lam \log\frac{\gam_3}\eps =  \frac G N \log\frac{\gam_3}\eps\,,
\end{equation}
where $\gam_3:= \sum_m 3 c_m e^{(N/G)\cdot (m+1)}$ is computed from the triple repeat statistics $c_m$.

In order to understand the cost of all-bridging triple repeats, compared to simply bridging one copy as required by our lower bound, it is instructive to study the effect of the single longest triple repeat. Setting $c_{\Ltri} = 1$ and $c_m=0$ for $m\neq \Ltri$ makes $\gam_3 =  3e^{(N/G)\cdot ( \Ltri+1)}$ in \eqref{e:tripBridge} and  
\begin{equation}\label{e:allBridgeTriple2}
L\geq L_3^\text{all}:=%\Ltri+1+\frac1{\lam} \log 3\eps\inv = 
\Ltri+1+\frac GN \log 3\eps\inv\,.
\end{equation}
Bridging the longest triple repeat, as shown in Section~\ref{sec:TripleRepeats}, requires
\begin{equation}\label{e:bridgeTriple2}
L\geq L_3:=%\Ltri+1+ \frac1{3\lam} \log \eps\inv = 
\Ltri+1+ \frac G{3N} \log \eps\inv\,.
\end{equation}

Solving for $N$ in equations \eqref{e:bridgeTriple2} and \eqref{e:allBridgeTriple2} gives 
\begin{equation}\label{e:NforTriples}
N_3 \geq \frac G3\cdot \frac{\log\eps\inv}{L-\Ltri-1}
\end{equation}
\begin{equation}
N_3^\text{all}\geq  G\cdot \frac{\log\eps\inv+\log 3}{L-\Ltri-1}\,.
\end{equation}
The ratio is
\begin{equation} \label{e:NratioforTriples}
\frac{N_3^\text{all}}{N_3}= 3\cdot \frac{\log3\eps\inv}{\log \eps\inv}\approx 3.72 \quad \text{for } \eps = 10^{-2}\,.
\end{equation}
This means that if the longest triple repeat is dominant, then for $L$ slightly larger than $\Ltri$, {\bdbII } needs a coverage depth approximately 3.72 times higher than required by our lower bound.

The remainder of this subsection is devoted to the proving Lemma~\ref{l:dogBones}.

We will use $m_{\CC}(\bv)$ to denote the multiplicity (traversal count) a cycle $\CC$ assigns a node $\bv$. The multiplicity $m_{\CC}(\bv)$ is also equal to the number of times the subsequence $\bv$ appears in the sequence corresponding to $\CC$. For an edge $e$, we can similarly let $m_{\CC}(e)$ be the number of times $\CC$ traverses the edge.
%Let $\din(v)$ denote the number of incoming edges to $v$, and $\dout(v)$ the number of outgoing edges.
%We introduced the graphical notion of 'dog-bones'. When we speak of the nodes in a dog-bone we mean just the nodes $v_1, \ldots, v_q$ (i.e. excluding $u_1,\ldots, u_k$ and $u_1,\ldots, u_k$ in Fig.~11), and when we refer to the dogbone being [all]-bridged we mean that the repeat corresponding to these nodes is [all-]bridged. Note that dog-bones always correspond to repeats, but not all repeats will be contained in a dog-bone (and indeed the set of repeats which are contained in a dog-bone \emph{changes} as dog-bones are removed and/or produced).
The following key lemma relates node multiplicities with the existence of {\xns}.
\begin{lemma}\label{c:SEclaim}
Let $\CC$ be a cycle in a condensed sequence graph $\G$, where $\G$ itself is not a cycle, traversing every edge at least once. If $\bv$ is a node with maximum multiplicity at least 2, i.e. $m_\CC(\bv)=\max_{u\in \G} m_\CC(\bu) \geq 2$, then $\bv$ is an \xn. As a consequence, if $m_\CC(\bv)\geq 3$ for some $\bv$, i.e. $\CC$ traverses some node at least three times, then $m_\CC(\bu)\geq 3$ for some {\xn } $\bu$. 
\end{lemma}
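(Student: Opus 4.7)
The plan is to prove the first statement by contradiction, using two simple ingredients: the condensed-graph property from Definition \ref{d:condensed}, and the degree-conservation identity $m_\CC(\bu) = \sum_{e\text{ into }\bu} m_\CC(e) = \sum_{e\text{ out of }\bu} m_\CC(e)$ that holds for any directed cycle (with self-loops contributing once to both sums). The second statement will follow immediately by applying the first to whichever node attains the maximum multiplicity.

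First I would fix $\bv$ achieving the maximum $m := m_\CC(\bv) \geq 2$ and suppose for contradiction that $\bv$ is not an \xn. By symmetry between in-edges and out-edges (reversal of $\G$ preserves both the condensed property and the cycle $\CC$), I may assume $\din(\bv) \leq 1$; since $\CC$ visits $\bv$, in fact $\din(\bv)=1$. Let $(\bp,\bv)$ be the unique incoming edge. A small preliminary step is to rule out $\bp=\bv$: if this sole in-edge were a self-loop, then either $\G$ is a lone self-loop (excluded because $\G$ is not a cycle) or no edge reaches $\bv$ from outside and $\bv$ is never visited, contradicting $m\geq 2$.

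Next, because $(\bp,\bv)$ still sits in the condensed graph $\G$ and $\din(\bv)=1$, Definition \ref{d:condensed} forces $\dout(\bp)\geq 2$ (otherwise this edge would have been contracted). The degree-conservation identity at $\bv$, combined with $\din(\bv)=1$, yields $m_\CC((\bp,\bv)) = m$. Applied at $\bp$, combined with $\dout(\bp)\geq 2$ and the fact that $\CC$ traverses every edge at least once, it yields
\begin{equation*}
m_\CC(\bp) \;=\; \sum_{e\text{ out of }\bp} m_\CC(e) \;\geq\; m_\CC((\bp,\bv)) + 1 \;=\; m+1,
\end{equation*}
contradicting the maximality of $m_\CC(\bv)$. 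This proves the first claim, and the consequence follows by applying it to the node $\bu^{*}$ attaining the maximum multiplicity, which satisfies $m_\CC(\bu^{*})\geq 3$ by hypothesis.

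The argument is conceptually short, and I expect the only real obstacle to be the careful handling of self-loops: one must verify that the degree-conservation identity is stated correctly when $\bp$ or $\bv$ carries a self-loop (each traversal of a self-loop contributes once to both the in-sum and the out-sum at that node), and one must explicitly dispose of the degenerate case where the sole incoming edge to $\bv$ is itself a self-loop. Both are routine bookkeeping, but forgetting them would leave genuine holes precisely in the situations where the condensed graph is most compact.
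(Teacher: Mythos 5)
Your proof is correct and is essentially the paper's own argument in mirror image: the paper supposes $\dout(\bv)=1$ and examines the head $\bu$ of the unique out-edge, splitting into the cases $\din(\bu)\geq 2$ (forcing $m_\CC(\bu)>m_\CC(\bv)$, contradicting maximality) and $\din(\bu)=1$ (the edge would have been contracted), whereas you suppose $\din(\bv)=1$, invoke the condensed property first to force $\dout(\bp)\geq 2$, and then derive the same multiplicity contradiction via edge-traversal counting. Your explicit handling of self-loops and of the degree-conservation identity is just a more careful version of the paper's remark that the neighbor is distinct from $\bv$ ``since otherwise $\G$ is a cycle,'' so there is no substantive difference in approach.
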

\begin{proof}
%Suppose $m_\CC(u)\geq 3$ for some $u\in \G$, and 
Let $\bv$ be a node with maximum multiplicity $m_\CC(\bv)=\max_{\bu\in \G} m_\CC(\bu)$. We will show that $\bv$ is an {\xn }, i.e. $\dout(\bv)\geq 2$ and $\din(\bv)\geq 2$.
%\begin{figure}[h]
%\begin{centering}
%\includegraphics[width=2.5in]{figs/proofSmartKmer_1a.pdf}
%\hspace{1cm}
%\includegraphics[width=2.5in]{figs/proofSmartKmer_1b.pdf}
%\caption{%Picture for lemma~\ref{c:SEclaim}. Black edges are optional. (a) Red indicates that an edge is the unique exit of a node. Blue indicates an edge that is NON-unique entrance to a node. Then it can be seen that $m(v_q)> m(v_1)$. (b)
%Similarly, green indicates an edge that is the unique entrance of a node, and purple an edge that is a NON-unique exit. Then it can be seen that $m(v_1)> m(v_q)$.
%} \label{fig:proofSmartKmer_1}
%\end{centering}
%\end{figure}

We prove that $\dout(\bv)\geq 2$ by supposing that $\dout(\bv)=1$ and deriving a contradiction. Denote the outgoing edge from $\bv$ by $e=(\bv,\bu)$, where $\bu$ is distinct from $\bv$ since otherwise $\G$ is a cycle. If $\din(\bu)\geq 2$, then $\bu$ must be traversed more times than $\bv$, contradicting the maximality of $m_\CC(\bv)$, and if $\din(\bu)=1$, then the existence of the edge $e$ contradicts the fact that $\G$ is condensed. 
The argument showing that $\din(\bv)\geq 2$ is symmetric to the case $\din(\bv)\geq 2$.
%That implies the existence of a path of length at least 2 consisting of $v_1, \ldots v_q$ where $v \in \{v_1, \ldots, v_{q-1}\}$ such that $v_2, \ldots v_{q-1}$ have both in-degree and out-degrees 1. Either $v_1$ must have out-degree at least 2, or $v_q$ must have in-degree at least 2, since otherwise $v_1, \ldots, v_q$ would form a dog-bone containing $v$. If $v_1$ has two outgoing edges, then this implies that the traversal count of $v_1$ exceeds that of $v$, and likewise if $v_q$ has two incoming edges. This contradicts our choice of $v$, establishing that $\CC'$ is a SE-cycle or that $\G'$ is a simple cycle.
\end{proof}

%\begin{figure}[htb]
%\begin{centering}
%\includegraphics[width=4in]{figs/proofSmartKmer_2.pdf}
%\caption{For a node $v$ to be of maximal multiplicity implies some relationships about its neighboring nodes, which we'll use to show that $v$ is part of a dogbone.}
%\label{fig:proofSmartKmer_2}
%\end{centering}
%\end{figure}

\begin{proof}[Proof of Lemma~\ref{l:dogBones}]
We assume that all triple repeats are all-bridged, that there are no unbridged interleaved repeats, and that all reads overlap their successors by at least $1$ base pair. We wish to show that {\bdbII } returns the original sequence.

Consider the condensed sequence graph $\G_0$ constructed in steps~1-3 of {\bdbII}. 
%All reads overlap their successors by at least $1$ base, so all $2$-mers in the sequence $\s$ are represented by edges in $\G_0$ with adjacent $2$-mers in $\s$ corresponding to adjacent edges in $\G_0$, hence $\s$ corresponds to a cycle in the graph. 
Suppose all {\xns } that are either all-bridged or correspond to bridged 2-repeats have been resolved according to repeated application of the procedure in step~4 of {\bdbII}, resulting in a condensed sequence graph $\G$. We claim that 1) $\s$ corresponds to a cycle $\CC$ in $\G$ traversing every edge at least once, 2) $\CC$ is Eulerian, and 3) $\CC$ is the unique Eulerian cycle in $\G$.  

\paragraph{Proof of Claim 1.}
Let $\G_n$ be the graph after $n$ resolution steps, and suppose that $\CC_{n}$ is a cycle in $\G_n$ corresponding to the sequence $\s$ and traversing all edges. We will show that there exists a cycle $\CC_{n+1}$ in $\G_{n+1}$ corresponding to $\s$ and traversing all edges, and that $\G_{t}=\G$ for a finite $t$, so by induction, there exists a cycle $\CC$ in $\G$ corresponding to $\s$ and traversing all edges. The base case $n=0$ was shown in Lemma~\ref{l:cycleLemma}. Moving on to arbitrary $n>0$, let $\bv$ be an {\xn } in $\G_n$ labeled as in Fig.~\ref{fig:condensing}. The {\xn } resolution step is constructed precisely to preserve the existence of a cycle corresponding to $\s$. Each traversal of $\bv$ by the cycle $\CC_n$ assigns an incoming edge $(\bp_i\bv)$ to an outgoing edge $(\bv,\bq_j)$, and the resolution step correctly determines this pairing by the assumption on bridging reads. 

Note that all {\xns } in the graph $\G_{n+1}$ continue to correspond to repeats in $\s$. The process terminates: let $\LL(\G_i)=\sum_{\bv\in \G_i} m_{\CC_i}(\bv) \mathbf{1}_{m_{\CC_i}(\bv)>1}$ and observe that $\LL(\G_i)$ is strictly decreasing in $i$. Thus $\s$ corresponds to a cycle $\CC$ in $\G$ traversing each edge at least once. 

\paragraph{Proof of Claim 2.} We next show that $\CC$ is an Eulerian cycle. If $\G$ is itself a cycle, and $\s$ is not formed by concatenating multiple copies of a shorter subsequence (assumed not to be the case, see discussion at end of  Section~\ref{sec:Lower}), then $\CC$ traverses $\G$ exactly once and is an Eulerian cycle. Otherwise, if $\G$ is not a cycle, then we may apply Lemma~\ref{c:SEclaim} to see that any node with $m_{\CC}(\bv)\geq 3$ implies the existence of an {\xn } $\bu$ with $m_{\CC} (\bu)\geq 3$. Node $\bu$ must be all-bridged, by hypothesis, which means that an additional {\xn } resolution step can be applied to $\G$, a contradiction. Thus each node $\bv$ in $\G$ has multiplicity $m_{\CC}(\bv)\leq 2$.
 
We can now argue that no edge $e=(\bu,\bv)$ is traversed twice by $\CC$ in the condensed sequence graph $\G$, as it would have been contracted. Suppose $m_{\CC}(e)\geq 2$. The node $\bu$ cannot have two outgoing edges as this implies $m_\CC(\bu)\geq 3$; similarly, $\bv$ cannot have two incoming edges.  Thus $\dout(\bu)=\din(\bv)=1$, but by Defn.~\ref{d:condensed} the edge $e=(\bu,\bv)$ would have been contracted.

%Not all edges are traversed twice by Claim~\ref{c:TrTwice}. Thus $\s$ corresponds to an SE-cycle $\CC'$ in $\GG'$.

\paragraph{Proof of Claim 3.} It remains to show that there is a \emph{unique}  Eulerian cycle in $\G$. %By Lemma~\ref{l:UniqueEdgeMultiplicities} we can use the existence of an SE-cycle in $\G'$ to assign edge multiplicities uniquely to all edges in $\G'$ (and thereby to nodes as well). Note that by Lemma~\ref{c:SEclaim} any repeated node $v$ (i.e., with $m(v)\geq 2$) in $\G'$ is in a dogbone, since there are no nodes with multiplicity $3$. Conversely, n
All {\xns } in $\G$ must be unbridged {2-\xns} (correspond to 2-repeats in $\s$), as all other {\xns } were assumed to be bridged and have thus been resolved in $\G$. 
 
We will map the sequence $\s$ to another sequence $\s'$, allowing us to use the characterization of Lemma~\ref{l:Pev95} for SBH with known multiplicities. Denote by $\G'$ the graph obtained by relabeling each node in $\G$ by a single unique symbol (no matter the original node label length), and setting all edge overlaps to $0$. Through the relabeling, $\CC$ corresponds to a cycle $\CC'$ in $\G'$, and let $\s'$ be the sequence corresponding to $\CC'$. Writing $\calS_2'$ for the $2$-spectrum of $\s'$,  the graph $\G'$ is by construction precisely the 1-mer graph created from $\calS_2'$, and there is a one-to-one correspondence between {\xns } in $\G'$ and unbridged repeats in $\s'$. Through the described mapping, every unbridged repeat in $\s'$ maps to an unbridged repeat in $\s$, with the \emph{order of repeats preserved}. 

There are multiple Eulerian cycles in $\G$ only if there are multiple Eulerian cycles in $\G'$ since the graphs have the same topology, and by Lemma~\ref{l:Pev95} the latter occurs only if there are unbridged interleaved repeats in $\s'$, which by the correspondence in the previous paragraph implies the existence of unbridged interleaved repeats in  $\s$ . 
\end{proof}

\subsection{Truncation estimate for bridging repeats ({\sc \GR} and {\sc \bdbII})}
%\subsubsection{Bridging all repeats}
\label{sec:greedyBridge}
The repeat statistics $a_m$ and $c_m$ used in the algorithm performance curves are potentially overestimates. This is because a large repeat family---one with a large number of copies $f$---will result in a contribution ${f\choose 2}\approx f^2/2$ to $a_m$ and ${f\choose 3}\approx f^3/6$ to $c_m$. 
  
We focus here on deriving an estimate for the required $N,L$ for bridging all repeats with probability $1-\eps$. This upper bound reduces the sensitivity to large families of short repeats. The analogous derivation for all-bridging all triple-repeats is very similar and is omitted. 
 
Suppose there are $a_m$ repeats of length $m$. The probability that some repeat is unbridged is approximately, by the union bound estimate,  
\begin{equation}
\P(\EE)\approx \sum_m a_m e^{-2\lam(L-m)}\,.
\end{equation} Requiring $\P(\EE)\leq \eps$ and solving for $L$ gives 
\begin{equation}\label{e:greedyBound}
L\geq \frac1{2\lam}\log\frac\gam\eps =\frac G{2N}\log\frac\gam\eps \,,
\end{equation}
where $\gam:=\sum_m a_m e^{2(N/G) m}$.
Now, if $a_m$ \emph{overcounts} the number of repeats for small values of $m$, the bound in \eqref{e:greedyBound} might be loose. 
In order for each read to overlap the subsequent read by $x$ nucleotides, with probability of failure $\eps/2$, it suffices to take
\begin{equation}
L\geq L_\text{K-cov}\Big(x,\frac\eps2\Big):=x + \frac1\lam \log\frac {2N}\eps\,.
\end{equation}
Thus, for any $x< L$, we may replace \eqref{e:greedyBound} by 
\begin{equation}
L\geq \min_x\max\{\frac1{2\lam}\log\frac{2\gam(x)}\eps,L_\text{K-cov}(x,\frac\eps2)\}\,,
\end{equation}
where $\gam(x) = \sum_{m> x} a_m e^{2(N/G) m}$, and obtain a \emph{looser} bound. 

%\subsection{Computation of repeat statistics}
%\FIXME
\section{Critical window calculations}

\subsection{Window size if $\Lint \gg \Ltri$}
\label{sec:criticalWindow1}

We focus here on the bound due to interleaved repeats (rather than triple repeats, treated subsequently), and furthermore assume that the effect of the single largest interleaved repeat is dominant. 
In this case $\Lint = \lcrit-1$ is the length of the shorter of the pair of interleaved repeats, and let $\ell_1$ be the length of the longer of the two. For $\lcrit<L\leq \ell_1+1$, we are in the setting of \eqref{e:LintBound1Appendix} but with a redefined $\gamma_1 = e^{2(N/G)(\lcrit-1)}$. Thus, 
\begin{equation}
L\geq\lcrit+ \frac G{2N}\log\eps\inv\,,
\end{equation}
and solving for $N$ gives 
\begin{equation}\label{e:largestInt}
N_\text{repeat} = \frac G2 \frac{\log \eps\inv}{L-\ell_2-1}
\end{equation}
Let $L^*$ be the value of $L$ at which the curve described by constraint \eqref{e:largestInt} intersects the Lander-Waterman coverage value, i.e. $N_\text{repeat}(L^*) = \nc (L^*):=N^*$. This is the minimum read length for which coverage of the sequence suffices for reconstruction.

We now solve for $\frac{L^*}{\lcrit}$. First, the Lander-Waterman equation \eqref{e:cov} at $N=N^*$ is 
\begin{equation}\label{e:cov2}
N^* = \frac{G}{L^*}\log \frac{N^*}\eps\,,
\end{equation}
and setting equal the right-hand sides of \eqref{e:cov2} and \eqref{e:largestInt} at $L=L^*$ gives
$$
 \frac{G}{L^*}\log \frac{N^*}\eps =  \frac G2 \frac{\log \eps\inv}{L^*-\ell_2-1}\,.
$$ 
A bit of algebra yields
\begin{equation}\label{e:CritWidth}
\frac{L^*}{\lcrit} = \frac{2}{2-x}\,,
\end{equation}
where 
\begin{equation}\label{e:windowX}
x:=\cdot \frac{\log\eps\inv}{\log N^*+\log \eps\inv}\,.
\end{equation}
Since $x\leq \frac12$, equation \eqref{e:CritWidth} implies $L^*\leq 2\lcrit$, and combined with the obvious inequality $L^*\geq \lcrit$, we have $\lcrit \leq L^* \leq 2\lcrit$. Thus 
\begin{equation}
\nc(2\lcrit)\leq N^* \leq \nc(\lcrit)\,,
\end{equation} 
and applying the Lander-Waterman  fixed-point equation \eqref{e:cov} yet again gives
\begin{equation}
\frac{G}{2\lcrit}\log \frac{\nc(2\lcrit)}\eps \leq N^* \leq \frac{G}{\lcrit}\log \frac{\nc(\lcrit)}\eps\,.
\end{equation}
Writing this out gives
\begin{align*}
& \frac{\log\eps\inv}{\log \frac G\lcrit+\log\log \frac{\nc(\lcrit)}\eps  +\log \eps\inv}  \leq x \\&\quad\leq \frac{\log\eps\inv}{\log \frac G\lcrit-1+\log\log \frac{\nc(2\lcrit)}\eps  +\log \eps\inv} \,,
\end{align*}
and this can be relaxed to
\begin{equation}\label{e:xVal}
\begin{split}
&\frac{\log\eps\inv}{\log \frac G\lcrit+\log \eps\inv+\log\log\frac G{\eps\lcrit}} \leq x \\ &\qquad\leq\frac{\log\eps\inv}{\log \frac G\lcrit-1+\log \eps\inv} \,.
\end{split}
\end{equation}
Letting
\begin{equation}\label{e:rWindow}
r := \frac{\log \frac G\lcrit}{\log \eps\inv}\,,
\end{equation}
we have to a very good approximation 
\begin{equation}\label{e:simpleWindow}
\frac{L^*}{\lcrit} \approx \frac{2(r+1)}{2(r+1)-1}\,.
\end{equation}

For $G\sim 10^8$, $\lcrit\sim 1000$, and $\eps = 5\%$, we get $\log \frac G\lcrit\approx 13.8$ and $\log \eps\inv\approx 3.0$, so $r\approx 4.6$ and 
%so $x\approx \frac12\cdot 3 / 15.8 = 0.09$ and 
$$
\frac{L^*}{\lcrit} = \frac{2(r+1)}{2(r+1)-1} \approx 1.1\,.
$$
From \eqref{e:rWindow} we see that the relative size of $\log\eps\inv$ and $\log \frac G\lcrit$ determines the size of the critical window. If in the previous example $\eps=10^{-5}$, say, then $\frac{L^*}{\lcrit}$ increases to $1.3$. As $\eps$ tends to zero, $r$ approaches zero as well and $\frac{L^*}{\lcrit} \to 2$.

%%%%%%%%%%%%%

\subsection{Window size if $ \Ltri \gg\Lint$}
We now suppose the single longest triple repeat dominates the lower bound and estimate the size of the critical window. In this case $\Ltri= \lcrit-1$ is the length of the longest triple repeat. Since we don't have matching lower and upper bounds for triple repeats, we separately compute the critical window size for each. 

We start with the lower bound. For $L>\lcrit$, the minimum value of $N$ required in order to bridge the longest triple repeat is given by  \eqref{e:NforTriples} and repeated here:
\begin{equation}\label{e:largestTri}
N_\text{triples} = \frac G3\cdot \frac{\log\eps\inv}{L-\lcrit}\,.
\end{equation}
As for the interleaved repeats case considered earlier, we let $L^*$
be the value of $L$ at which the curve described by constraint \eqref{e:largestTri} intersects the Lander-Waterman coverage value, i.e. $N_\text{triple}(L^*) = \nc (L^*):=N^*$. This is the minimum read length for which coverage of the sequence suffices for reconstruction.

A similar procedure as leading to \eqref{e:CritWidth} gives
$L^*/\lcrit = 3/(3-x)$
with $x$ defined in \eqref{e:windowX}.
One can check that the estimates on $x$ in \eqref{e:xVal} continue to hold, 
and we therefore get
\begin{equation}\label{e:simpleTripleWindow}
\frac{L^*}{\lcrit} \approx \frac{3(r+1)}{3(r+1)-1}\,.
\end{equation}
For the same example as before, $G\sim 10^8$, $\lcrit\sim 1000$, and $\eps = 5\%$, we get $r\approx 4.6$ and 
%so $x\approx \frac12\cdot 3 / 15.8 = 0.09$ and 
$$
\frac{L^*}{\lcrit} = \frac{3(r+1)}{3(r+1)-1} \approx 1.06\,.
$$
Changing $\eps$ to $10^{-5}$ makes $\frac{L^*}{\lcrit} \approx 1.17$, and as $\eps$ (and hence also $r$) tends to zero, $\frac{L^*}{\lcrit} \to \frac32$.

The analogous computation for $L^*/\lcrit$ for the upper bound, as given by $N^\text{all}_3$ in \eqref{e:NforTriples}, yields 
\begin{equation}\label{e:tripleWindowUpper}
\frac{L^*}{\lcrit} = \frac{r+1}{r+\frac{\log 3}{\log\eps\inv}} \approx 1.12\,,
\end{equation}
for the example with $G\sim 10^8$, $\lcrit\sim 1000$, and $\eps = 5\%$. The critical window size of the upper bound is about twice as large as that of the lower bound for typical values of $G$ and $\lcrit$, with $\eps$ moderate. But as $\eps\to 0$, we see from \eqref{e:tripleWindowUpper} that $L^*/\lcrit \to \infty$, markedly different to the $L^*/\lcrit \to \frac32$ observed for the lower bound.

\end{document}